\def\isarxiv{1} 
\newtheorem{theorem}{Theorem}[section]
\newtheorem{lemma}[theorem]{Lemma}
\newtheorem{definition}[theorem]{Definition}
\newtheorem{corollary}[theorem]{Corollary}
\newtheorem{fact}[theorem]{Fact}
\newtheorem{claim}[theorem]{Claim}
\newtheorem{problem}[theorem]{Problem}
\newcommand{\wt}{\widetilde}
\newcommand{\R}{\mathbb{R}}
\renewcommand{\varepsilon}{\epsilon}
\DeclareMathOperator*{\E}{{\mathbb{E}}}
\DeclareMathOperator{\poly}{poly}
\DeclareMathOperator{\tr}{tr}
\newcommand{\Zhao}[1]{{\color{red}[Zhao: #1]}}
\newcommand*{\RN}[1]{\expandafter\@slowromancap\romannumeral #1@}
\newcommand{\Danyang}[1]{{\color{purple}[Danyang: #1]}}
\newcommand{\lianke}[1]{{\color{blue}[Lianke: #1]}}
\newcommand{\Ruizhe}[1]{{\color{orange}[Ruizhe: #1]}}
\begin{document}

\ifdefined\isarxiv

\date{}

\title{A General Algorithm for Solving Rank-one Matrix Sensing}
\author{
Lianke Qin\thanks{\texttt{lianke@ucsb.edu}. UCSB.}
\and 
Zhao Song\thanks{\texttt{zsong@adobe.com}. Adobe Research.}
\and 
Ruizhe Zhang\thanks{\texttt{ruizhe@utexas.edu}. The University of Texas at Austin.}
}

\else
\twocolumn[

\icmltitle{A General Algorithm for Solving Rank-one Matrix Sensing}


\icmlsetsymbol{equal}{*}

\begin{icmlauthorlist}
\icmlauthor{Aeiau Zzzz}{equal,to}
\icmlauthor{Bauiu C.~Yyyy}{equal,to,goo}
\icmlauthor{Cieua Vvvvv}{goo}
\icmlauthor{Iaesut Saoeu}{ed}
\icmlauthor{Fiuea Rrrr}{to}
\icmlauthor{Tateu H.~Yasehe}{ed,to,goo}
\icmlauthor{Aaoeu Iasoh}{goo}
\icmlauthor{Buiui Eueu}{ed}
\icmlauthor{Aeuia Zzzz}{ed}
\icmlauthor{Bieea C.~Yyyy}{to,goo}
\icmlauthor{Teoau Xxxx}{ed}\label{eq:335_2}
\icmlauthor{Eee Pppp}{ed}
\end{icmlauthorlist}

\icmlaffiliation{to}{Department of Computation, University of Torontoland, Torontoland, Canada}
\icmlaffiliation{goo}{Googol ShallowMind, New London, Michigan, USA}
\icmlaffiliation{ed}{School of Computation, University of Edenborrow, Edenborrow, United Kingdom}

\icmlcorrespondingauthor{Cieua Vvvvv}{c.vvvvv@googol.com}
\icmlcorrespondingauthor{Eee Pppp}{ep@eden.co.uk}

\icmlkeywords{Machine Learning, ICML}

\vskip 0.3in
]

\printAffiliationsAndNotice{\icmlEqualContribution} 
\fi

\ifdefined\isarxiv
\begin{titlepage}
  \maketitle
  \begin{abstract}
Matrix sensing has many real-world applications in science and engineering, such as system control, distance embedding, and computer vision. The goal of matrix sensing is to recover a  matrix $A_\star \in \R^{n \times n}$, based on a sequence of measurements $(u_i,b_i) \in \R^{n} \times \R$ such that $u_i^\top A_\star u_i = b_i$. Previous work \cite{zjd15} focused on the scenario where matrix $A_{\star}$ has a small rank, e.g. rank-$k$. Their analysis heavily relies on the RIP assumption, making it unclear how to generalize to high-rank matrices. In this paper, we relax that rank-$k$ assumption and solve a much more general matrix sensing problem. 
Given an accuracy parameter $\delta \in (0,1)$, we can compute $A \in \R^{n \times n}$ in $\widetilde{O}(m^{3/2} n^2 \delta^{-1} )$, such that $ |u_i^\top A u_i - b_i| \leq \delta$ for all $i \in [m]$. We design an efficient algorithm with provable convergence guarantees using stochastic gradient descent for this problem.

  \end{abstract}
  \thispagestyle{empty}
\end{titlepage}

{\hypersetup{linkcolor=black}
\tableofcontents
}
\newpage

\else
\begin{abstract}

\end{abstract}

\fi

\section{Introduction}

Matrix sensing is a generalization of the famous compressed sensing problem. Informally, the goal of matrix sensing is to reconstruct a matrix $A\in \R^{n\times n}$ using a small number of quadratic measurements (i.e., $u^\top A u$). It has many real-world applications, including image processing~\citep{clmw11, wsb11}, quantum computing~\citep{a07, fgle12,kkd15}, systems~\citep{lv10} and sensor localization~\citep{jm13} problems. 
For this problem, there are two important \emph{theoretical} questions:
\begin{itemize}
    \item {\bf Q1. Compression: } how to design the sensing vectors $u\in \R^n$ so that the matrix can be recovered with a small number of measurements?
    \item {\bf Q2. Reconstruction: } how fast can we recover the matrix given the measurements?
\end{itemize}

\cite{zjd15} initializes the study of the rank-one matrix sensing problem, where the ground-truth matrix $A_{\star}$ has only rank-$k$, and the measurements are of the form $u_i^\top A_{\star} u_i$. They want to know the smallest number of measurements $m$ to recover the matrix $A_{\star}$. In our setting, we assume $m$ is a fixed input parameter and we're not allowed to choose. We show that for any $m$ and $n$, how to design a faster algorithm for solving an optimization problem which is finding $A \approx A_{\star}$. Thus, in some sense, previous work \cite{zjd15,dls23} mainly focuses on problem {\bf Q1} with a low-rank assumption on $A_\star$. Our work is focusing on {\bf Q2} without the low-rank assumption.

We observe that in many applications, the ground-truth matrix $A_{\star}$ does not need to be recovered \emph{exactly} (i.e., $\|A-A_\star\|\leq n^{-c}$). For example, for distance embedding, we would like to learn an embedding matrix between all the data points in a high-dimensional space. The embedding matrix is then used for calculating data points' pairwise distances for a higher-level machine learning algorithm, such as $k$-nearest neighbor clustering. As long as we can recover a good approximation of the embedding matrix, the clustering algorithm can deliver the desired results. As we relax the accuracy constraints of the matrix sensing, we have the opportunity to speed up the matrix sensing time.

We formulate our problem in the following way:

\begin{problem}[Approximate matrix sensing]\label{prob:app_mat_sensing}
Given a ground-truth positive definite matrix $A_\star \in \R^{n \times n}$ and $m$ samples $(u_i,b_i) \in \R^{n} \times \R$ such that $u_i^\top A_\star u_i = b_i$. Let $R=\max_{i\in [m]} |b_i|$. For any accuracy parameter $\delta \in (0,1)$, find a matrix $A\in \R^{n\times n}$ such that
\begin{align}\label{eq:measure_guarantee}
    (u_i^\top A u_i - u_i A_{\star} u_i )^2 \leq \delta, ~~~\forall i \in [m]
\end{align}
or 
\begin{align}\label{eq:spectral_guarantee}
 (1-\delta ) A_\star \preceq A \preceq (1+\delta) A_\star.
\end{align}
\end{problem}
We make a few remarks about Problem~\ref{prob:app_mat_sensing}. First, our formulation doesn't require the matrix $A_{\star}$ to be low-rank as literature \cite{zjd15}. Second, we need the measurement vectors $u_i$ to be ``approximately orthogonal'' (i.e., $|u_i^\top u_j|$ are small), while \cite{zjd15} make much stronger assumptions for exact reconstruction. Third, the \emph{measure approximation} guarantee (Eq.~\eqref{eq:measure_guarantee}) does not imply the \emph{spectral approximation} guarantee (Eq.~\eqref{eq:spectral_guarantee}). We mainly focus on achieving the first guarantee and discuss the second one in the appendix.

This problem is interesting for two reasons. First, speeding up matrix sensing is salient for a wide range of applications, where exact matrix recovery is not required. Second, we would like to understand the fundamental tradeoff between the accuracy constraint $\epsilon$ and the running time. This tradeoff can give us insights on the fundamental computation complexity for matrix sensing.

This paper makes the following contributions:
\begin{itemize}
    \item We design a potential function to measure the distance between the approximate solution and the ground-truth matrix. 
    \item Based on the potential function, we show that gradient descent can efficiently find an approximate solution of the matrix sensing problem. We also prove the convergence rate of our algorithm. 
    \item Furthermore, we show that the cost-per-iteration can be improved by using stochastic gradient descent with a provable convergence guarantee, which is proved by generalizing the potential function to a randomized potential function.
\end{itemize}

Technically, our potential function applies a $\cosh$ function to each ``training loss'' (i.e., $u_i^\top Au_i - b_i$), which is inspired by the potential function for linear programming \citep{cls19}. We prove that the potential is decreasing for each iteration of gradient descent, and a small potential implies a good approximation. In this way, we can upper bound the number of iterations needed for the gradient descent algorithm.

To reduce the cost-per-iteration, we follow the idea of stochastic gradient descent and evaluate the gradient of potential function on a subset of measurements. However, we still need to know the full gradient's norm for normalization, which is a function of the training losses. It is too slow to naively compute each training loss. Instead, we use the idea of maintenance \citep{cls19,lsz19,b20,blss20,jswz21,hjstz21,syz21,szz21,hswz22,qszz23} and show that the training loss at the $(t+1)$-th iteration (i.e., $u_i^\top A_{t+1}u_i-b_i$) can be very efficiently obtained from those at the $t$-th iteration (i.e., $u_i^\top A_{t}u_i-b_i$). Therefore, we first preprocess the initial full gradient's norm, and in the following iterations, we can update this quantity based on the previous iteration's result.

We state our main result as follows:
\begin{theorem}[Informal of Theorem~\ref{thm:sgd_orthogonal}]
Given $m$ measurements of matrix sensing problems, there is an algorithm that outputs a $n \times n$ matrix $A$ in $\wt{O}(m^{3/2} n^2 R\delta^{-1} )$ time such that $|u_i^\top A u_i - b_i|\leq \delta$, $\forall i \in [m]$.
\end{theorem}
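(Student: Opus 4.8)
The plan is to run (stochastic) gradient descent on a $\cosh$-type potential, in the style of the potential-function analysis of path-following interior-point methods \citep{cls19}. Write $x_i(A) := u_i^\top A u_i - b_i$ for the $i$-th training loss, fix a scale $\lambda = \Theta(\delta^{-1}\log m)$, and set
\begin{align*}
\Phi(A) := \sum_{i=1}^{m} \cosh\!\big(\lambda\, x_i(A)\big).
\end{align*}
The termination criterion is immediate: if $\Phi(A) \le 2m$ then $\cosh(\lambda x_i(A)) \le 2m$ for every $i$, hence $|\lambda x_i(A)| \le \mathrm{arccosh}(2m) = O(\log m)$, and the choice of $\lambda$ gives $|x_i(A)| \le \delta$, which is exactly Eq.~\eqref{eq:measure_guarantee}. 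So it suffices to drive $\Phi$ below $2m$ starting from $A_0 = 0$, where $\Phi(A_0) = \sum_i \cosh(\lambda b_i) \le m\cosh(\lambda R) \le \exp(\wt{O}(R\delta^{-1}))$.

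Next, the descent step. The gradient is $\nabla\Phi(A) = \lambda \sum_{i=1}^m \sinh(\lambda x_i(A))\, u_i u_i^\top$, and the iterate is $A_{t+1} = A_t - \eta\, g_t$, where $g_t$ is either $\nabla\Phi(A_t)$ or an unbiased estimate of it obtained from a uniformly random batch $S_t \subseteq [m]$ (rescaled by $m/|S_t|$). The heart of the argument is a one-step inequality $\Phi(A_{t+1}) \le (1 - \Omega(\eta\lambda))\,\Phi(A_t)$ valid whenever $\Phi(A_t) > 2m$. To prove it, expand each $\cosh(\lambda x_i(A_{t+1}))$ to second order using $x_i(A_{t+1}) - x_i(A_t) = -\eta\sum_j c_j^{(t)}(u_i^\top u_j)^2$ (with $c_j^{(t)} = \lambda\sinh(\lambda x_j(A_t))$ or its sampled version): the diagonal $j=i$ terms produce the negative first-order drift, while the off-diagonal terms are controlled by the near-orthogonality assumption $|u_i^\top u_j|$ small for $i\neq j$, which keeps $\sum_{i\neq j}(u_i^\top u_j)^2$ — and hence both the bias and the variance of these cross-contributions — negligible for a suitably small step $\eta$. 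Telescoping gives $T = \wt{O}(mR\delta^{-1})$ iterations; in the stochastic case one replaces the deterministic inequality by its conditional expectation and upgrades it to a high-probability statement via a supermartingale/Azuma argument on $\log\Phi(A_t)$ (the ``randomized potential function'' alluded to in the introduction).

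It remains to account for the running time. A naive iteration costs $O(mn^2)$: evaluating all $m$ losses and forming the rank-one sum. Two ideas bring this down. First, stochastic gradients: with a batch of size $s=\sqrt m$, assembling $g_t$ costs only $O(s n^2)$. Second, maintenance of the auxiliary quantities: since $A_{t+1}-A_t = \sum_{j\in S_t}(-\eta c_j^{(t)})\, u_j u_j^\top$ is a sum of $s$ rank-one terms and the Gram entries $(u_i^\top u_j)^2$ are precomputed once at the start, the loss vector $x(\cdot)$ — and from it the full-gradient norm used to normalize the step — is updated in $O(ms)$ time per iteration rather than recomputed from scratch. Multiplying the per-iteration cost by $T$ and choosing the batch size to balance the terms yields the claimed $\wt{O}(m^{3/2} n^2 R \delta^{-1})$.

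The step I expect to be the main obstacle is the stochastic one-step inequality: showing that the variance of the batched gradient, once passed through the second-order expansion of $\cosh$, does not destroy the per-step multiplicative decrease, and then converting ``decrease in conditional expectation'' into ``decrease with high probability'' uniformly over all $\wt{O}(mR\delta^{-1})$ iterations without a union-bound blowup. This is exactly where the approximate-orthogonality hypothesis on the $u_i$ does the real work — it is what makes the cross-term sums small — and it is what pins down both the admissible step size and the polylogarithmic factors hidden in $\wt{O}(\cdot)$.
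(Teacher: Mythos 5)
Your proposal mirrors the paper's argument closely: the $\cosh$ potential, the second-order expansion that separates diagonal from cross terms, gradient-norm normalization of the step, and the lazy maintenance of the loss vector and $\|\nabla\Phi\|_F$ via rank-one updates are all exactly the paper's ingredients (Sections~\ref{sec:gd} and~\ref{sec:sgd}). The termination criterion, the choice $\lambda = \Theta(\delta^{-1}\log m)$, and the final budget are also correct.

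One quantitative point is stated imprecisely and is worth flagging because it determines the exponent in $m^{3/2}$. You claim a one-step contraction $\Phi(A_{t+1}) \le (1-\Omega(\eta\lambda))\Phi(A_t)$, but the paper's Lemma~\ref{lem:stochastic_gradient_descent} gives $\E[\Phi_\lambda(A_{t+1})] \le (1-0.9\lambda\epsilon/\sqrt{m})\Phi_\lambda(A_t) + \lambda\epsilon\sqrt{m}$; the extra $1/\sqrt{m}$ comes from the lower bound $\|\nabla\Phi_\lambda(A_t)\|_F \ge \frac{\lambda}{\sqrt{m}}(\Phi_\lambda(A_t)-m)$ (Eq.~\eqref{eq:phi_At_F_norm}, relying on Lemma~\ref{lem:property_sinh_cosh_scalar}). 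Since the normalized step only guarantees $\Delta_1 = \epsilon\|\nabla\Phi_\lambda(A_t)\|_F$, the per-step relative decrease is $\epsilon\lambda/\sqrt{m}$, not $\epsilon\lambda$; together with $\epsilon = O(\lambda^{-1}B/m)$ (needed so the stochastic second-order term $O(\epsilon\lambda m/B)\cdot\|\nabla\Phi_\lambda\|_F$ stays dominated), one gets $T=\wt{\Omega}(m^{3/2}B^{-1}R\delta^{-1})$. Your final count $T=\wt{O}(mR\delta^{-1})$ does agree with this after substituting the batch size $B=\sqrt{m}$, so the arithmetic works out, but the displayed contraction factor as written is off by $\sqrt{m}$.

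A second, more interesting divergence: for the high-probability upgrade you propose a supermartingale/Azuma argument on $\log\Phi(A_t)$. The paper instead only bounds $\E[\Phi(A_{T+1})]$ by composing the conditional expectations (Lemma~\ref{lem:sgd_convergence}) and then asserts the conclusion ``with high probability.'' As written, that step in the paper is really a Markov-type bound that gives constant success probability for a single run; your Azuma-on-$\log\Phi$ idea, tracking the martingale increments under the normalized steps, is a more principled way to make the high-probability claim rigorous uniformly over all $T$ iterations. In that sense this piece of your argument, while different from the paper's, is arguably sharper rather than a gap.

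Also note a small scope issue: for Theorem~\ref{thm:sgd_orthogonal} the measurements are exactly orthogonal, so the cross term $Q_2$ is identically zero (Claim~\ref{cla:gd_Q2}); your ``near-orthogonality keeps the cross contributions negligible'' phrasing corresponds to the general-$\rho$ analysis in Appendix~\ref{sec:gd_non_orthogonal}--\ref{sec:sgd_general}, not the orthogonal theorem the statement is informalizing. Similarly your $O(ms)$ maintenance cost is the general-measurement figure (Lemma~\ref{lem:gd_cost_per_iter_general}); in the orthogonal case it is only $O(B+m)$ per iteration (Lemma~\ref{lem:sgd_cost_per_iter}), which is anyway dominated by the $O(Bn^2)$ gradient assembly. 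None of this changes the asserted total running time.
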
 
 
\section{Related Work}

\paragraph{Linear Progamming}
Linear programming is one of foundations of the algorithm design and convex optimization. many problems can be modeled as linear programs to take advantage of fast algorithms.
There are many works in accelerating linear programming runtime complexity~\citep{ls14, ls15, cls19, lsz19, b20, blss20, sy21, dly21, jswz21,gs22}.

\paragraph{Semi-definite Programming}

Semidefinite programming optimizes a linear objective function over the intersection of the
positive semidefinite cone with an affine space. Semidefinite programming is a fundamental class of
optimization problems and many problems in machine learning, and theoretical computer science
can be modeled or approximated as semidefinite programming problems. There are many studies to speedup the running
time of Semidefinite programming~\citep{nn94, hrvw96, lsw15, jlsw20, jklps20, hjstz21, gs22}.

\paragraph{Matrix Sensing}

Matrix sensing~\citep{lb09, rfp10, jmd10, zjd15, dls23} is a generalization of the popular compressive sensing problem for the sparse vectors and has applications in several
domains such as control, vision etc.
a set of universal Pauli measurements,
used in quantum state tomography, have been shown to satisfy the RIP condition~\citep{l11}.
These measurement operators are Kronecker products of $2 \times 2$ matrices, thus, they have appealing computation and memory efficiency. Rank-one measurement using nuclear norm minimization is also used in other work~\citep{cz15, krt17}.
There is also previous work working on low-rank matrix sensing to reconstruct a matrix exactly using a small number of linear measurements.
ProcrustesFlow~\cite{tbssr16} designs an algorithm to recover a low-rank matrix from linear measurements.
There are other low-rank matrix recovering algorithms based on non-convex optimizations~\cite{ wzg17, lmcc19}.

\section{Preliminary}\label{sec:preli}
\paragraph{Notations.}For a positive integer, we use $[n]$ to denote set $\{ 1,2,\cdots,n\}$. 
We use $\cosh(x) =\frac{1}{2}( e^x + e^{-x})$ and $\sinh(x) = \frac{1}{2}(e^x - e^{-x} )$.
For a square matrix, we use $\tr[A]$ to denote the trace of $A$.
An $n \times n$ symmetric real matrix $A$ is said to be positive-definite if $x^{\top} A x > 0$ for all non-zero $x \in \R^n$.
An $n \times n$ symmetric real matrix $A$ is said to be positive-semidefinite if $x^{\top} A x \geq 0$ for all non-zero $x \in \R^n$. For any function $f$, we use $\wt{O}(f) = f \cdot \poly(\log f)$.

\subsection{Matrix hyperbolic functions}
\begin{definition}[Matrix function]
Let $f:\R \rightarrow \R$ be a real function and $A\in \R^{n\times n}$ be a real symmetric function with eigendecomposition 
\begin{align*} 
A=Q\Lambda Q^{-1}
\end{align*}
where $\Lambda\in \R^{n\times n}$ is a diagonal matrix. Then, we have
\begin{align*}
    f(A):=Qf(\Lambda) Q^{-1},
\end{align*}
where $f(\Lambda)\in \R^{n\times n}$ is the matrix obtained by applying $f$ to each diagonal entry of $\Lambda$.
\end{definition}

We have the following lemma to bound $\cosh(A)$ and delay the proof to Appendix~\ref{sec:cosh_bound_proof}.
\begin{lemma}\label{lem:cosh_bound}
Let $A$ be a real symmetric matrix, then we have
\begin{align*}
    \|\cosh(A)\| = \cosh(\|A\|) \leq \tr[\cosh(A)].
\end{align*}
We also have 
\begin{align*}
\|A\| \leq 1+\log(\tr[\cosh(A)]).
\end{align*}
\end{lemma}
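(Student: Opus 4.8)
The plan is to reduce everything to scalar statements about the eigenvalues via the spectral theorem. Since $A$ is real symmetric, write $A = Q\Lambda Q^{\top}$ with $Q$ orthogonal and $\Lambda = \mathrm{diag}(\lambda_1,\dots,\lambda_n)$. By the definition of the matrix function, $\cosh(A) = Q\cosh(\Lambda)Q^{\top}$, so the eigenvalues of $\cosh(A)$ are exactly $\cosh(\lambda_1),\dots,\cosh(\lambda_n)$. Because $\cosh(x)\ge 1>0$ for every real $x$, the matrix $\cosh(A)$ is positive definite, hence $\|\cosh(A)\|$ equals its largest eigenvalue $\max_{i\in[n]}\cosh(\lambda_i)$.

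For the first equality, recall that for a symmetric matrix $\|A\| = \max_{i\in[n]}|\lambda_i|$. Since $\cosh$ is even and strictly increasing on $[0,\infty)$, we get $\max_{i\in[n]}\cosh(\lambda_i) = \cosh\big(\max_{i\in[n]}|\lambda_i|\big) = \cosh(\|A\|)$, which proves $\|\cosh(A)\| = \cosh(\|A\|)$. For the inequality $\cosh(\|A\|)\le \tr[\cosh(A)]$, note $\tr[\cosh(A)] = \sum_{i\in[n]}\cosh(\lambda_i)$ is a sum of terms each at least $1$ (in particular nonnegative), so it dominates the single largest term $\max_{i\in[n]}\cosh(\lambda_i) = \cosh(\|A\|)$.

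For the second bound, I would use the elementary inequality $\cosh(x) = \tfrac12(e^x+e^{-x}) \ge \tfrac12 e^{x}$, valid for all $x\ge 0$ and in particular at $x = \|A\|\ge 0$. Combining with the first part gives $\tfrac12 e^{\|A\|} \le \cosh(\|A\|) \le \tr[\cosh(A)]$, i.e. $e^{\|A\|}\le 2\tr[\cosh(A)]$. Taking logarithms yields $\|A\| \le \log 2 + \log(\tr[\cosh(A)]) \le 1 + \log(\tr[\cosh(A)])$, using $\log 2 < 1$.

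There is no real obstacle here; the only points requiring a small amount of care are (i) identifying the operator norm of the symmetric matrix $A$ with the maximum absolute value of its eigenvalues and then matching this to the evenness of $\cosh$, and (ii) ensuring the trace-domination step is justified by the positivity $\cosh(\lambda_i)\ge 1$ rather than by any cancellation argument.
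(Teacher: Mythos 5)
Your proof is correct and follows essentially the same route as the paper's: both reduce to eigenvalues via the spectral theorem, use the positivity of $\cosh(A)$ to get $\cosh(\|A\|)\le\tr[\cosh(A)]$, and then apply $\cosh(x)\ge\tfrac12 e^x$ together with $\log 2<1$ for the second bound. You have merely spelled out the steps the paper leaves terse.
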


\subsection{Properties of \texorpdfstring{$\sinh$}{} 
and \texorpdfstring{$\cosh$}{}
}

We have the following lemma for properties of $\sinh$ and $\cosh$. 
\begin{lemma}[Scalar version]\label{lem:property_sinh_cosh_scalar}
Given a list of numbers $x_1, \cdots x_n$, we have
\begin{itemize}
    \item $( \sum_{i=1}^n \cosh^2(x_i) )^{1/2} \leq \sqrt{n} + ( \sum_{i=1}^n \sinh^2(x_i) )^{1/2}$,
    \item $(\sum_{i=1}^n \sinh^2(x_i) )^{1/2} \geq \frac{1}{\sqrt{n}} (\sum_{i=1}^n \cosh(x_i) - n)$.
\end{itemize}
\end{lemma}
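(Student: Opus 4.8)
The plan is to reduce both inequalities to elementary identities for $\sinh$ and $\cosh$, combined with the subadditivity of $\sqrt{\cdot}$ in the first case and the Cauchy--Schwarz inequality in the second.

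For the first bullet, I would start from the identity $\cosh^2(x)=1+\sinh^2(x)$, which holds for every real $x$. Summing over $i\in[n]$ gives $\sum_{i=1}^n\cosh^2(x_i)=n+\sum_{i=1}^n\sinh^2(x_i)$. Then, since $\sqrt{a+b}\le\sqrt{a}+\sqrt{b}$ for all $a,b\ge 0$, taking square roots of both sides yields $\big(\sum_{i=1}^n\cosh^2(x_i)\big)^{1/2}\le\sqrt{n}+\big(\sum_{i=1}^n\sinh^2(x_i)\big)^{1/2}$, which is exactly the claimed bound.

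For the second bullet, the key pointwise estimate is $\cosh(x)-1\le|\sinh(x)|$ for all $x\in\R$. I would prove it as follows: since $\cosh(x)\ge 1$, we have $0\le\cosh(x)-1\le\cosh(x)+1$, hence $(\cosh(x)-1)^2\le(\cosh(x)-1)(\cosh(x)+1)=\cosh^2(x)-1=\sinh^2(x)$, and taking square roots gives the claim. Summing over $i$ gives $\sum_{i=1}^n\cosh(x_i)-n\le\sum_{i=1}^n|\sinh(x_i)|$, and Cauchy--Schwarz bounds the right-hand side by $\sqrt{n}\cdot\big(\sum_{i=1}^n\sinh^2(x_i)\big)^{1/2}$. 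Dividing by $\sqrt{n}$ produces exactly $\big(\sum_{i=1}^n\sinh^2(x_i)\big)^{1/2}\ge\frac{1}{\sqrt{n}}\big(\sum_{i=1}^n\cosh(x_i)-n\big)$.

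I do not expect a genuine obstacle here; the only spot requiring a moment of care is the pointwise inequality $\cosh(x)-1\le|\sinh(x)|$, and the factorization $\cosh^2(x)-1=(\cosh(x)-1)(\cosh(x)+1)$ used above settles it in one line (one could alternatively compare derivatives on $[0,\infty)$ and use evenness). Everything else is a direct application of standard inequalities, so the proof is short.
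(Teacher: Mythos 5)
Your proof is correct and follows the same overall route as the paper: the identity $\cosh^2(x)=1+\sinh^2(x)$ plus subadditivity of $\sqrt{\cdot}$ for the first bullet, and a pointwise comparison of $\cosh(x)-1$ with $\sinh$ plus Cauchy--Schwarz (equivalently the quadratic--arithmetic mean inequality) for the second. In fact your version is slightly more careful than the paper's: the paper routes through the intermediate quantity $\frac{1}{\sqrt{n}}\sum_i\sinh(x_i)$ and then invokes $\sinh(x_i)\ge\cosh(x_i)-1$, which fails whenever $x_i<0$, whereas your factorization $(\cosh(x)-1)^2\le(\cosh(x)-1)(\cosh(x)+1)=\sinh^2(x)$ correctly yields only $|\sinh(x)|\ge\cosh(x)-1$, and applying Cauchy--Schwarz to $\sum_i|\sinh(x_i)|$ is exactly what makes the chain of inequalities valid for all real $x_i$.
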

\begin{proof}
For the first equation, we can bound $( \sum_{i=1}^n \cosh^2(x_i) )^{1/2}$ by:
\begin{align*}
     ( \sum_{i=1}^n \cosh^2(x_i) )^{1/2} 
    = &~ (n + \sum_{i=1}^n \sinh^2(x_i))^{1/2} \\
    \leq &~\sqrt{n} + ( \sum_{i=1}^n \sinh^2(x_i) )^{1/2}
\end{align*}
where the first step comes from fact~\ref{fact:cosh_sinh_1}, and the second step follows from $\sqrt{a + b} \leq \sqrt{a} + \sqrt{b}$.

For the second equation, we can bound $(\sum_{i=1}^n \sinh^2(x_i) )^{1/2}$ by:
\begin{align*}
    (\sum_{i=1}^n \sinh^2(x_i) )^{1/2} 
    \geq &~ \frac{1}{\sqrt{n}}(\sum_{i=1}^{n} \sinh(x_i)) \\
    \geq &~ \frac{1}{\sqrt{n}}(\sum_{i=1}^{n} \cosh(x_i) -n)
\end{align*}
where the first step follows that $\sqrt{\frac{\sum_{i=1}^{n} x_i^2}{n}} \geq \frac{\sum_{i=1}^{n} x_i}{n}$,
and the second step follows from fact~\ref{fact:cosh_sinh_1} and $\sqrt{x^2 -1} \geq \sqrt{x} - 1$.
\end{proof}

We also have a lemma for the matrix version. 
\begin{lemma}[Matrix version]\label{lem:property_sinh_cosh_matrix}
For any real symmetric matrix $A$, we have
\begin{itemize}
    \item $ (\tr[\cosh^2(A)])^{1/2} \leq \sqrt{n} + \tr[ \sinh^2(A) ]^{1/2}$,
    \item $(\tr[ \sinh^2(A) ])^{1/2} \geq \frac{1}{\sqrt{n}} ( \tr[ \cosh(A) ] - n ) $.
\end{itemize}
\end{lemma}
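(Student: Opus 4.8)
The plan is to reduce both matrix inequalities to their already-proven scalar counterparts (Lemma~\ref{lem:property_sinh_cosh_scalar}) via the spectral decomposition. Since $A$ is real symmetric, I would write $A = Q\Lambda Q^\top$ with $Q$ orthogonal and $\Lambda$ diagonal with entries $\lambda_1,\dots,\lambda_n$. By the definition of a matrix function, $\cosh(A) = Q\cosh(\Lambda)Q^\top$ and $\sinh(A) = Q\sinh(\Lambda)Q^\top$. Using $Q^\top Q = I$ we then get $\cosh^2(A) = Q\cosh^2(\Lambda)Q^\top$ and $\sinh^2(A) = Q\sinh^2(\Lambda)Q^\top$, where on the right-hand side $\cosh^2(\Lambda)$ and $\sinh^2(\Lambda)$ are the diagonal matrices obtained by squaring each diagonal entry.

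Next I would take traces. Since the trace is invariant under conjugation by $Q$, this yields
\begin{align*}
\tr[\cosh^2(A)] = \sum_{i=1}^n \cosh^2(\lambda_i), \quad
\tr[\sinh^2(A)] = \sum_{i=1}^n \sinh^2(\lambda_i), \quad
\tr[\cosh(A)] = \sum_{i=1}^n \cosh(\lambda_i).
\end{align*}
At this point the two claimed bounds are exactly the two bullets of Lemma~\ref{lem:property_sinh_cosh_scalar} applied to the list $x_i = \lambda_i$, $i \in [n]$: the first scalar bullet gives $(\tr[\cosh^2(A)])^{1/2} \le \sqrt{n} + (\tr[\sinh^2(A)])^{1/2}$, and the second scalar bullet gives $(\tr[\sinh^2(A)])^{1/2} \ge \frac{1}{\sqrt{n}}(\tr[\cosh(A)] - n)$.

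The only point needing a little care — and it is quite mild — is the identification $\cosh^2(A) = (\cosh A)^2 = Q\cosh^2(\Lambda)Q^\top$, i.e., that squaring the matrix $\cosh(A)$ corresponds to squaring each eigenvalue of $\cosh(\Lambda)$; this holds because for a diagonal matrix the matrix square equals the entrywise square, and conjugation by the orthogonal $Q$ commutes with squaring. With that observed, the matrix version is a direct corollary of the scalar version and no further work is required.
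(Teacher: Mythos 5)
Your proof is correct, and it is a slightly cleaner route than the paper's. You diagonalize $A = Q\Lambda Q^\top$, observe that $\tr[\cosh^2(A)] = \sum_i \cosh^2(\lambda_i)$, $\tr[\sinh^2(A)]=\sum_i\sinh^2(\lambda_i)$, $\tr[\cosh(A)]=\sum_i\cosh(\lambda_i)$, and then invoke the already-proven scalar Lemma~\ref{lem:property_sinh_cosh_scalar} with $x_i=\lambda_i$. The paper instead re-derives both bounds directly at the matrix level: for Part~1 it uses the matrix identity $\cosh^2(A)-\sinh^2(A)=I$, and for Part~2 it writes $\tr[\sinh^2(A)]$ in terms of the singular values $\sigma_i$ of $\cosh(A)$ and applies $\|\cdot\|_2 \geq n^{-1/2}\|\cdot\|_1$ and $\sqrt{\sigma_i^2-1}\geq \sigma_i-1$. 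Both proofs ultimately pass through the eigenvalues of $A$ (since $\cosh(A)$ is symmetric with eigenvalues $\cosh(\lambda_i)\geq 1$, its singular values coincide with those eigenvalues), so they are equivalent in substance; your version is more modular because it makes explicit that the matrix lemma is purely a corollary of the scalar one, avoiding any re-derivation, while the paper's argument spells out the matrix-level steps at the cost of some redundancy with the scalar proof.
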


\begin{proof}

{\bf Part 1.}
We have
\begin{align*}
    (\tr[\cosh^2(A)])^{1/2} = & ~ ( n+  \tr[\sinh^2(A)] )^{1/2} \\
    \leq & ~ \sqrt{n} + \tr[ \sinh^2(A) ]^{1/2}.
\end{align*}
where the first step follows from $ \cosh^2(A) - \sinh^2(A) = I$.

{\bf Part 2.}
Let $\sigma_i$ denote the singular value of $\cosh(A)$
\begin{align*}
    ( \tr[  \sinh^2(A) ] )^{1/2} 
    = & ~  ( \tr[ \cosh^2(A) ] - n )^{1/2} \\
    = & ~  ( \sum_{i=1}^n \sigma_i^2 - 1 )^{1/2} \\ 
    \geq & ~   \frac{1}{\sqrt{n}} \sum_{i=1}^n \sqrt{ \sigma_i^2  -1 } \\
    \geq & ~    \frac{1}{\sqrt{n}} (\sum_{i=1}^n \sigma_i - 1 ) \\
    = & ~ \frac{1}{\sqrt{n}} ( \tr[  \cosh(A) ] -  n ) 
\end{align*}
where the second step follows from $\| \cdot \|_2 \geq \frac{1}{\sqrt{n}} \| \cdot \|_1$, the third step follows from $\sigma_i \geq 1$.

\end{proof}

\section{Technique Overview}

We first analyze the convergence guarantee of our matrix sensing algorithm based on gradient descent and improve its time complexity with stochastic gradient descent under the assumption where $\{u_i\}_{i\in [m]}$ are orthogonal vectors. We then analyze the convergence guarantee of our matrix sensing algorithm under a more general assumption where $\{u_i\}_{i\in [m]}$ are non-orthogonal vectors and $|u_i^{\top}  u_j| \leq \rho$.

\paragraph{Gradient descent.} We begin from the case where $\{u_i\}_{i\in [m]}$ are orthogonal vectors in $\R^n$. We can the following entry-wise potential function:
\begin{align*}
   \Phi_{\lambda}(A) := \sum_{i=1}^m \cosh ( \lambda ( u_i^\top A u_i - b_i ) ) 
\end{align*}
and analyze its progress during the gradient descent according to the update formula defined in Eq.~\eqref{eq:A_t1_update} 
for each iteration. 
We split the gradient of the potential function into diagonal and off-diagonal terms. We can upper bound the diagonal term and prove that the off-diagonal term is zero.
Combining the two terms together, we can upper bound the progress of update per iteration in Lemma~\ref{lem:gradient_descent} by:
\begin{align*}
     \Phi_{\lambda} ( A_{t+1} ) \leq (1-0.9 \frac{ \lambda \epsilon }{\sqrt{m} }) \cdot \Phi_{\lambda} (A_t) +  \lambda \epsilon \sqrt{m}.
\end{align*}
By accumulating the progress of update for the entry-wise potential function over $T=\widetilde{\Omega}(\sqrt{m}R\delta^{-1})$ iterations, we have $\Phi(A_{T+1}) \leq O(m)$. This implies that our Algorithm~\ref{alg:GD} can output a matrix $A_{T} \in \R^{n \times n}$ satisfying guarantee in Eq.~\eqref{eq:gd_approximation_guarantee}, 
and the corresponding time complexity is $O(mn^2)$.

We then analyze the gradient descent under the assumption where $\{u_i\}_{i\in [m]}$ are non-orthogonal vectors in $\R^n$, $|u_i^{\top}  u_j| \leq \rho$ and $\rho \leq \frac{1}{10m}$. We can upper bound the diagonal entries and off-diagonal entries respectively and obtain the same progress of update per iteration in Lemma~\ref{lem:gradient_descent_rho}. Accumulating in $T=\widetilde{\Omega}(\sqrt{m}R\delta^{-1})$ iterations, we can prove the approximation guarantee of the output matrix of our matrix sensing algorithm.

\paragraph{Stochastic gradient descent.} To further improve the time cost per iteration of our approximate matrix sensing,  by uniformly sampling a subset ${\cal B}\subset [m]$ of size $B$, we compute the gradient of the stochastic potential function:
\begin{align*}
   \nabla \Phi_{\lambda}(A{ , {\cal B}}) := {  \frac{m}{|{\cal B}|}\sum_{i \in {\cal B}}} u_i u_i^\top \lambda \sinh( \lambda (u_i^\top A u_i - b_i) ),
\end{align*}
 and update the potential function based on update formula defined in Eq.~\eqref{eq:sgd_potential_func}. We upper bound the diagonal and off-diagonal terms respectively and obtain the expected progress on the potential function in Lemma~\ref{lem:stochastic_gradient_descent}.

    Over $T=\widetilde{\Omega}(m^{3/2}B^{-1}R\delta^{-1})$ iterations, we can upper bound $\Phi(A_{T+1}) \leq O(m)$ with high probability. With similar argument to gradient descent section, we can prove that the SGD matrix sensing algorithm can output a solution matrix satisfying the same approximation guarantees with high success probability in Lemma~\ref{lem:sgd_convergence}. The optimized time complexity is $O(Bn^2)$ where $B$ is the SGD batch size. 
    
For the more general assumption where $\{u_i\}_{i\in [m]}$ are non-orthogonal vectors in $\R^n$ and  $|u_i^{\top}  u_j|$ has an upper bound, 
We also provide the cost-per-iteration analysis for stochastic gradient descent by bounding the diagonal entries and off-diagonal entries of the gradient matrix respectively. Then we prove that the progress on the expected potential satisfies the same guarantee as the gradient descent in Lemma~\ref{lem:sgd_potential_general}. Therefore, our SGD matrix sensing algorithm can output an matrix satisfying the approximation guarantee after
\begin{align*} 
T=\widetilde{\Omega}(m^{3/2}B^{-1}R\delta^{-1})
\end{align*}
iterations under the general assumption.

\section{Gradient descent for entry-wise potential function}\label{sec:gd}

In this section, we show how to obtain an approximate solution of matrix sensing via gradient descent. For simplicity, we start from a case that $\{u_i\}_{i\in [m]}$ are orthogonal vectors in $\R^n$\footnote{We note that $A':=\sum_{i=1}^m b_iu_iu_i^\top$ is a solution satisfying $u_i^\top A'u_i = b_i$ for all $i\in [m]$. However, we pretend that we do not know this solution in this section.}, which already conveys the key idea of our algorithm and analysis and we generalize the solution to the non-orthogonal case (see Appendix~\ref{sec:gd_non_orthogonal}). 
We show that $\wt{\Omega}(\sqrt{m}/\delta)$ iterations of gradient descent can output a $\delta$-approximate solution, where each iteration takes $O(mn^2)$-time. Below is the main theorem of this section:

\begin{theorem}[Gradient descent for orthogonal measurements]\label{thm:gd_orthogonal}
Suppose $u_1,\dots,u_m\in \R^n$ are orthogonal unit vectors, and suppose $|b_i|\leq R$ for all $i\in [m]$. There exists an algorithm such that for any $\delta \in (0,1)$, performs $\wt{\Omega}(\sqrt{m}R\delta^{-1})$ iterations of gradient descent with $O(mn^2)$-time per iteration and outputs a matrix $A\in \R^{n\times n}$ satisfies:
\begin{align*}
    | u_i^\top A u_i - b_i| \leq \delta~~~\forall i\in [m].
\end{align*}
\end{theorem}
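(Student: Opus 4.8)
The plan is to run projected/normalized gradient descent on the entry-wise potential $\Phi_\lambda(A) = \sum_{i=1}^m \cosh(\lambda(u_i^\top A u_i - b_i))$ for a suitable fixed $\lambda = \widetilde{\Theta}(\delta^{-1})$, starting from $A_0 = 0$ (or any trivial initialization), and to show that after $T = \widetilde{\Omega}(\sqrt m R \delta^{-1})$ steps the potential drops below $O(m)$, which by Lemma~\ref{lem:cosh_bound} forces $|u_i^\top A_T u_i - b_i| \le \delta$ for every $i$. First I would fix the update rule: at iteration $t$ set $A_{t+1} = A_t - \eta_t \, g_t / \|g_t\|_F$ where $g_t = \nabla \Phi_\lambda(A_t) = \sum_{i=1}^m u_i u_i^\top \lambda \sinh(\lambda(u_i^\top A_t u_i - b_i))$ and $\eta_t$ is a step size of order $\epsilon/\lambda$ (with $\epsilon$ a small absolute constant); the normalization is what makes the per-iteration progress controllable. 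Because the $u_i$ are orthonormal, $u_i u_i^\top u_j u_j^\top = 0$ for $i\neq j$, so $g_t$ is essentially block-diagonal in the eigenbasis spanned by the $u_i$'s, and this is the structural fact that will kill all cross terms.

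**Key steps.**

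Step 1: compute $\Phi_\lambda(A_{t+1})$ by a second-order Taylor/convexity expansion of $\cosh$ along the update direction; since each summand is $\cosh(\lambda(u_i^\top A u_i - b_i))$ and $u_i^\top (A_{t+1}-A_t) u_i = -\eta_t u_i^\top g_t u_i / \|g_t\|_F$, the first-order term is exactly $-\eta_t \|g_t\|_F$ (this is where normalized GD pays off: the linear decrease is the full gradient norm), and the second-order term is bounded using $\cosh'' = \cosh$ together with $\|\cosh(\cdot)\|$-bounds from Lemma~\ref{lem:cosh_bound} and the fact that $\|u_i\|=1$. Step 2: lower bound $\|g_t\|_F$ in terms of $\Phi_\lambda(A_t)$. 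Here $\|g_t\|_F^2 = \sum_{i=1}^m \lambda^2 \sinh^2(\lambda(u_i^\top A_t u_i - b_i))$ by orthonormality, and Lemma~\ref{lem:property_sinh_cosh_scalar} (second bullet) gives $(\sum_i \sinh^2(x_i))^{1/2} \ge \tfrac{1}{\sqrt m}(\sum_i \cosh(x_i) - m) = \tfrac{1}{\sqrt m}(\Phi_\lambda(A_t) - m)$, so $\|g_t\|_F \ge \tfrac{\lambda}{\sqrt m}(\Phi_\lambda(A_t) - m)$. Step 3: combine to get the recursion $\Phi_\lambda(A_{t+1}) \le (1 - 0.9 \tfrac{\lambda\epsilon}{\sqrt m})\Phi_\lambda(A_t) + \lambda\epsilon\sqrt m$ (exactly the displayed bound in the technique overview, proved as Lemma~\ref{lem:gradient_descent}), where the additive term absorbs the $-m$ slack and the Taylor remainder. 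Step 4: unroll the recursion: the fixed point of the map $x \mapsto (1-\alpha)x + \beta$ is $\beta/\alpha = m/0.9 = O(m)$, and starting from $\Phi_\lambda(A_0) = m\cosh(\lambda R) \le m e^{\lambda R}$ it takes $T = O(\alpha^{-1}\log(\Phi_\lambda(A_0)/m)) = O(\tfrac{\sqrt m}{\lambda\epsilon} \cdot \lambda R) = O(\sqrt m R/\epsilon)$ iterations to reach $\Phi_\lambda(A_T) \le O(m)$. Step 5: translate back — $\Phi_\lambda(A_T)\le Cm$ implies each $\cosh(\lambda(u_i^\top A_T u_i - b_i)) \le Cm$, so $\lambda|u_i^\top A_T u_i - b_i| \le \log(2Cm) = \widetilde O(1)$, hence choosing $\lambda = \widetilde\Theta(\delta^{-1})$ (absorbing the $\log m$ factor, which is where the $\widetilde\Omega$ hides polylog terms) yields $|u_i^\top A_T u_i - b_i|\le\delta$. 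Step 6: per-iteration cost is $O(mn^2)$: forming each rank-one term and summing costs $O(n^2)$ per measurement, computing all $m$ residuals $u_i^\top A_t u_i$ costs $O(mn^2)$, and the Frobenius norm is $O(n^2)$.

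**Main obstacle.**

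The delicate point is the second-order (curvature) term in Step 1: I must show the Taylor remainder along the normalized step is genuinely lower-order, i.e. bounded by $O(\eta_t^2\lambda^2)\cdot(\text{something} \lesssim \Phi_\lambda(A_t))$ so that with $\eta_t\lambda = O(\epsilon)$ it is dominated by the $0.1\cdot\tfrac{\lambda\epsilon}{\sqrt m}\Phi_\lambda(A_t)$ budget. Because $\cosh$ grows exponentially, a crude Taylor bound could blow up; the fix is to use that along the update direction $\cosh(\lambda(u_i^\top A_t u_i - b_i) - s)$ for $|s|\le \eta_t\lambda = O(\epsilon)\le 1$ stays within a constant factor of $\cosh(\lambda(u_i^\top A_t u_i - b_i))$ (since $\cosh$ changes by at most $e^{|s|}\le e$), so the remainder is $\le O(\epsilon^2)\sum_i \cosh(\ldots) = O(\epsilon^2)\Phi_\lambda(A_t)$, which for small constant $\epsilon$ fits inside the budget. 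A secondary subtlety is handling iterations where $g_t$ is small (i.e. $\Phi_\lambda(A_t)$ already near $m$): there the recursion's multiplicative contraction stalls, but the additive $+\lambda\epsilon\sqrt m$ term is harmless once we only claim $\Phi_\lambda(A_T) = O(m)$ rather than exact recovery, so the argument still closes. Everything else — orthonormality killing cross terms, the $\sinh/\cosh$ inequalities, unrolling a linear recursion — is routine given the lemmas already in the excerpt.
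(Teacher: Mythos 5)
Your overall architecture matches the paper's exactly: normalized gradient descent on the entry-wise potential $\Phi_\lambda(A)=\sum_{i=1}^m\cosh(\lambda(u_i^\top A u_i-b_i))$, with the first-order term $-\epsilon\|\nabla\Phi_\lambda(A_t)\|_F$, the lower bound $\|\nabla\Phi_\lambda(A_t)\|_F\geq\frac{\lambda}{\sqrt m}(\Phi_\lambda(A_t)-m)$ via Lemma~\ref{lem:property_sinh_cosh_scalar}, unrolling a linear recursion, and translating $\Phi_\lambda(A_T)\leq O(m)$ back to a pointwise bound. The cost-per-iteration argument and the treatment of the stall region near $\Phi\approx m$ are also the same.

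The gap is in your resolution of the ``main obstacle,'' i.e.\ the curvature term. You bound the Taylor remainder by $O((\epsilon\lambda)^2)\cdot\Phi_\lambda(A_t)$ (each $|s_i|\leq\epsilon\lambda$, then $\cosh(\xi_i)\leq e\cosh(\lambda r_i)$, then sum). That bound is correct but \emph{too weak by a factor of $\sqrt m$}. To make $O((\epsilon\lambda)^2)\Phi_\lambda(A_t)$ sit inside a $0.1\cdot\frac{\epsilon\lambda}{\sqrt m}\Phi_\lambda(A_t)$ budget you must take $\epsilon\lambda=O(1/\sqrt m)$, which kills the contraction rate: $\alpha=\Theta(\epsilon\lambda/\sqrt m)=\Theta(1/m)$ and hence $T=\widetilde\Theta(\alpha^{-1}\lambda R)=\widetilde\Theta(mR\delta^{-1})$, not the claimed $\widetilde\Omega(\sqrt m R\delta^{-1})$. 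Your step~4 implicitly treats $\epsilon$ (meaning $\epsilon\lambda$) as an absolute constant, but the curvature bound you actually proved does not permit that.

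The paper avoids this by never replacing the weighted curvature sum with $\Phi_\lambda(A_t)$. In Claim~\ref{cla:gd_Q1} the diagonal term is $\sum_i\cosh(z_{t,i})\sinh^2(z_{t,i})$, and Cauchy--Schwarz gives $\sum_i\cosh\cdot\sinh^2\leq(\sum_i\cosh^2)^{1/2}(\sum_i\sinh^4)^{1/2}\leq(\sqrt m+\tfrac{1}{\lambda}\|\nabla\Phi_\lambda(A_t)\|_F)\cdot\tfrac{1}{\lambda^2}\|\nabla\Phi_\lambda(A_t)\|_F^2$. After dividing by $\|\nabla\Phi_\lambda(A_t)\|_F^2$ from the normalization, the remainder is $O((\epsilon\lambda)^2)(\sqrt m+\tfrac{1}{\lambda}\|\nabla\Phi_\lambda(A_t)\|_F)$: the piece proportional to $\|\nabla\Phi_\lambda(A_t)\|_F$ is controlled by the linear decrease under only $\epsilon\lambda\leq c$, and the $(\epsilon\lambda)^2\sqrt m$ piece goes into the additive slack. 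This is precisely a $\sqrt m$ savings relative to bounding by $\Phi_\lambda(A_t)$ (since $\Phi_\lambda(A_t)\approx m+\sqrt m\cdot\tfrac{1}{\lambda}\|\nabla\Phi_\lambda(A_t)\|_F$). Equivalently in your notation: you should use $\sum_i s_i^2\cosh(\xi_i)\leq e(\epsilon\lambda)^2\frac{\sum_i\sinh^2(\lambda r_i)\cosh(\lambda r_i)}{\sum_j\sinh^2(\lambda r_j)}\leq e(\epsilon\lambda)^2(\sum_i\cosh^2(\lambda r_i))^{1/2}$ rather than $e(\epsilon\lambda)^2\Phi_\lambda(A_t)$. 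Without this refinement the proof as written only establishes the $\widetilde O(mR\delta^{-1})$-iteration version of the theorem.
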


In Section~\ref{sec:gd_algorithm}, we introduce the algorithm and prove the time complexity. In Section~\ref{sec:gd_analysis_1} - \ref{sec:gd_analysis_2}, we analyze the convergence of our algorithm.

\subsection{Algorithm}\label{sec:gd_algorithm}

The key idea of the gradient descent matrix sensing algorithm (Algorithm~\ref{alg:GD}) is to follow the gradient of the entry-wise potential function defined as follows:
\begin{align}
    \Phi_{\lambda}(A) := \sum_{i=1}^m \cosh ( \lambda ( u_i^\top A u_i - b_i ) ).
\end{align}
Then, we have the following solution update formula:
\begin{align}\label{eq:A_t1_update}
    A_{t+1} \gets A_t - \epsilon \cdot \nabla \Phi_{\lambda}(A_t) / \| \nabla \Phi_{\lambda}(A_t) \|_F.
\end{align}

\begin{lemma}[Cost-per-iteration of gradient descent]\label{lem:gd_cost_per_iter}
Each iteration of Algorithm~\ref{alg:GD} takes
$
    O(mn^2)
$-time.
\end{lemma}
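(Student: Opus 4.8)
The plan is a straightforward accounting of the operations performed in one pass of the update~\eqref{eq:A_t1_update}. Differentiating the entry-wise potential gives
\begin{align*}
\nabla\Phi_\lambda(A_t) = \sum_{i=1}^m \lambda\sinh\!\big(\lambda(u_i^\top A_t u_i - b_i)\big)\, u_iu_i^\top,
\end{align*}
so one iteration naturally splits into four stages: (i) evaluate the $m$ residuals $r_i := u_i^\top A_t u_i - b_i$; (ii) evaluate the $m$ scalars $c_i := \lambda\sinh(\lambda r_i)$; (iii) assemble the gradient matrix $G_t := \sum_{i=1}^m c_i u_iu_i^\top$ together with its Frobenius norm $\|G_t\|_F$; (iv) output $A_{t+1} := A_t - \epsilon\, G_t/\|G_t\|_F$.

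First I would bound stage (i): for each $i\in[m]$, the matrix--vector product $A_t u_i$ takes $O(n^2)$ time and the inner product $u_i^\top(A_t u_i)$ takes $O(n)$ time, so computing all $m$ residuals costs $O(mn^2)$. Stage (ii) is $m$ evaluations of a fixed scalar transcendental function, i.e.\ $O(m)$ time. For stage (iii), each rank-one term $c_i u_iu_i^\top$ has $n^2$ entries and is formed and scaled in $O(n^2)$ time, and accumulating the $m$ such terms into a running sum is $O(mn^2)$ in total; reading off $\|G_t\|_F$ from the assembled $n\times n$ matrix costs an additional $O(n^2)$. Stage (iv) is one scalar division plus an $n\times n$ matrix scaling and subtraction, hence $O(n^2)$.

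Summing the four stages, the per-iteration cost is $O(mn^2)+O(m)+O(mn^2)+O(n^2)=O(mn^2)$, which is the claimed bound. There is no analytic obstacle here; the only ``bottleneck'' is conceptual, namely that both the residual evaluations in (i) and the formation of the sum of $m$ rank-one matrices in (iii) scale linearly with $m$. I would close by remarking that this is precisely the quantity the stochastic variant later improves: one samples only $B\ll m$ of the rank-one terms of $G_t$, and maintains the residuals $r_i$ incrementally across iterations (using that $A_{t+1}-A_t$ is itself a scaled sum of few rank-one matrices) so that stage (i) need not be recomputed from scratch.
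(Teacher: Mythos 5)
Your proof is correct and follows the same approach as the paper's: tally the cost of evaluating the $m$ residuals $u_i^\top A_t u_i$ ($O(mn^2)$), assembling the gradient as a sum of $m$ rank-one matrices ($O(mn^2)$), and performing the normalized update ($O(n^2)$). You simply spell out a few more substeps (the scalar $\sinh$ evaluations and the Frobenius-norm read-off), which do not change the bound.
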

\begin{proof}
In each iteration, we first evaluate $u_i^\top A_tu_i$ for all $i\in [m]$, which takes $O(mn^2)$-time. Then, $\nabla \Phi_\lambda(A_t)$ can be computed by summing $m$ rank-1 matrices, which takes $O(mn^2)$-time. Finally, at Line~\ref{ln:gd_update}, the solution can be updated in $O(n^2)$-time.
Thus, the total running time for each iteration is $O(mn^2)$.
\end{proof}

\begin{algorithm}\caption{Matrix Sensing by Gradient Descent.}\label{alg:GD}
\begin{algorithmic}[1]
\Procedure{GradientDescent}{$\{u_i,b_i\}_{i\in [m]}$} \Comment{Theorem~\ref{thm:gd_orthogonal}}
    \State $\tau \gets \max_{i \in [m]} b_i $
    \State $A_1 \gets \tau \cdot I$
    \For{$t = 1 \to T$}
        \State $\nabla \Phi_{\lambda} (A_t) \gets \sum_{i=1}^m u_i u_i^\top \lambda \sinh( \lambda ( u_i^\top A_t u_i - b_i ) ) $ \Comment{Compute the gradient}
        \State $A_{t+1} \gets A_t - \epsilon \cdot \nabla \Phi_{\lambda}(A_t) / \| \nabla \Phi_{\lambda}(A_t) \|_F$\label{ln:gd_update}
    \EndFor
    \State \Return $A_{T+1}$
\EndProcedure
\end{algorithmic}
\end{algorithm}

\subsection{Analysis of One Iteration}\label{sec:gd_analysis_1}
Throughout this section, we suppose $A\in \R^{n\times n}$ is a symmetric matrix.

We can compute the gradient of $\Phi_{\lambda}(A)$ with respect to $A$ as follows:
\begin{align}\label{eq:gradient_phi_A}
    & ~ \nabla \Phi_{\lambda}(A) \notag \\
    = & ~ \sum_{i=1}^m u_i u_i^\top \lambda \sinh\left( \lambda (u_i^\top A u_i - b_i)\right) \in \R^{n\times n}. 
\end{align}

We can compute the Hessian of $\Phi_{\lambda}(A)$ with respect to $A$ as follows
\begin{align*}
    & ~ \nabla^2 \Phi_{\lambda}(A) \notag \\
    = & ~ \sum_{i=1}^m ( u_i u_i^\top ) \otimes ( u_i u_i^\top ) \lambda^2 \cosh( \lambda (u_i^\top A u_i - b_i) ). 
\end{align*}
The Hessian $\nabla^2 \Phi_{\lambda}(A)\in \R^{n^2\times n^2}$ and $\otimes$ is the Kronecker product.

\begin{lemma}[Progress on entry-wise potential]\label{lem:gradient_descent}
Assume that $u_i \perp u_j = 0 $ for any $i,j \in [m]$ and $\| u_i\|^2 = 1$. Let $c \in (0,1)$ denote a sufficiently small positive constant. Then, for any $\epsilon,\lambda>0$ such that $\epsilon\lambda \leq c$,
 
we have for any $t>0$,
\begin{align*}
    \Phi_{\lambda} ( A_{t+1} ) \leq (1-0.9 \frac{ \lambda \epsilon }{\sqrt{m} }) \cdot \Phi_{\lambda} (A_t) +  \lambda \epsilon \sqrt{m}
\end{align*}

\end{lemma}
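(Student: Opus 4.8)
The plan is to track how the potential $\Phi_\lambda(A_t)=\sum_{i=1}^m\cosh(\lambda(u_i^\top A_tu_i-b_i))$ changes under one normalized gradient step. Write $A_{t+1}=A_t-\eta\,G_t$ where $G_t=\nabla\Phi_\lambda(A_t)/\|\nabla\Phi_\lambda(A_t)\|_F$ is a unit-Frobenius-norm matrix and $\eta=\epsilon$. Since $\epsilon\lambda\le c$ is small, I expect a first-order Taylor expansion in $\eta$ to suffice, with the second-order remainder controlled by the Hessian bound. Concretely, for each $i$ set $s_i:=u_i^\top A_tu_i-b_i$ and note $u_i^\top A_{t+1}u_i-b_i=s_i-\eta\,u_i^\top G_tu_i$. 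Expanding $\cosh$ to second order gives
\begin{align*}
\Phi_\lambda(A_{t+1})\le \Phi_\lambda(A_t)-\eta\lambda\sum_{i=1}^m\sinh(\lambda s_i)(u_i^\top G_tu_i)+O(\eta^2\lambda^2)\sum_{i=1}^m\cosh(\lambda s_i'),
\end{align*}
where the quadratic term uses $|\cosh''|=\cosh\le$ a bounded multiple of $\cosh(\lambda s_i)$ because the argument moves by at most $O(\epsilon\lambda)=O(c)$. So the second-order term is at most $O(c)\cdot\epsilon\lambda\cdot\Phi_\lambda(A_t)$, absorbable into the leading term with a slightly worse constant (this is where $c$ being "sufficiently small" is used).

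The heart is the linear term $-\eta\lambda\sum_i\sinh(\lambda s_i)(u_i^\top G_tu_i)$. Here I would split $G_t$ (equivalently $\nabla\Phi_\lambda(A_t)=\sum_j u_ju_j^\top\lambda\sinh(\lambda s_j)$) into its diagonal and off-diagonal parts relative to the orthonormal family $\{u_i\}$. Using orthogonality, $u_i^\top(u_ju_j^\top)u_i=(u_i^\top u_j)^2=\delta_{ij}$, so $u_i^\top\big(\sum_j u_ju_j^\top\lambda\sinh(\lambda s_j)\big)u_i=\lambda\sinh(\lambda s_i)$ — the off-diagonal cross terms vanish exactly. This is the "off-diagonal term is zero" claim in the technique overview. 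Hence the numerator of the linear term equals $\lambda\sum_i\sinh(\lambda s_i)\cdot\lambda\sinh(\lambda s_i)=\lambda^2\sum_i\sinh^2(\lambda s_i)$, and the linear contribution is
\begin{align*}
-\eta\lambda\cdot\frac{\lambda^2\sum_{i=1}^m\sinh^2(\lambda s_i)}{\|\nabla\Phi_\lambda(A_t)\|_F}=-\eta\lambda\cdot\frac{\big(\sum_i\sinh^2(\lambda s_i)\big)^{1/2}\cdot\lambda^2\big(\sum_i\sinh^2(\lambda s_i)\big)^{1/2}}{\|\nabla\Phi_\lambda(A_t)\|_F}.
\end{align*}
I then need a matching bound on $\|\nabla\Phi_\lambda(A_t)\|_F$: since the $u_iu_i^\top$ are orthonormal in Frobenius inner product (as $\langle u_iu_i^\top,u_ju_j^\top\rangle=(u_i^\top u_j)^2=\delta_{ij}$), we get $\|\nabla\Phi_\lambda(A_t)\|_F=\lambda\big(\sum_i\sinh^2(\lambda s_i)\big)^{1/2}$ exactly. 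So the linear term is precisely $-\eta\lambda\cdot\lambda\big(\sum_i\sinh^2(\lambda s_i)\big)^{1/2}$.

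Finally I would invoke Lemma~\ref{lem:property_sinh_cosh_scalar} (second bullet): $\big(\sum_i\sinh^2(\lambda s_i)\big)^{1/2}\ge\frac{1}{\sqrt m}\big(\sum_i\cosh(\lambda s_i)-m\big)=\frac{1}{\sqrt m}(\Phi_\lambda(A_t)-m)$. Plugging in and using $\eta=\epsilon$ gives the linear drop $-\frac{\epsilon\lambda}{\sqrt m}(\Phi_\lambda(A_t)-m)=-\frac{\epsilon\lambda}{\sqrt m}\Phi_\lambda(A_t)+\epsilon\lambda\sqrt m$; after folding in the $O(c)\epsilon\lambda\Phi_\lambda(A_t)$ error from the quadratic term, the coefficient degrades from $1$ to $0.9$, yielding $\Phi_\lambda(A_{t+1})\le(1-0.9\frac{\epsilon\lambda}{\sqrt m})\Phi_\lambda(A_t)+\epsilon\lambda\sqrt m$. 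The main obstacle — really the only delicate point — is making the second-order Taylor remainder rigorous: one must argue that along the segment from $A_t$ to $A_{t+1}$ each argument $\lambda(u_i^\top A u_i-b_i)$ stays within $O(c)$ of $\lambda s_i$ (which follows from $|u_i^\top G_tu_i|\le\|G_t\|_F\|u_iu_i^\top\|_F=1$ and $\eta\lambda\le c$), so that $\cosh$ of the perturbed argument is within a constant factor of $\cosh(\lambda s_i)$, letting the remainder be bounded by $O(c\,\epsilon\lambda)\Phi_\lambda(A_t)$ rather than something uncontrolled. I would handle this either via the explicit Hessian formula already given (bounding $\|\nabla^2\Phi_\lambda\|$ along the segment) or via a direct scalar mean-value argument on each $\cosh$ term.
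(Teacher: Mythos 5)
Your overall strategy matches the paper's: a second-order Taylor expansion, orthogonality to kill cross terms and express $\|\nabla\Phi_\lambda(A_t)\|_F=\lambda(\sum_i\sinh^2(\lambda s_i))^{1/2}$ exactly, and Part~2 of Lemma~\ref{lem:property_sinh_cosh_scalar} to convert the linear drop into $-\frac{\epsilon\lambda}{\sqrt m}(\Phi_\lambda(A_t)-m)$. However, there is a genuine quantitative gap in your handling of the second-order remainder. You bound each $(u_i^\top G_tu_i)^2$ by $1$ and each $\cosh$ along the segment by $O(1)\cosh(\lambda s_i)$, arriving at a remainder of order $O(\epsilon^2\lambda^2)\,\Phi_\lambda(A_t)=O(c\,\epsilon\lambda)\Phi_\lambda(A_t)$, and then try to absorb this into the linear drop $-\frac{\epsilon\lambda}{\sqrt m}\Phi_\lambda(A_t)$. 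That absorption needs $O(c)\le 0.1/\sqrt m$, which forces $c$ to scale like $m^{-1/2}$; but the lemma requires $c$ to be a universal constant independent of $m$. (Concretely, your bound would only license steps of size $\epsilon\lambda=O(1/\sqrt m)$, inflating the iteration count of the main theorem by a factor of $\sqrt m$.)

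The paper avoids this by not throwing away the weighting in the quadratic form. The exact second-order coefficient of the $i$-th term is $(u_i^\top G_tu_i)^2\propto\sinh^2(\lambda s_i)/\|\nabla\Phi_\lambda(A_t)\|_F^2$, so the remainder is (up to the $e^{O(c)}$ factor) proportional to $\frac{(\epsilon\lambda)^2\,\lambda^2\sum_i\cosh(\lambda s_i)\sinh^2(\lambda s_i)}{\|\nabla\Phi_\lambda(A_t)\|_F^2}$. Claim~\ref{cla:gd_Q1} then bounds the numerator via Cauchy--Schwarz, $\sum_i\cosh\sinh^2\le(\sum_i\cosh^2)^{1/2}(\sum_i\sinh^4)^{1/2}$, and Part~1 of Lemma~\ref{lem:property_sinh_cosh_scalar} together with $\|\cdot\|_4^2\le\|\cdot\|_2^2$ to obtain $\Delta_2\le(\epsilon\lambda)^2(\sqrt m+\lambda^{-1}\|\nabla\Phi_\lambda(A_t)\|_F)$. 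The critical point is that the $\lambda^{-1}\|\nabla\Phi_\lambda(A_t)\|_F$ piece of this bound is of the \emph{same} order as the linear drop $-\epsilon\|\nabla\Phi_\lambda(A_t)\|_F$; multiplied by $O(1)(\epsilon\lambda)^2\cdot\lambda^{-1}=O(1)\epsilon^2\lambda\le O(c)\epsilon$, it is absorbed with only $O(c)\le 0.1$, and the residual $(\epsilon\lambda)^2\sqrt m\le 0.1\epsilon\lambda\sqrt m$ supplies the additive error term. Your plan mentions bounding $\|\nabla^2\Phi_\lambda\|$ along the segment as an alternative, but an operator-norm bound discards the same weighting information and would run into the same $\sqrt m$ loss; the fix is specifically the Cauchy--Schwarz split of $\cosh\cdot\sinh^2$ used in Claim~\ref{cla:gd_Q1}. (There is also a stray $\lambda$ in your intermediate substitution of $\|\nabla\Phi_\lambda(A_t)\|_F$, but the final linear-term expression you state is consistent with the paper, so I read that as a typo.)
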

\begin{proof}
We defer the proof to Appendix~\ref{sec:proof_gradient_descent}
\end{proof}

\subsection{Technical Claims}
We prove some technical claims in below.

\begin{claim}\label{cla:gd_Q1}
For $Q_1$ defined in Eq.~\eqref{eq:def_Q1}, we have
\begin{align*}
    Q_1 \leq \Big( \sqrt{m} + \frac{1}{\lambda} \| \nabla \Phi_{\lambda}(A_t) \|_F \Big) \cdot  \| \nabla \Phi_{\lambda} (A_t) \|_F^2.
\end{align*}
\end{claim}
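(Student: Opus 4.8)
The plan is to reduce the claim to the scalar inequality of Lemma~\ref{lem:property_sinh_cosh_scalar} by exploiting the orthonormality of $\{u_i\}_{i\in[m]}$. First I would record the two identities that make the orthogonal case tractable. Since $\langle u_iu_i^\top,u_ju_j^\top\rangle_F=(u_i^\top u_j)^2=\mathbf{1}[i=j]$, the gradient $\nabla\Phi_\lambda(A_t)=\sum_{i=1}^m u_iu_i^\top\lambda\sinh(\lambda(u_i^\top A_tu_i-b_i))$ from Eq.~\eqref{eq:gradient_phi_A} satisfies, writing $\alpha_i:=u_i^\top A_tu_i-b_i$,
\begin{align*}
u_j^\top\nabla\Phi_\lambda(A_t)u_j=\lambda\sinh(\lambda\alpha_j),\qquad \|\nabla\Phi_\lambda(A_t)\|_F^2=\lambda^2\sum_{i=1}^m\sinh^2(\lambda\alpha_i).
\end{align*}

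Next I would substitute these into the definition of $Q_1$ in Eq.~\eqref{eq:def_Q1}, which (up to the $\lambda$-normalization chosen there) is the diagonal second-order term $\sum_{i=1}^m(u_i^\top\nabla\Phi_\lambda(A_t)u_i)^2\cosh(\lambda\alpha_i)=\lambda^2\sum_{i=1}^m\sinh^2(\lambda\alpha_i)\cosh(\lambda\alpha_i)$ produced by the Taylor expansion of $\Phi_\lambda$. I would then pull out the largest $\cosh$, $\sum_i\sinh^2(\lambda\alpha_i)\cosh(\lambda\alpha_i)\le(\max_i\cosh(\lambda\alpha_i))\cdot\sum_i\sinh^2(\lambda\alpha_i)$, and control the maximum by $\max_i\cosh(\lambda\alpha_i)\le(\sum_i\cosh^2(\lambda\alpha_i))^{1/2}$; the first bullet of Lemma~\ref{lem:property_sinh_cosh_scalar} gives $(\sum_i\cosh^2(\lambda\alpha_i))^{1/2}\le\sqrt m+(\sum_i\sinh^2(\lambda\alpha_i))^{1/2}=\sqrt m+\tfrac1\lambda\|\nabla\Phi_\lambda(A_t)\|_F$. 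Combining,
\begin{align*}
Q_1 & \le\Big(\sqrt m+\tfrac1\lambda\|\nabla\Phi_\lambda(A_t)\|_F\Big)\cdot\lambda^2\sum_{i=1}^m\sinh^2(\lambda\alpha_i) \\
& =\Big(\sqrt m+\tfrac1\lambda\|\nabla\Phi_\lambda(A_t)\|_F\Big)\|\nabla\Phi_\lambda(A_t)\|_F^2 ,
\end{align*}
which is the claim.

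The only point requiring care is the exact weight in Eq.~\eqref{eq:def_Q1}: if the $\cosh$ factor there is evaluated at an intermediate matrix $\xi$ on the segment from $A_t$ to $A_{t+1}$ (as in a second-order Taylor remainder) rather than at $A_t$, I would first note $|u_i^\top\xi u_i-u_i^\top A_tu_i|\le\|\xi-A_t\|\le\|A_{t+1}-A_t\|_F=\epsilon$ (since $\|u_i\|=1$), and then use $\cosh(a+b)\le e^{|b|}\cosh(a)$ together with $\epsilon\lambda\le c$ to replace $\cosh(\lambda(u_i^\top\xi u_i-b_i))$ by $O(1)\cdot\cosh(\lambda\alpha_i)$ at the cost of an absolute constant, after which the chain above goes through verbatim. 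I do not expect a genuine obstacle: the claim is essentially the scalar $\cosh$--$\sinh$ inequality of Lemma~\ref{lem:property_sinh_cosh_scalar} transported onto the diagonal structure that orthonormal sensing vectors impose on $\nabla\Phi_\lambda(A_t)$.
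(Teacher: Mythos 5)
Your proof is correct and follows essentially the same route as the paper: both reduce $Q_1$ to $\lambda^2\sum_i\cosh(z_{t,i})\sinh^2(z_{t,i})$ via the orthonormality of the $u_i$, bound this by $(\sum_i\cosh^2)^{1/2}\cdot\sum_i\sinh^2$, and finish with Part 1 of Lemma~\ref{lem:property_sinh_cosh_scalar} together with $\|\nabla\Phi_\lambda(A_t)\|_F^2=\lambda^2\sum_i\sinh^2(z_{t,i})$; the paper gets the middle inequality by Cauchy--Schwarz and $\|x\|_4^2\le\|x\|_2^2$, while you pull out $\max_i\cosh$ and dominate it by the $\ell_2$ norm, which is an equivalent elementary step. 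Your closing worry about an intermediate-point Hessian is moot since the paper's second-order bound (Corollary~\ref{cor:matrix_der_trace}) evaluates $\nabla^2\Phi_\lambda$ at $A_t$ itself, absorbing the remainder into the $O(1)$ factor.
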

\begin{proof}

For simplicity, we define $z_{t,i}$ to be
\begin{align*}
    z_{t,i} := \lambda ( u_i^\top A_t u_i - b_i ) .
\end{align*}
Recall that
\begin{align*}
    \nabla^2 \Phi_{\lambda}(A_t) = \lambda^2 \cdot \sum_{i=1}^m ( u_i u_i^\top ) \otimes ( u_i u_i^\top ) \cosh( z_{t,i} ) .
\end{align*}

For $Q_1$, we have
\begin{align}\label{eq:Q1}
   Q_1 = & ~  \tr[ \nabla^2 \Phi_{\lambda}(A_t) \sum_{i=1}^m  \sinh^2( z_{t,i} ) (u_i u_i^\top \otimes u_i u_i^\top) ) ] \notag\\
   = & ~ \lambda^2 \cdot \tr[ \sum_{i=1}^m  \cosh( z_{t,i} ) ( u_i u_i^\top ) \otimes ( u_i u_i^\top )    \cdot   \sum_{i=1}^m  \sinh^2( z_{t,i} ) (u_i u_i^\top ) \otimes ( u_i u_i^\top)   ] \notag\\
   = & ~ \lambda^2 \cdot \sum_{i=1}^m \tr[ \cosh( z_{t,i} )   \cdot \sinh^2( z_{t,i} )  ( u_i u_i^\top  u_i u_i^\top ) \otimes ( u_i u_i^\top  u_i u_i^\top ) ] \notag\\
   = & ~ \lambda^2 \cdot \sum_{i=1}^m  \cosh( z_{t,i} ) \sinh^2( z_{t,i} ) \notag \\
   \leq & ~ \lambda^2 \cdot (  \sum_{i=1}^m  \cosh^2( z_{t,i} ) )^{1/2}
   \cdot  ( \sum_{i=1}^m \sinh^4( z_{t,i} ) )^{1/2} \notag \\
   \leq & ~ \lambda^2 \cdot B_1 \cdot B_2,
\end{align}
where {  the first step comes from the definition of $Q_1$, the second step comes from the definition of $\nabla^2 \Phi_{\lambda}(A_t)$,}
the third step follows from $(A \otimes B) \cdot (C \otimes D) = (AC) \otimes (BD)$ and $u_i^\top u_j = 0$ , the fourth step comes from $\|u_i\| = 1$ and $\tr[ (u_i  u_i^\top) \otimes (u_i  u_i^\top) ] = 1$.

For the term $B_1$, we have
\begin{align*}
    B_1 = & ~ (  \sum_{i=1}^m \cosh^2( \lambda (u_i^\top A_t u_i - b_i ) ) )^{1/2} \\
    \leq & ~ \sqrt{m} + \frac{1}{\lambda} \| \nabla \Phi_{\lambda}(A_t) \|_F,
\end{align*}
where the second step follows Part 1 of Lemma~\ref{lem:property_sinh_cosh_scalar}.

For the term $B_2$, we have
\begin{align*}
    B_2 = & ~( \sum_{i=1}^m \sinh^4( \lambda( u_i^\top A_t u_i - b_i ) ) )^{1/2} \\
    \leq & ~ \frac{1}{\lambda^2} \| \nabla \Phi_{\lambda} (A_t) \|_F^2,
\end{align*}
where the second step follows from $\| x \|_4^2 \leq \| x \|_2^2$. This implies that
\begin{align*}
    Q_1 \leq & ~ \lambda^2 \cdot B_1 \cdot B_2 \\
    \leq & ~ \lambda^2 \cdot ( \sqrt{m} + \frac{1}{\lambda} \| \nabla \Phi_{\lambda}(A_t) \|_F ) \cdot \frac{1}{\lambda^2} \| \nabla \Phi_{\lambda} (A_t) \|_F^2 \\
    = & ~ ( \sqrt{m} + \frac{1}{\lambda} \| \nabla \Phi_{\lambda}(A_t) \|_F ) \cdot  \| \nabla \Phi_{\lambda} (A_t) \|_F^2 .
\end{align*}
\end{proof}

\begin{claim}\label{cla:gd_Q2}
For $Q_2$ defined in Eq.~\eqref{eq:def_Q2}, we have $Q_2 = 0$.
\end{claim}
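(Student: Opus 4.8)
The claim to prove is that $Q_2 = 0$, where $Q_2$ is the off-diagonal term appearing in the expansion of $\Phi_\lambda(A_{t+1})$ via a second-order Taylor expansion of $\cosh$, parallel to the $Q_1$ term just handled in Claim~\ref{cla:gd_Q1}.

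\textbf{Proof proposal.} The plan is to unpack the definition of $Q_2$ from Eq.~\eqref{eq:def_Q2} and recognize it as the ``cross'' contribution to the quadratic form $\langle \nabla^2 \Phi_\lambda(A_t), G \otimes G\rangle$, where $G = \nabla \Phi_\lambda(A_t)/\|\nabla \Phi_\lambda(A_t)\|_F$ is the normalized gradient direction. Writing $z_{t,i} := \lambda(u_i^\top A_t u_i - b_i)$ as in the previous claim, I expect $Q_2$ to be a sum over pairs $i \neq j$ of terms of the form
\begin{align*}
\lambda^2 \cosh(z_{t,i}) \sinh(z_{t,j}) \sinh(z_{t,k}) \cdot \tr\big[ (u_i u_i^\top)(u_j u_j^\top)(u_i u_i^\top)(u_k u_k^\top) \big]
\end{align*}
or a close variant — precisely the terms that would survive if the measurement vectors were not orthogonal. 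First I would substitute the formula $\nabla^2 \Phi_\lambda(A_t) = \lambda^2 \sum_{i=1}^m \cosh(z_{t,i})(u_iu_i^\top)\otimes(u_iu_i^\top)$ and $\nabla \Phi_\lambda(A_t) = \lambda \sum_{j=1}^m u_ju_j^\top \sinh(z_{t,j})$ into the definition of $Q_2$, then expand using the mixed-product property $(A\otimes B)(C\otimes D) = (AC)\otimes(BD)$.

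The key step is then purely combinatorial: each resulting term carries a factor $\tr[(u_iu_i^\top)(u_ju_j^\top)\cdots]$ in which at least one adjacent pair of distinct indices appears, producing a factor $u_i^\top u_j$ with $i\neq j$. Since $u_1,\dots,u_m$ are orthogonal by the hypothesis of Lemma~\ref{lem:gradient_descent}, every such factor vanishes, so every term in the expansion of $Q_2$ is zero, whence $Q_2 = 0$. Concretely, I would show that after applying the Kronecker mixed-product rule, each summand is proportional to $(u_i^\top u_j)(u_j^\top u_k)\cdots$ for indices that are not all equal — because $Q_2$ is by construction the part of the expansion where the Hessian's index does not match (both of) the gradient's indices — and invoke $u_i^\top u_j = \mathbbm{1}[i=j]$ to kill it.

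\textbf{Main obstacle.} The only real subtlety is bookkeeping: I must be careful about which precise expression Eq.~\eqref{eq:def_Q2} assigns to $Q_2$ — in particular whether it is the genuine off-diagonal piece $\sum_{i\neq j}(\cdots)$ or the full quadratic form minus the diagonal piece captured by $Q_1$ — and track the ordering of the four rank-one factors inside the trace so that the orthogonality cancellation is applied to the correct adjacent pair. Once the indexing is pinned down, the argument is a one-line application of $u_i^\top u_j = 0$ for $i\neq j$; there is no analytic content, only the Kronecker-product identity and orthogonality. I would therefore spend the write-up making the index structure explicit and then conclude $Q_2 = 0$ immediately.
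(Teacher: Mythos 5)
Your proposal is correct and follows essentially the same route as the paper: substitute the Hessian and gradient into $Q_2$, apply the Kronecker mixed-product identity $(A\otimes B)(C\otimes D)=(AC)\otimes(BD)$, and observe that since the outer sum runs over $i\neq j$, the Hessian index $\ell$ cannot match both $i$ and $j$, so at least one factor $u_\ell^\top u_i$ or $u_\ell^\top u_j$ vanishes by orthogonality. The only missing piece is making this index argument explicit, which is exactly the bookkeeping you flagged as the obstacle.
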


\begin{proof}
Because in $Q_2$ we have :
\begin{align}\label{eq:u_ell_i_j_product}
    & ~\sum_{\ell=1}^{m}(u_{\ell} u_{\ell}^{\top} \otimes u_{\ell} u_{\ell}^{\top}) \sum_{i \neq j}(u_i u_i^{\top} \otimes u_j u_j^{\top}) \notag \\
    = & ~ \sum_{\ell=1}^{m} \sum_{i \neq j} (u_{\ell} u_{\ell}^{\top} u_i u_i^{\top}) \otimes (u_{\ell} u_{\ell}^{\top} u_j u_j^{\top}) \notag \\
    = & ~ 0, 
\end{align}
where the first step follows from $(A \otimes B) \cdot (C \otimes D) = (AC) \otimes (BD)$ , the second step follows that $u_i^{\top} u_j = 0$ if $i \neq j$ and $\ell \neq i$ or $\ell \neq j$ always holds in Eq.~\eqref{eq:u_ell_i_j_product}.

Therefore, we get that $Q_2 = 0$.
\end{proof}

\subsection{Convergence for multiple iterations}\label{sec:gd_analysis_2}

The goal of this section is to prove the convergence of Algorithm~\ref{alg:GD}:

\begin{lemma}[Convergence of gradient descent]\label{lem:gd_convergence}
Suppose the measurement vectors $\{u_i\}_{i\in [m]}$ are orthogonal unit vectors, and suppose $|b_i|$ is bounded by $R$ for $i\in [m]$.  Then, for any $\delta \in (0,1)$, if we take $\lambda = \Omega(\delta^{-1}\log m)$ and $\epsilon=O(\lambda^{-1})$ in Algorithm~\ref{alg:GD}, then for $T=\widetilde{\Omega}(\sqrt{m}R\delta^{-1})$ iterations, the solution matrix $A_T$ satisfies:
\begin{align*}
    | u_i^\top A_{T} u_i - b_i| \leq \delta~~~\forall i\in [m].
\end{align*}
\end{lemma}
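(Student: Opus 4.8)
The plan is to combine the single-iteration progress bound (Lemma~\ref{lem:gradient_descent}) with the initialization of Algorithm~\ref{alg:GD} to control the potential $\Phi_\lambda(A_T)$, and then translate a small potential into the desired entrywise guarantee via Lemma~\ref{lem:cosh_bound}. First I would record the initial potential: since $A_1 = \tau I$ with $\tau = \max_{i} b_i$, each argument $u_i^\top A_1 u_i - b_i = \tau - b_i$ satisfies $|\tau - b_i| \le 2R$, so $\Phi_\lambda(A_1) = \sum_{i=1}^m \cosh(\lambda(\tau - b_i)) \le m\cosh(2\lambda R) \le m e^{2\lambda R}$. This gives $\log \Phi_\lambda(A_1) \le \log m + 2\lambda R$.

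Next I would iterate the recursion from Lemma~\ref{lem:gradient_descent}, namely $\Phi_\lambda(A_{t+1}) \le (1 - 0.9\,\lambda\epsilon/\sqrt m)\,\Phi_\lambda(A_t) + \lambda\epsilon\sqrt m$, whose fixed point is $\Phi^\star = \lambda\epsilon\sqrt m / (0.9\,\lambda\epsilon/\sqrt m) = m/0.9$. Writing $\alpha := 0.9\lambda\epsilon/\sqrt m$, unrolling gives $\Phi_\lambda(A_{T+1}) - \Phi^\star \le (1-\alpha)^T(\Phi_\lambda(A_1) - \Phi^\star) \le (1-\alpha)^T \Phi_\lambda(A_1) \le e^{-\alpha T}\Phi_\lambda(A_1)$. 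To force $\Phi_\lambda(A_{T+1}) \le 2\Phi^\star = O(m)$ it suffices that $e^{-\alpha T}\Phi_\lambda(A_1) \le \Phi^\star$, i.e. $\alpha T \ge \log(\Phi_\lambda(A_1)/\Phi^\star) = O(\log m + \lambda R)$. Since $\alpha = 0.9\lambda\epsilon/\sqrt m$ and $\epsilon = \Theta(\lambda^{-1})$ (using $\epsilon\lambda \le c$ from Lemma~\ref{lem:gradient_descent}, taken tight up to constants), we get $\alpha = \Theta(1/\sqrt m)$, hence $T = O(\sqrt m(\log m + \lambda R))$. With the choice $\lambda = \Theta(\delta^{-1}\log m)$ this is $T = \widetilde O(\sqrt m\,R\,\delta^{-1})$ as claimed.

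Finally, from $\Phi_\lambda(A_{T+1}) = O(m)$ I would extract the per-measurement bound. For each fixed $i$, $\cosh(\lambda(u_i^\top A_{T+1}u_i - b_i)) \le \Phi_\lambda(A_{T+1}) = O(m)$, so $\lambda|u_i^\top A_{T+1}u_i - b_i| \le \operatorname{arccosh}(O(m)) \le 1 + \log(O(m)) = O(\log m)$ (this is exactly the scalar analogue of the bound $\|A\| \le 1 + \log\tr[\cosh(A)]$ in Lemma~\ref{lem:cosh_bound}). Dividing by $\lambda = \Theta(\delta^{-1}\log m)$ gives $|u_i^\top A_{T+1}u_i - b_i| = O(\log m / \lambda) \le \delta$ once the constant in $\lambda$ is chosen large enough, which is the desired conclusion (with $A_T$ or $A_{T+1}$ interchangeable up to adjusting $T$ by one).

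The routine parts are the geometric-series unrolling and the final $\operatorname{arccosh}$ estimate; the only real subtlety is bookkeeping the constants so that the three constraints — $\epsilon\lambda \le c$ from Lemma~\ref{lem:gradient_descent}, the iteration count $\alpha T \gtrsim \log m + \lambda R$, and the accuracy requirement $\log m/\lambda \le \delta$ — are simultaneously satisfiable; this pins down $\lambda = \Theta(\delta^{-1}\log m)$, $\epsilon = \Theta(\delta(\log m)^{-1})$, and $T = \widetilde\Theta(\sqrt m R\delta^{-1})$. I do not anticipate a genuine obstacle here, since Lemma~\ref{lem:gradient_descent} already isolates the hard work; the remaining argument is the standard "contraction toward a fixed point plus potential-to-guarantee" template.
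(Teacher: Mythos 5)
Your proposal is correct and follows essentially the same route as the paper's proof in Appendix~\ref{sec:gd_convergence_proof}: bound the initial potential by $m\,e^{O(\lambda R)}$, iterate the one-step contraction of Lemma~\ref{lem:gradient_descent} (you phrase it as convergence to the fixed point $m/0.9$, the paper unrolls and sums the geometric series, but this is the same calculation), conclude $\Phi_\lambda(A_{T+1}) = O(m)$ after $T=\widetilde\Omega(\sqrt m R\delta^{-1})$ iterations, and then invert $\cosh$ per coordinate and use $\lambda = \Omega(\delta^{-1}\log m)$ to obtain the entrywise $\delta$-bound.
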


\begin{proof}
We defer the proof to Appendix~\ref{sec:gd_convergence_proof}
\end{proof}

Theorem~\ref{thm:gd_orthogonal} follows immediately from Lemma~\ref{lem:gd_cost_per_iter} and Lemma~\ref{lem:gd_convergence}.
\section{Stochastic gradient descent}\label{sec:sgd}
In this section, we show that the cost-per-iteration of the approximate matrix sensing algorithm can be improved by using a stochastic gradient descent (SGD). More specifically, SGD can obtain a $\delta$-approximate solution with $O(Bn^2)$, where $0<B<m$ is the size of the mini batch in SGD. Below is the main theorem of this section:

\begin{theorem}[Stochastic gradient descent for orthogonal measurements]\label{thm:sgd_orthogonal}
Suppose $u_1,\dots,u_m\in \R^n$ are orthogonal unit vectors, and suppose $|b_i|\leq R$ for all $i\in [m]$. There exists an algorithm such that for any $\delta \in (0,1)$, performs $\wt{O}(m^{3/2}B^{-1}R\delta^{-1})$ iterations of gradient descent with $O(Bn^2)$-time per iteration and outputs a matrix $A\in \R^{n\times n}$ satisfies:
\begin{align*}
    | u_i^\top A u_i - b_i| \leq \delta~~~\forall i\in [m].
\end{align*}
\end{theorem}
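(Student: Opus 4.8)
The plan is to follow the proof of Theorem~\ref{thm:gd_orthogonal} for full‑batch gradient descent, replacing its deterministic one‑step bound (Lemma~\ref{lem:gradient_descent}) by an analysis in expectation over the random mini‑batch, and then upgrading the resulting expected decrease to a high‑probability statement. Concretely, the SGD algorithm starts from $A_1=\tau I$ with $\tau=\max_i b_i$ and, at iteration $t$, samples $\mathcal{B}_t\subset[m]$ uniformly with $|\mathcal{B}_t|=B$ and performs $A_{t+1}\gets A_t-\epsilon\cdot\nabla\Phi_\lambda(A_t,\mathcal{B}_t)/\|\nabla\Phi_\lambda(A_t)\|_F$, where $\nabla\Phi_\lambda(A_t,\mathcal{B}_t)=\tfrac{m}{B}\sum_{i\in\mathcal{B}_t}u_iu_i^\top\lambda\sinh(\lambda(u_i^\top A_t u_i-b_i))$ and the normalization still uses the \emph{full} gradient norm. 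I would prove (ii) an $O(Bn^2)$ per‑iteration cost, (iii) an expected one‑step progress bound (this is Lemma~\ref{lem:stochastic_gradient_descent}), and (iv) a high‑probability convergence bound over $T$ iterations (Lemma~\ref{lem:sgd_convergence}); Theorem~\ref{thm:sgd_orthogonal} then follows by combining (ii) and (iv) exactly as Theorem~\ref{thm:gd_orthogonal} does.

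For the cost (ii): since the $u_i$ are orthonormal, $\|\nabla\Phi_\lambda(A_t)\|_F=\lambda(\sum_{i=1}^m\sinh^2 z_{t,i})^{1/2}$ where $z_{t,i}:=\lambda(u_i^\top A_t u_i-b_i)$, so we only need the scalar $s_t:=\sum_i\sinh^2 z_{t,i}$, not the full gradient matrix. Moreover the update is ``lazy'': $u_j^\top A_{t+1}u_j-u_j^\top A_t u_j=0$ unless $j\in\mathcal{B}_t$, because $u_j^\top u_i u_i^\top u_j=[i=j]$ (the same cancellation as in Claim~\ref{cla:gd_Q2}). Hence only the $B$ losses $\{z_{t,i}\}_{i\in\mathcal{B}_t}$ and $s_t$ change, each in $O(1)$ time given $s_t$, for a total of $O(B)$; forming the rank‑$B$ matrix $\nabla\Phi_\lambda(A_t,\mathcal{B}_t)$ and adding it to $A_t$ costs $O(Bn^2)$; sampling $\mathcal{B}_t$ costs $\wt O(B)$; and the $O(m)$ preprocessing of $\{z_{1,i}\}$ and $s_1$ is a lower‑order one‑time cost. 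This yields $O(Bn^2)$ per iteration.

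For the one‑step progress (iii): orthonormality makes the loss update exact, $z_{t+1,i}=z_{t,i}-\eta_{t,i}$ with $\eta_{t,i}=\epsilon\lambda^2\tfrac{m}{B}\sinh(z_{t,i})/\|\nabla\Phi_\lambda(A_t)\|_F$ for $i\in\mathcal{B}_t$ and $\eta_{t,i}=0$ otherwise, and $|\eta_{t,i}|\le\epsilon\lambda\tfrac{m}{B}$ since $|\sinh z_{t,i}|\le\|\nabla\Phi_\lambda(A_t)\|_F/\lambda$. Using $\cosh(a-b)=\cosh a\cosh b-\sinh a\sinh b$, the elementary bounds $\cosh\eta\le 1+\eta^2$ and $|\sinh\eta|\ge|\eta|$ for $|\eta|$ a small constant, and that $\sinh\eta_{t,i}$ has the same sign as $\sinh z_{t,i}$, one gets $\cosh z_{t+1,i}\le\cosh(z_{t,i})(1+\eta_{t,i}^2)-|\sinh z_{t,i}|\,|\eta_{t,i}|$. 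Summing over $i\in\mathcal{B}_t$ and taking $\E[\cdot\mid\mathcal{F}_t]$ (which turns $\sum_{i\in\mathcal{B}_t}$ into $\tfrac{B}{m}\sum_{i=1}^m$), then substituting $\|\nabla\Phi_\lambda(A_t)\|_F=\lambda\sqrt{s_t}$, invoking Lemma~\ref{lem:property_sinh_cosh_scalar} (both parts) and $(\sum_i\cosh^2 z_{t,i})^{1/2}\le\sum_i\cosh z_{t,i}=\Phi_\lambda(A_t)$, I expect to obtain, for $\epsilon\lambda\le cB/m$ with $c$ a small constant,
\begin{align*}
\E[\Phi_\lambda(A_{t+1})\mid\mathcal{F}_t]\le\Big(1-\Omega\big(\tfrac{\epsilon\lambda}{\sqrt m}\big)\Big)\cdot\Phi_\lambda(A_t)+O(\epsilon\lambda\sqrt m).
\end{align*}
The only structural change from the GD bound is that $\E\|\nabla\Phi_\lambda(A_t,\mathcal{B}_t)\|_F^2=\tfrac{m}{B}\|\nabla\Phi_\lambda(A_t)\|_F^2$, which inflates the second‑order error term by a factor $m/B$; this forces the admissible step to be $\epsilon\lambda=O(B/m)$ rather than $O(1)$, and is exactly why $\wt O(\sqrt m)$ GD iterations become $\wt O(m^{3/2}/B)$.

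For convergence (iv) and the final bound, take $\lambda=\Theta(\delta^{-1}\log m)$ and $\epsilon=\Theta(B/(m\lambda))$. Since $z_{1,i}=\lambda(\tau-b_i)\in[0,2\lambda R]$ we have $\Phi_\lambda(A_1)\le m\,e^{2\lambda R}$, so $\log(\Phi_\lambda(A_1)/m)=O(\lambda R)$; unrolling the recursion gives $\E[\Phi_\lambda(A_{T+1})]=O(m)$ once $T=\wt\Omega(\sqrt m\,\lambda R/(\epsilon\lambda))=\wt\Omega(m^{3/2}B^{-1}R\delta^{-1})$, and $\Phi_\lambda(A_{T+1})=O(m)$ forces $\cosh(\lambda(u_i^\top A_{T+1}u_i-b_i))=O(m)$, hence $|u_i^\top A_{T+1}u_i-b_i|\le O(\log m)/\lambda\le\delta$ for all $i$. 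To make this hold with high probability one uses the ``randomized potential'': the process $M_t:=(1-\Omega(\epsilon\lambda/\sqrt m))^{-t}\big(\Phi_\lambda(A_t)-\Theta(m)\big)$ is a nonnegative supermartingale, and Markov/optional stopping gives $\Phi_\lambda(A_{T+1})=O(m)$ with constant probability, amplified to $1-1/\poly(n)$ by running $T$ a further $O(\log n)$ factor longer. \textbf{Main obstacle.} The delicate point is precisely this high‑probability step: unlike deterministic GD, an unlucky mini‑batch can \emph{increase} $\Phi_\lambda$, and the multiplicative one‑step jump is only $O(\epsilon\lambda\tfrac{m}{B})=O(\tfrac{m^{3/2}}{B}\cdot\tfrac{\epsilon\lambda}{\sqrt m})$, too large for a black‑box Azuma–Hoeffding bound over $T$ steps to dominate the drift; concentrating the shifted multiplicative supermartingale $M_t$ (whose expectation is controlled unconditionally) rather than $\Phi_\lambda$ itself is the fix, and one must also pin down the constants in (iii) so that the recursion's fixed point is $O(m)$ — large enough to be reachable but small enough to certify $\delta$‑accuracy.
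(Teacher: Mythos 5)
Your overall plan mirrors the paper's decomposition exactly — cost-per-iteration (their Lemma~\ref{lem:sgd_cost_per_iter}), expected one-step progress (Lemma~\ref{lem:stochastic_gradient_descent}), and multi-iteration convergence (Lemma~\ref{lem:sgd_convergence}) — but your proof of the one-step progress bound takes a genuinely different, and in fact cleaner, route. The paper proves Lemma~\ref{lem:stochastic_gradient_descent} by Taylor-expanding $\Phi_\lambda(A_{t+1})$ around $A_t$ in matrix form (via Corollary~\ref{cor:matrix_der_trace}, i.e.\ the Jarre--Nemirovski-style second-order trace inequality), computing the first and second moments of the update, and splitting the Hessian quadratic form into a diagonal term $\wt Q_1$ and a vanishing off-diagonal term $\wt Q_2$, with $\E[\wt Q_1]=\tfrac{B}{m}Q_1$ reusing Claim~\ref{cla:gd_Q1}. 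You instead observe that under exact orthonormality the losses evolve by a closed-form scalar recurrence $z_{t+1,i}=z_{t,i}-\eta_{t,i}$, and you bound $\cosh(z_{t+1,i})$ directly via the hyperbolic addition formula together with $\cosh\eta\le 1+\eta^{2}$, $|\sinh\eta|\ge|\eta|$, and the sign alignment of $\eta_{t,i}$ with $\sinh z_{t,i}$. This dispenses with the matrix-trace Taylor machinery entirely; it is more elementary, and it makes the source of the $m/B$ loss in the admissible step size very transparent (the inflated variance of the stochastic gradient). The trade-off is that your scalar argument leans harder on orthogonality — the paper's Hessian-based argument is the one that generalizes to the $\rho$-nearly-orthogonal setting in Appendix~\ref{sec:sgd_general}, whereas your exact loss recurrence would pick up cross-terms there. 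Two minor points of comparison: (a) your $O(B)$ incremental maintenance of $s_t=\sum_i\sinh^2 z_{t,i}$ is a small improvement on the paper's $O(m)$ recomputation, though both are swallowed by the $O(Bn^2)$ term; (b) you are right that the paper's high-probability claim in Lemma~\ref{lem:sgd_convergence} is stated without justification — from $\E[\Phi(A_{T+1})]=O(m)$ Markov alone only gives constant failure probability at threshold $O(m)$. Your supermartingale/optional-stopping fix works, though the simplest patch (and probably what the paper intends) is to apply Markov at a relaxed threshold such as $m^{1+c}$, which costs only a constant factor in $\lambda$ since $\lambda^{-1}\log(m^{1+c})=O(\lambda^{-1}\log m)$; your stated amplification ``run a further $O(\log n)$ factor longer'' is not by itself a valid boosting step for a single supermartingale trajectory — you would either restart-and-verify or loosen the Markov threshold as above.
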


The algorithm and its time complexity are provided in Section~\ref{sec:sgd_alg}. The convergence is proved in Section~\ref{sec:sgd_converge_1} and \ref{sec:sgd_converge_2}. The SGD algorithm for the general measurement without the assumption that the $\{u_i\}_{i\in [m]}$ are orthogonal vectors is deferred to Appendix~\ref{sec:sgd_general}.

\subsection{Algorithm}\label{sec:sgd_alg}
We can use the stochastic gradient descent algorithm (Algorithm~\ref{alg:stochastic_gradient_descent}) for matrix sensing. More specifically, in each iteration, we will uniformly sample a subset ${\cal B}\subset [m]$ of size $B$, and then compute the gradient of the stochastic potential function:
\begin{align}\label{eq:sgd_potential_func}
        \nabla \Phi_{\lambda}(A{ , {\cal B}}) := {  \frac{m}{|{\cal B}|}\sum_{i \in {\cal B}}} u_i u_i^\top \lambda \sinh( \lambda (u_i^\top A u_i - b_i) ), 
\end{align}
which is an $n$-by-$n$ matrix. Then, we do the following gradient step:
\begin{align}
    A_{t+1} \gets A_t - \epsilon \cdot \nabla \Phi_{\lambda}(A_t,{\cal B}_t) / \| \nabla \Phi_{\lambda}(A_t) \|_F.
\end{align}

\begin{lemma}[Running time of stochastic gradient descent]\label{lem:sgd_cost_per_iter}
Algorithm~\ref{alg:stochastic_gradient_descent} takes $O(mn^2)$-time for preprocessing and 
each iteration takes
$
    O(Bn^2)
$-time.
\end{lemma}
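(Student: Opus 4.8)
The plan is to bound separately the one-time preprocessing cost and the per-iteration cost of Algorithm~\ref{alg:stochastic_gradient_descent}, the crux being a \emph{maintenance} argument: although the normalization factor $\|\nabla\Phi_\lambda(A_t)\|_F$ in the update rule is a function of \emph{all} $m$ training losses, orthonormality of $\{u_i\}$ forces each iteration to change only $B$ of them, so both the losses and the normalization scalar can be refreshed incrementally in $O(B)$ additional time.

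\textbf{Preprocessing.} We set $A_1=\tau I$ with $\tau=\max_i b_i$, and compute the initial training losses $z_{1,i}:=\lambda(u_i^\top A_1u_i-b_i)$ for all $i\in[m]$; evaluating these $m$ quadratic forms costs $O(mn^2)$ by the reasoning of Lemma~\ref{lem:gd_cost_per_iter} (and no more, since that dominates). From them we form the scalar
\begin{align*}
    S_1:=\sum_{i=1}^m\sinh^2(z_{1,i})
\end{align*}
in $O(m)$ time. Because $\nabla\Phi_\lambda(A_1)=\sum_{i=1}^m\lambda\sinh(z_{1,i})\,u_iu_i^\top$ and the outer products $u_iu_i^\top$ are mutually orthogonal under the trace inner product (as $u_i^\top u_j=0$ for $i\neq j$), we have $\|\nabla\Phi_\lambda(A_1)\|_F^2=\lambda^2S_1$, so the initial normalization factor is available once $S_1$ is. Total preprocessing time is $O(mn^2)$.

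\textbf{Per iteration.} Suppose at iteration $t$ we have stored $A_t$, all losses $\{z_{t,i}\}_{i\in[m]}$, and $S_t=\sum_i\sinh^2(z_{t,i})$. We sample $\mathcal{B}_t\subseteq[m]$ with $|\mathcal{B}_t|=B$. The stochastic gradient
\begin{align*}
    \nabla\Phi_\lambda(A_t,\mathcal{B}_t)=\frac{m}{B}\sum_{i\in\mathcal{B}_t}\lambda\sinh(z_{t,i})\,u_iu_i^\top
\end{align*}
is a sum of $B$ rank-one $n\times n$ matrices whose coefficients use only the already-stored $z_{t,i}$, $i\in\mathcal{B}_t$; forming it costs $O(Bn^2)$. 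The normalization factor $\|\nabla\Phi_\lambda(A_t)\|_F=\lambda\sqrt{S_t}$ is read off in $O(1)$ time, so the step $A_{t+1}\gets A_t-\epsilon\cdot\nabla\Phi_\lambda(A_t,\mathcal{B}_t)/(\lambda\sqrt{S_t})$ costs another $O(Bn^2)$. For the bookkeeping, write $A_{t+1}-A_t=-\sum_{j\in\mathcal{B}_t}\gamma_{t,j}\,u_ju_j^\top$; then $u_i^\top A_{t+1}u_i=u_i^\top A_tu_i-\sum_{j\in\mathcal{B}_t}\gamma_{t,j}(u_i^\top u_j)^2$, so by orthogonality $z_{t+1,i}=z_{t,i}$ for all $i\notin\mathcal{B}_t$ and each of the $B$ remaining losses is recomputed in $O(1)$ time using $(u_i^\top u_i)^2=1$. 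Finally
\begin{align*}
    S_{t+1}=S_t-\sum_{i\in\mathcal{B}_t}\sinh^2(z_{t,i})+\sum_{i\in\mathcal{B}_t}\sinh^2(z_{t+1,i})
\end{align*}
is updated in $O(B)$ time, keeping the normalization current. Summing, one iteration costs $O(Bn^2)$, which is the claim.

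\textbf{Main obstacle.} The only non-routine point is getting the $O(Bn^2)$ per-iteration bound despite normalizing by the full gradient norm $\|\nabla\Phi_\lambda(A_t)\|_F$, which a priori depends on all $m$ measurements. This is exactly what the two consequences of orthonormality above buy us: (i) $\|\nabla\Phi_\lambda(A_t)\|_F^2=\lambda^2\sum_i\sinh^2(z_{t,i})$ collapses to a single maintained scalar, and (ii) a rank-$B$ update supported on $\mathcal{B}_t$ perturbs only the $B$ losses indexed by $\mathcal{B}_t$, so that scalar is maintainable in $O(B)$ time. (The non-orthogonal variant in the appendix needs the cross terms $(u_i^\top u_j)^2\le\rho^2$ and a correspondingly more careful update; here orthogonality makes it immediate.)
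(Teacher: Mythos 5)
Your proof is correct and follows essentially the same route as the paper: exploit orthonormality to reduce $\|\nabla\Phi_\lambda(A_t)\|_F$ to a function of the scalar losses $\{z_{t,i}\}$, and observe that a rank-$B$ update supported on $\mathcal{B}_t$ perturbs only those $B$ losses. The one small variation is that you maintain the running sum $S_t=\sum_i\sinh^2(z_{t,i})$ incrementally in $O(B)$ time, whereas the paper simply recomputes the norm from $\{z_{t,i}\}$ in $O(m)$ time per iteration and absorbs the $O(m)$ into $O(Bn^2)$ (using $m\le n$); both give the same $O(Bn^2)$ bound.
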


\begin{proof}
The time-consuming step is to compute $\|\nabla \Phi_\lambda(A_t)\|_F$. Since
\begin{align*}
    \nabla \Phi_{\lambda}(A_t) = \sum_{i=1}^m u_i u_i^\top \lambda \sinh\left( \lambda (u_i^\top A_t u_i - b_i)\right),
\end{align*}
and $u_i\bot u_j$ for $i\ne j\in [m]$, we know that $u_i$ is an eigenvector of $\nabla \Phi_{\lambda}(A)$ with eigenvalue $\lambda \sinh\left( \lambda (u_i^\top A_t u_i - b_i)\right)$ for each $i\in [m]$. Thus, we have
\begin{align*}
    \|\nabla \Phi_\lambda(A_t)\|_F^2 = &~ \sum_{i=1}^m \lambda^2 \sinh^2\left( \lambda (u_i^\top A_t u_i - b_i)\right)\\
    = &~ \sum_{i=1}^m \lambda^2 \sinh^2 (\lambda z_{t,i}),
\end{align*}
where $z_{t,i}:=u_i^\top A_t u_i - b_i$ for $i\in [m]$. Then, if we know ${z_{t,i}}_{i\in [m]}$, we can compute $\|\nabla \Phi_\lambda(A_t)\|_F$ in $O(m)$-time.

Consider the change $z_{t+1,i}-z_{t,i}$:
\begin{align*}
    &z_{t+1,i}-z_{t,i}\\
    = &~ u_i^\top (A_{t+1}-A_t)u_i\\
    = &~  -\frac{\epsilon}{\| \nabla \Phi_{\lambda}(A_t) \|_F}\cdot u_i^\top \nabla \Phi_{\lambda}(A_t,{\cal B} _t)u_i\\
    = &~ -\frac{\epsilon \lambda m }{\| \nabla \Phi_{\lambda}(A_t) \|_F B} \sum_{j\in {\cal B}_t}u_i^\top u_j u_j^\top u_i \cdot  \sinh(\lambda z_{t,j})\\
    = &~ -\frac{\epsilon \lambda m  \sinh(\lambda z_{t,i})}{\| \nabla \Phi_{\lambda}(A_t) \|_F B} \cdot {\bf 1}_{i\in {\cal B}_t},
\end{align*}
where the last step follows from $u_i\bot u_j$ for $i\ne j$.
Hence, if we have already computed $\{z_{t,i}\}_{i\in [m]}$ and $\|\nabla \Phi_\lambda(A_t)\|_F$, $\{z_{t+1,i}\}_{i\in [m]}$ can be obtained in $O(B)$-time.

Therefore, we preprocess $z_{1,i}=u_i^\top A_1u_i-b_i$ for all $i\in [m]$ in $O(mn^2)$-time. Then, in the $t$-th iteration ($t>0$), we first compute 
\begin{align*}
    \nabla \Phi_{\lambda} (A_t, {\cal B}_t) = \frac{m}{B}\sum_{i\in {\cal B}_t}u_iu_i^\top \lambda \sinh(\lambda z_{t,i})
\end{align*}
in $O(Bn^2)$-time. Next, we compute $\|\nabla \Phi_\lambda(A_t)\|_F$ using $z_{t,i}$ in $O(m)$-time. $A_{t+1}$ can be obtained in $O(n^2)$-time. Finally, we use $O(B)$-time to update $\{z_{t+1,i}\}_{i\in [m]}$.

Hence, the total running time per iteration is
\begin{align*}
    O(Bn^2 + m + n^2 + B) = O(Bn^2).
\end{align*}
\end{proof}

\begin{algorithm}[t]\caption{Matrix Sensing by Stochastic Gradient Descent.}\label{alg:stochastic_gradient_descent}
\begin{algorithmic}[1]
\Procedure{SGD}{$\{u_i,b_i\}_{i\in [m]}$} \Comment{Theorem~\ref{thm:sgd_orthogonal}}
    \State $\tau \gets \max_{i \in [m]} b_i $
    \State $A_1 \gets \tau \cdot I$
    \State $z_i\gets u_i^\top A_1 u_i - b_i$ for $i\in [m]$
    \For{$t = 1 \to T$}
        \State Sample ${\cal B}_t \subset [m]$ of size $B$ uniformly at random
        \State $\nabla \Phi_{\lambda} (A_t, {\cal B}_t) \gets \frac{m}{B} \sum_{i \in {\cal B}_t} u_i u_i^\top \lambda \sinh( \lambda z_{i} ) $
        \State $\| \nabla \Phi_{\lambda} (A_t) \|_F\gets \left(\sum_{i=1}^m \lambda^2 \sinh^2 (\lambda z_{i})\right)^{1/2}$
        \State $A_{t+1} \gets A_t - \epsilon \cdot \nabla \Phi_{\lambda}(A_t,{\cal B} _t) / \| \nabla \Phi_{\lambda}(A_t) \|_F$
        \For{$i\in {\cal B}_t$}
            \State \hspace{-2mm} $z_i\gets z_i - \epsilon  \lambda m \sinh(\lambda z_{i}) / (\| \nabla \Phi_{\lambda}(A_t) \|_F B)$
        \EndFor
    \EndFor
    \State \Return $A_{T+1}$
\EndProcedure
\end{algorithmic}
\end{algorithm}

\subsection{Analysis of One Iteration}\label{sec:sgd_converge_1}

Suppose $A \in \R^{n \times n}$. Let ${\cal B}_t$ be a uniformly random $B$-subset of $[m]$ at the $t$-th iteration, where $B$ is a parameter.

We can compute the gradient of $\Phi_{\lambda}(A{ , {\cal B}})$ with respect to $A$ as follows:
\begin{align*}
    & ~ \nabla \Phi_{\lambda}(A{ , {\cal B}}) \\
    = & ~ {  \frac{m}{|{\cal B}|}\sum_{i \in {\cal B}}} u_i u_i^\top \lambda \sinh( \lambda (u_i^\top A u_i - b_i) ), 
\end{align*}
where $\nabla \Phi_{\lambda}(A{ , {\cal B}})\in \R^{n\times n}$.

We can also compute the Hessian of $\Phi_{\lambda}(A{ , {\cal B}})$ with respect to $A$ as follows:
\begin{align*}
     & ~\nabla^2 \Phi_{\lambda}(A{ , {\cal B}}) \\
    = & ~ {  \frac{m}{|{\cal B}|}\sum_{i \in {\cal B}}} ( u_i u_i^\top ) \otimes ( u_i u_i^\top ) \lambda^2 \cosh( \lambda (u_i^\top A u_i - b_i) ) 
\end{align*}
where $\nabla^2 \Phi_{\lambda}(A{ , {\cal B}})\in \R^{n^2\times n^2}$ and $\otimes$ is the Kronecker product.

{It is easy to see the expectations of the gradient and Hessian of $\Phi_\lambda(A,{\cal B})$ over a random set ${\cal B}$:
\begin{align*}
    & ~\E_{{\cal B} \sim [m]}[\nabla \Phi_{\lambda}(A, {\cal B})] =    \nabla \Phi_{\lambda}(A), \\
    & ~\E_{{\cal B} \sim [m]}[\nabla^2 \Phi_{\lambda}(A, {\cal B})] =    \nabla^2 \Phi_{\lambda}(A)
\end{align*}
}

\begin{lemma}[Expected progress on potential]\label{lem:stochastic_gradient_descent}
Given $m$ vectors $u_1, u_2, \cdots, u_m \in \R^n$. 
Assume $\langle u_i , u_j \rangle = 0 $ for any $i \neq j \in [m]$ and $\| u_i\|^2 = 1$, for all $i \in [m]$.
Let $\epsilon \lambda \leq 0.01 \frac{|{\cal B}_t|}{m}$, for all $t>0$.

Then, we have
\begin{align*}
    \E[\Phi_{\lambda} ( A_{t+1} )] \leq (1-0.9 \frac{ \lambda \epsilon }{\sqrt{m} }) \cdot { \Phi_{\lambda} (A_t )} +  \lambda \epsilon \sqrt{m}.
\end{align*}
 
\end{lemma}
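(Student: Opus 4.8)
The plan is to mimic the deterministic argument in Lemma~\ref{lem:gradient_descent}, but now taking expectation over the random batch ${\cal B}_t$ at the key step. Writing $z_{t,i}:=\lambda(u_i^\top A_t u_i - b_i)$ and $G_t:=\|\nabla\Phi_\lambda(A_t)\|_F$ for the \emph{full} gradient norm (this is the normalization actually used in the update), I would Taylor-expand $\Phi_\lambda$ along the step $A_{t+1}=A_t-\epsilon\, \nabla\Phi_\lambda(A_t,{\cal B}_t)/G_t$. By Taylor's theorem with integral remainder there is some $\xi$ on the segment between $A_t$ and $A_{t+1}$ such that
\begin{align*}
\Phi_\lambda(A_{t+1}) = \Phi_\lambda(A_t) - \frac{\epsilon}{G_t}\langle \nabla\Phi_\lambda(A_t), \vect(\nabla\Phi_\lambda(A_t,{\cal B}_t))\rangle + \frac{\epsilon^2}{2G_t^2}\, \vect(\nabla\Phi_\lambda(A_t,{\cal B}_t))^\top \nabla^2\Phi_\lambda(\xi)\, \vect(\nabla\Phi_\lambda(A_t,{\cal B}_t)).
\end{align*}
Taking $\E[\cdot]$ over ${\cal B}_t$: the linear term has expectation $-\frac{\epsilon}{G_t}\|\nabla\Phi_\lambda(A_t)\|_F^2 = -\epsilon G_t$, since $\E_{{\cal B}_t}[\nabla\Phi_\lambda(A_t,{\cal B}_t)] = \nabla\Phi_\lambda(A_t)$ as recorded in the excerpt. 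For the quadratic remainder I would first replace $\nabla^2\Phi_\lambda(\xi)$ by a bound involving $\nabla^2\Phi_\lambda(A_t)$: because $\|\xi-A_t\|\le \epsilon$ and $\cosh$ changes by at most a constant factor over an interval of length $O(\epsilon\lambda)$ — here we use the hypothesis $\epsilon\lambda\le 0.01\,|{\cal B}_t|/m\le 0.01$ — one gets $\nabla^2\Phi_\lambda(\xi)\preceq 2\,\nabla^2\Phi_\lambda(A_t)$ in the PSD order (on the relevant subspace), exactly as in the orthogonal case of Lemma~\ref{lem:gradient_descent}.

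Next I would bound $\E\big[\vect(\nabla\Phi_\lambda(A_t,{\cal B}_t))^\top \nabla^2\Phi_\lambda(A_t)\, \vect(\nabla\Phi_\lambda(A_t,{\cal B}_t))\big]$. Expanding the batch gradient as $\frac{m}{B}\sum_{i\in{\cal B}_t} u_iu_i^\top\lambda\sinh(z_{t,i})$ and using orthonormality of the $u_i$'s (so $u_iu_i^\top\otimes u_iu_i^\top$ and $\nabla^2\Phi_\lambda(A_t)$ are simultaneously diagonalized, and cross terms $i\neq j$ vanish — the analogue of Claim~\ref{cla:gd_Q2}), the quadratic form collapses to $\lambda^2\frac{m^2}{B^2}\sum_{i\in{\cal B}_t}\cosh(z_{t,i})\sinh^2(z_{t,i})$ plus cross terms that vanish in the diagonal basis. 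Taking expectation over the uniform $B$-subset turns $\frac{m^2}{B^2}\sum_{i\in{\cal B}_t}(\cdot)$ into $\frac{m}{B}\sum_{i=1}^m(\cdot)$, giving $\E[\text{quadratic form}]\le \lambda^2\frac{m}{B}\sum_{i=1}^m\cosh(z_{t,i})\sinh^2(z_{t,i})$, which by the same Cauchy–Schwarz / $\|x\|_4\le\|x\|_2$ manipulation as in Claim~\ref{cla:gd_Q1} is at most $\frac{m}{B}\big(\sqrt{m}+\frac1\lambda G_t\big)G_t^2$. Assembling the pieces,
\begin{align*}
\E[\Phi_\lambda(A_{t+1})] \le \Phi_\lambda(A_t) - \epsilon G_t + \frac{\epsilon^2}{G_t^2}\cdot\frac{m}{B}\Big(\sqrt{m}+\frac1\lambda G_t\Big)G_t^2 = \Phi_\lambda(A_t) - \epsilon G_t + \epsilon^2\frac{m}{B}\Big(\sqrt m + \frac{1}{\lambda}G_t\Big).
\end{align*}
Since $\epsilon\lambda\le 0.01\,B/m$, the term $\epsilon^2\frac{m}{B\lambda}G_t \le 0.01\,\epsilon G_t$, so this absorbs into the $-\epsilon G_t$ term leaving $-0.99\,\epsilon G_t$; similarly $\epsilon^2\frac{m}{B}\sqrt m\le 0.01\,\epsilon\lambda^{-1}\sqrt m\cdot\lambda\le \epsilon\lambda\sqrt m$ contributes the additive $\lambda\epsilon\sqrt m$ slack. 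Finally I would lower-bound the full gradient norm $G_t$ in terms of the potential: by Part 2 of Lemma~\ref{lem:property_sinh_cosh_scalar}, $G_t = \lambda\big(\sum_i\sinh^2(z_{t,i})\big)^{1/2} \ge \frac{\lambda}{\sqrt m}(\Phi_\lambda(A_t)-m)$, so $-0.99\,\epsilon G_t \le -0.99\,\frac{\epsilon\lambda}{\sqrt m}\Phi_\lambda(A_t) + 0.99\,\epsilon\lambda\sqrt m$; folding the constant and the leftover $+0.99\epsilon\lambda\sqrt m$ into the stated bound gives exactly $\E[\Phi_\lambda(A_{t+1})]\le (1-0.9\,\frac{\lambda\epsilon}{\sqrt m})\Phi_\lambda(A_t)+\lambda\epsilon\sqrt m$.

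The main obstacle is the quadratic remainder term: one has to be careful that the Hessian is evaluated at the random point $\xi$ (which depends on ${\cal B}_t$), so the naive $\E[\text{quadratic}]=\E[\vect(g_{{\cal B}})^\top \nabla^2\Phi_\lambda(A_t)\vect(g_{{\cal B}})]$ step requires first the deterministic sandwich $\nabla^2\Phi_\lambda(\xi)\preceq 2\nabla^2\Phi_\lambda(A_t)$ valid pointwise for every realization of ${\cal B}_t$ (using $\epsilon\lambda$ small), and only then taking expectation — and the expectation of $\frac{m^2}{B^2}\sum_{i\in {\cal B}_t}(\cdot)$ versus $\big(\E\frac{m}{B}\sum_{i\in{\cal B}_t}(\cdot)\big)$ must be computed honestly (the variance-like overcount is exactly the factor $m/B$ rather than $1$). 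Everything else is a direct reprise of the orthogonal-case calculations in Claims~\ref{cla:gd_Q1}–\ref{cla:gd_Q2} and Lemma~\ref{lem:property_sinh_cosh_scalar}, with the orthogonality of $\{u_i\}$ killing all cross terms.
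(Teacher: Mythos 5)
Your argument is essentially the paper's: a second-order Taylor bound around $A_t$, the identity $\E_{{\cal B}_t}[\nabla\Phi_\lambda(A_t,{\cal B}_t)]=\nabla\Phi_\lambda(A_t)$ for the linear term, the diagonal/off-diagonal split with the off-diagonal part vanishing by orthogonality (Claim~\ref{cla:gd_Q2}), the expectation $\E[\tilde Q_1]=\frac{B}{m}Q_1$ giving the $\frac{m}{B}$ overcount, and the gradient-norm lower bound from Lemma~\ref{lem:property_sinh_cosh_scalar}; the paper packages the Hessian-at-$A_t$ comparison into Corollary~\ref{cor:matrix_der_trace}, whereas you argue the pointwise sandwich $\nabla^2\Phi_\lambda(\xi)\preceq 2\nabla^2\Phi_\lambda(A_t)$ directly, which is the more honest way to handle the fact that $\xi$ (and hence the remainder term) depends on the random batch. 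One slip: you write $\|\xi-A_t\|\le\epsilon$, but the step is normalized by the \emph{full} gradient norm while the direction is the \emph{stochastic} gradient, so in the worst case $\|A_{t+1}-A_t\|=\epsilon\,\|\nabla\Phi_\lambda(A_t,{\cal B}_t)\|_F/\|\nabla\Phi_\lambda(A_t)\|_F$ can be as large as $\epsilon\,m/B$; the correct bound is $\lambda\|\xi-A_t\|\le\epsilon\lambda\,m/B\le 0.01$, which is exactly why the hypothesis must carry the factor $|{\cal B}_t|/m$ rather than the bare $\epsilon\lambda\le 0.01$ of the deterministic Lemma~\ref{lem:gradient_descent}. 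With that correction your derivation goes through unchanged and matches the paper's.
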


\begin{proof}
We first express the expectation as follows:
\begin{align}\label{eq:expectation_phi_update}
    & ~ \E_{A_{t+1}}[\Phi_{\lambda}(A_{t+1})] - \Phi_{\lambda} (A_t)\notag \\
    \leq & ~ \E_{A_{t+1}}[ \langle  \nabla \Phi_{\lambda} (A_t) ,  (A_{t+1} - A_t) \rangle ] \notag \\
    + & ~ O(1) \cdot \E_{A_{t+1}}[ \langle  \nabla^2 \Phi_{\lambda}(A_t) , (A_{t+1} - A_t) \otimes (A_{t+1} - A_t) \rangle ],
\end{align}
which follows from Corollary~\ref{cor:matrix_der_trace}.

We choose
\begin{align*}
    A_{t+1} = A_t - \epsilon \cdot \nabla \Phi_{\lambda}(A_t{ , {\cal B}_{t}}) / \| \nabla \Phi_{\lambda}(A_t) \|_F.
\end{align*}
Then, we can bound
\begin{align}\label{eq:sgd_tr_phi_At_first_moment} 
 & ~ \E_{A_{t+1}} [ -\tr [ \nabla \Phi_{\lambda} (A_t) \cdot (A_{t+1} - A_t) ] ] \notag \\
= & ~ \E_{{\cal B}_t}\Big[ \tr\Big[ \nabla \Phi_{\lambda}(A_t) \cdot \frac{\epsilon \nabla \Phi_{\lambda} (A_t,{\cal B}_t) }{ \| \nabla \Phi_{\lambda}(A_t) \|_F } \Big] \Big] \notag \\
= & ~ \epsilon \cdot \| \nabla \Phi_{\lambda} (A_t)  \|_F 
\end{align}

We define for $t>0$ and $i\in [m]$, 
\begin{align*}
    z_{t,i} := u_i^\top A_{t} u_i - b_i.
\end{align*}

We need to compute this $\Delta_2$. For simplificity, we consider $\Delta_2 \cdot \| \nabla \Phi_{\lambda} (A_t) \|_F^2$,
\begin{align}\label{eq:sgd_tr_phi_At_second_moment}
     = & ~  \tr[ \nabla^2 \Phi_{\lambda}(A_t) \cdot (A_{t+1} - A_t) \otimes (A_{t+1} - A_t) ] \cdot  \| \nabla \Phi_{\lambda}(A_t) \|_F^2 \notag\\
    = & ~ (\lambda \epsilon)^2 \cdot (\frac{m}{|{\cal B}_t|})^2  \cdot   \tr\Big[ \nabla^2 \Phi_{\lambda}(A_t) \cdot 
     ( \sum_{i\in {\cal B}_t} u_i u_i^\top  \sinh( z_{t,i} )    \otimes ( \sum_{i \in {\cal B}_t} u_i u_i^\top \sinh( z_{t,i} ) \Big].
\end{align}
Ignoring the scalar factor in the above equation, we have
\begin{align}
    = &~    \tr\Big[ \nabla^2 \Phi_{\lambda}(A_t) \cdot  ( \sum_{i,j \in B_t}  \sinh( z_{t,i} )\sinh( z_{t,i} ) \cdot   (u_i u_i^\top \otimes u_ju_j^\top) ) \Big] \notag\\
    = &    \tr\Big[ \nabla^2 \Phi_{\lambda}(A_t) \cdot ( \sum_{i \in B_t} \sinh^2( z_{t,i }) (u_i u_i^\top \otimes u_iu_i^\top) ) \Big] \notag\\
    + &    \tr\Big[ \nabla^2 \Phi_{\lambda}(A_t) \cdot ( \sum_{i\neq j \in B_t} \sinh( z_{t,i} )\sinh( z_{t,i} ) \cdot    (u_i u_i^\top \otimes u_j u_j^\top) ) \Big] \notag\\
    =: & ~    \wt{Q}_1 + \wt{Q}_2  ,
\end{align}
where the first step follows that we extract the scalar values from Kronecker product,
the second step comes from splitting into two partitions based on whether $i = j$, the third step comes from the definition of $\wt{Q}_1 $ and $\wt{Q}_2$ where $\wt{Q}_1$ denotes the diagonal term, and $\wt{Q}_2$ denotes the off-diagonal term.
Taking expectation, we have 
\begin{align}\label{eq:expectation_Delta_2}
    & ~ \E[ \Delta_2 \cdot \| \nabla \Phi_{\lambda} (A_t) \|_F^2 ] \notag\\
    = & ~ (\lambda \epsilon)^2 \cdot ( \frac{m}{|{\cal B}_t|} )^2 \E[ \wt{Q}_1] \notag\\
    = & ~ (\lambda \epsilon)^2 \cdot ( \frac{m}{|{\cal B}_t|} )^2 \cdot \frac{|{\cal B}_t|}{m} \cdot Q_1 \notag\\
    \leq & ~ (\lambda \epsilon)^2 \cdot \frac{m}{|{\cal B}_t|} 
      \cdot ( \sqrt{m} + \frac{1}{\lambda} \| \nabla \Phi_{\lambda}(A_t) \|_F ) \cdot \| \nabla \Phi_{\lambda} (A_t) \|_F^2
\end{align}
where the first step comes from extracting the constant terms from the expectation and Claim~\ref{cla:gd_Q2},
the second step follows that $\E[\wt{Q}_1] = \frac{|{\cal B}_t|}{m} \cdot Q_1$,
and the third step comes from the Claim~\ref{cla:gd_Q1}.
Therefore, we have:
\begin{align*}
    & ~ \E[ \Phi_{\lambda} (A_{t+1}) ] - \Phi_{\lambda} (A_t) \\
    \leq &  - \E[ \Delta_1 ]   + O(1) \cdot \E[ \Delta_2 ] \\ 
    \leq &  - \epsilon (1 - O(\epsilon \lambda) \cdot \frac{m}{|{\cal B}_t|} ) \| \nabla \Phi_{\lambda}(A_t) \|_F+ O(\epsilon \lambda)^2 \sqrt{m} \\
    \leq &  - 0.9 \epsilon  \| \nabla \Phi_{\lambda}(A_t) \|_F+ 
    O(\epsilon \lambda)^2 \sqrt{m} \\
    \leq &  -0.9 \epsilon \lambda \frac{1}{\sqrt{m}} ( \Phi_{\lambda}(A_t) - m ) +O(\epsilon \lambda)^2 \sqrt{m} \\
    \leq &   -0.9 \epsilon \lambda \frac{1}{\sqrt{m}} \Phi_{\lambda}(A_t) + \epsilon \lambda \sqrt{m},
\end{align*}
where the first step comes from Eq.~\eqref{eq:expectation_phi_update},
the second step comes from Eq.~\eqref{eq:sgd_tr_phi_At_first_moment} and Eq.~\eqref{eq:expectation_Delta_2},
the third step follows from $\epsilon \leq 0.01 \frac{|{\cal B}_t|}{\lambda m}$,
the forth step follows from Eq.~\eqref{eq:phi_At_F_norm},
and the last step follows from $\epsilon \lambda \in (0, 0.01)$.
\end{proof}

\subsection{Convergence for multiple iterations}\label{sec:sgd_converge_2}
The goal of this section is to prove the convergence of Algorithm~\ref{alg:stochastic_gradient_descent}. 
\begin{lemma}[Convergence of stochastic gradient descent]\label{lem:sgd_convergence}
Suppose the measurement vectors $\{u_i\}_{i\in [m]}$ are orthogonal unit vectors, and suppose $|b_i|$ is bounded by $R$ for $i\in [m]$. Then, for any $\delta \in (0,1)$, if we take $\lambda = \Omega(\delta^{-1}\log m)$ and $\epsilon=O(\lambda^{-1}m^{-1}B)$ in Algorithm~\ref{alg:stochastic_gradient_descent}, then for 
\begin{align*} 
T=\widetilde{\Omega}(m^{3/2}B^{-1}R\delta^{-1})
\end{align*}
iterations, with high probability, the solution matrix $A_T$ satisfies:
\begin{align*}
    | u_i^\top A_{T+1} u_i - b_i| \leq \delta~~~\forall i\in [m].
\end{align*}
\end{lemma}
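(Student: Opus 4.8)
The plan is to combine the one-iteration expected-progress bound (Lemma~\ref{lem:stochastic_gradient_descent}) with a potential-telescoping argument and a high-probability (martingale) concentration step. Set $\lambda = \Theta(\delta^{-1}\log m)$ and $\epsilon = \Theta(\lambda^{-1}m^{-1}B)$ so that the hypothesis $\epsilon\lambda \le 0.01\,|{\cal B}_t|/m = 0.01 B/m$ of Lemma~\ref{lem:stochastic_gradient_descent} is satisfied. First I would record the initialization: $A_1 = \tau I$ with $\tau = \max_i b_i$, so $z_{1,i} = \tau - b_i \in [0, 2R]$, hence each summand $\cosh(\lambda z_{1,i}) \le \cosh(2\lambda R)$ and $\Phi_\lambda(A_1) \le m\cosh(2\lambda R) = m\exp(O(\lambda R))$; taking logs, $\log \Phi_\lambda(A_1) = O(\lambda R + \log m)$.

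Next I would iterate the recursion from Lemma~\ref{lem:stochastic_gradient_descent}. Writing $a := 0.9\,\lambda\epsilon/\sqrt m$ and $b := \lambda\epsilon\sqrt m$, the bound $\E[\Phi_\lambda(A_{t+1})] \le (1-a)\E[\Phi_\lambda(A_t)] + b$ unrolls (using the tower property over the randomness of ${\cal B}_1,\dots,{\cal B}_t$) to $\E[\Phi_\lambda(A_{T+1})] \le (1-a)^T \Phi_\lambda(A_1) + b/a$. Here $b/a = (\lambda\epsilon\sqrt m)/(0.9\lambda\epsilon/\sqrt m) = m/0.9 = O(m)$, so the fixed point is $O(m)$ as claimed in the technique overview. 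To drive the transient term $(1-a)^T\Phi_\lambda(A_1)$ below $O(m)$, I need $T \gtrsim a^{-1}\log \Phi_\lambda(A_1) = \Theta(\sqrt m/(\lambda\epsilon)) \cdot O(\lambda R + \log m)$. Substituting $\epsilon = \Theta(\lambda^{-1}m^{-1}B)$ gives $a^{-1} = \Theta(\sqrt m \cdot \lambda m/(\lambda B)) $... more carefully, $\lambda\epsilon = \Theta(m^{-1}B)$, so $a^{-1} = \Theta(\sqrt m\, m/B) = \Theta(m^{3/2}/B)$, and with the $\log\Phi_\lambda(A_1) = O(\lambda R + \log m) = \wt O(R\delta^{-1})$ factor this yields $T = \wt\Omega(m^{3/2}B^{-1}R\delta^{-1})$, matching the statement.

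The step I expect to be the main obstacle is upgrading the in-expectation bound to a \emph{with high probability} guarantee, since Algorithm~\ref{alg:stochastic_gradient_descent} is randomized and Lemma~\ref{lem:stochastic_gradient_descent} only controls $\E[\Phi_\lambda(A_{t+1})]$. The clean way is to observe that $\E[\Phi_\lambda(A_{T+1})] = O(m)$ together with Markov's inequality gives $\Phi_\lambda(A_{T+1}) = O(m/\eta)$ with probability $1-\eta$ for any constant $\eta$; to boost to high probability one can either run $O(\log(1/p))$ independent copies and take the best (checking each output by computing all $z_{i}$ in $O(mn^2)$ time, a one-time cost), or — more in the spirit of the ``randomized potential function'' remark in the introduction — define a supermartingale from $\Phi_\lambda(A_t)$, bound the per-step increments (the second-moment term is controlled because $\|A_{t+1}-A_t\|_F = \epsilon$ exactly and $\epsilon\lambda$ is tiny), and apply a Freedman/Azuma-type inequality. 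I would go with the former for brevity. Finally, once $\Phi_\lambda(A_{T+1}) \le C m$ holds, I would convert it to the per-measurement guarantee exactly as in the gradient-descent case: for each $i\in[m]$, $\cosh(\lambda(u_i^\top A_{T+1}u_i - b_i)) \le \Phi_\lambda(A_{T+1}) \le Cm$, so $\lambda|u_i^\top A_{T+1}u_i - b_i| \le \mathrm{arccosh}(Cm) = O(\log m)$, and since $\lambda = \Omega(\delta^{-1}\log m)$ with a large enough constant, $|u_i^\top A_{T+1}u_i - b_i| \le \delta$ for all $i\in[m]$.
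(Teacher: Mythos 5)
Your proposal matches the paper's proof in structure: bound $\Phi_\lambda(A_1)\le 2^{O(\lambda R)}$ from the initialization $A_1=\tau I$, unroll the one-step expected-decay bound of Lemma~\ref{lem:stochastic_gradient_descent} via the tower property to get $\E[\Phi_\lambda(A_{T+1})]=O(m)$ at $T=\wt\Omega(m^{3/2}B^{-1}R\delta^{-1})$, and invert $\cosh$ to read off the per-measurement guarantee. You rightly flag the expectation-to-high-probability step, which the paper glosses over (it moves from $\E[\Phi]\le O(m)$ to ``$\Phi(A_{T+1})\le O(m)$ w.h.p.'' without comment), but the amplification-with-verification and Freedman/Azuma machinery you propose are heavier than necessary. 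Because the final conversion is logarithmic, a bare Markov bound with a polynomial threshold already gives high probability: $\Pr[\Phi_\lambda(A_{T+1})>m^2]\le \E[\Phi_\lambda(A_{T+1})]/m^2=O(1/m)$, and on the complementary event every summand satisfies $\cosh(\lambda(u_i^\top A_{T+1}u_i-b_i))\le m^2$, hence $\lambda|u_i^\top A_{T+1}u_i-b_i|\le \cosh^{-1}(m^2)=O(\log m)$; taking $\lambda=\Omega(\delta^{-1}\log m)$ with a slightly larger constant absorbs the extra factor and still yields $|u_i^\top A_{T+1}u_i-b_i|\le\delta$ for all $i$. In short, you can afford a $\poly(m)$ blowup of the potential at the cost of only a constant in $\lambda$, so no independent repetitions or martingale concentration are needed; the rest of your argument, including the arithmetic $T\asymp a^{-1}\log\Phi_\lambda(A_1)\asymp (m^{3/2}/B)\cdot\wt O(R\delta^{-1})$ and the fixed point $b/a=O(m)$, is correct and is what the paper does.
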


\begin{proof}
Similar to the proof of Lemma~\ref{lem:gd_convergence}, we can bound the initial potential by:
\begin{align*}
    \Phi(A_1) \leq 2^{O(\lambda R)}.
\end{align*}

In the following iterations, by Lemma~\ref{lem:stochastic_gradient_descent}, we have
\begin{align*}
    \E[\Phi_{\lambda} ( A_{t+1} )] \leq (1-0.9 \frac{ \lambda \epsilon }{\sqrt{m} }) \cdot { \Phi_{\lambda} (A_t )} +  \lambda \epsilon \sqrt{m},
\end{align*}
as long as $\epsilon \leq 0.01\frac{|B_t|}{\lambda m}$, where $B_t$ is a uniformly random subset of $[m]$ of size $B$.

It suffices to take $\epsilon = O(\lambda^{-1} m^{-1}B)$.

Now, we can apply Lemma~\ref{lem:stochastic_gradient_descent} for $T$ times and get that
\begin{align*}
    \E[\Phi(A_{T+1})]
    \leq 2^{-\Omega( T  \epsilon \lambda / \sqrt{m} ) + O(\lambda R)} + 2  m.
\end{align*}
By taking 
\begin{align*}
T=\widetilde{\Omega}(m^{3/2}B^{-1}R\delta^{-1}),
\end{align*}
we have
\begin{align*}
    \Phi(A_{T+1}) \leq O(m)
\end{align*}
holds with high probability. By the same argument as in the proof of Lemma~\ref{lem:gd_convergence}, we have
\begin{align*}
    | u_i^\top A_{T+1} u_i - b_i| \leq \delta~\forall i\in [m].
\end{align*}
The lemma is thus proved.
\end{proof}
\ifdefined\isarxiv
\bibliographystyle{alpha}
\bibliography{ref}
\else
\bibliography{ref}
\bibliographystyle{icml2023}
\fi

\appendix
\onecolumn
\section*{Appendix}
\paragraph{Roadmap.} We first provide the proofs for matrix hyperbolic functions and properties of $\sinh$ and $\cosh$ in Appendix~\ref{sec:preli_app}. Then we provide the  proofs for the gradient descent and stochastic gradient descent convergence analysis in Appendix~\ref{sec:gd_sgd_missing_proofs}. We consider the spectral potential function with ground-truth oracle scenario in Appendix~\ref{sec:potential_ground_truth_app}. We analyze the gradient descent with non-orthogonal measurements in Appendix~\ref{sec:gd_non_orthogonal}. We provide the cost-per-iteration analysis for stochastic gradient descent under non-orthogonal measurements in Appendix~\ref{sec:sgd_general}.

\section{Proofs of Preliminary Lemmas}\label{sec:preli_app}

\subsection{Calculus tools}
We state a useful calculus tool from prior work,
\begin{lemma}[Proposition 3.1 in \cite{jn08}]\label{lem:jn08}
Let $\Delta$ be an open interval on the axis, and $f$ be $C^2$ function on $\Delta$ such that for certain $\theta_{\pm}, \mu_{\pm} \in \R$ one has
\begin{align*}
    & ~ \forall (a<b, a,b \in \Delta) : \\
    & ~ \theta_- \cdot \frac{ f''(a) + f''(b) }{2} + \mu_- \leq \frac{f'(b) - f'(a)}{b-a} \\
    & ~ \frac{f'(b) - f'(a)}{b-a} \leq \theta_+ \cdot \frac{f''(a) + f''(b)}{2} + \mu_+,
\end{align*}
where $f'$ and $f''$ means the first- and second-order derivatives of $f$, respectively.

Let, further, ${\cal X}_n (\Delta)$ be the set of all $n \times n$ symmetric matrices with eigenvalues belonging to $\Delta$. Then ${\cal X}_n(\Delta)$ is an open convex set in the space $S^n$ of $n \times n$ symmetric matrices, the function
\begin{align*}
    F(X) = \tr [ f(X) ] : {\cal X}_n (\Delta) \rightarrow \R  
\end{align*}
is $C^2$, and for every $X \in {\cal X}_n (\Delta)$ and every $H \in S^n$ one has
\begin{align*}
    & ~ \theta_- \cdot \tr[ H f''(X) H ] + \mu_- \cdot \tr[ H^2 ] \leq D^2 F(X) [H,H] \\
   & ~ D^2 F(X) [H,H] \leq \theta_+ \cdot \tr[ H f''(X) H ] + \mu_+ \cdot \tr[H^2],
\end{align*}
where $D$ means directional derivative.
\end{lemma}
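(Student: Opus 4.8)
The plan is to diagonalize and reduce the asserted matrix inequality to the scalar hypothesis applied once for each pair of eigenvalues of $X$, via the Daleckii--Krein (divided-difference) formula for the second derivative of $F(X)=\tr[f(X)]$.

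First I would dispatch the structural claims about ${\cal X}_n(\Delta)$. Writing $\Delta=(\alpha,\beta)$, a symmetric matrix $X$ lies in ${\cal X}_n(\Delta)$ iff $\alpha I \prec X \prec \beta I$; openness is then immediate since eigenvalues depend continuously on $X$, and convexity follows because $\alpha I \prec X,Y \prec \beta I$ forces $\alpha I \prec \theta X+(1-\theta)Y \prec \beta I$ for all $\theta\in[0,1]$. That $F$ is $C^2$ on ${\cal X}_n(\Delta)$ follows from $f\in C^2(\Delta)$ together with the smoothness of the spectral functional calculus: one can verify this for polynomials directly and pass to the limit uniformly on a neighborhood of the spectrum of $X$, or quote the corresponding statement about Fréchet derivatives of matrix functions from standard matrix-analysis references.

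The computational core is the following. Fix $X\in {\cal X}_n(\Delta)$ with spectral decomposition $X=Q\Lambda Q^\top$, $\Lambda=\mathrm{diag}(\lambda_1,\dots,\lambda_n)$, and set $\wh H := Q^\top H Q$, which is again real symmetric; I will express all matrices below in this eigenbasis. One has $DF(X)[H]=\tr[f'(X)H]$, hence $D^2F(X)[H,H]=\tr[(Df'(X)[H])\,H]$, and the Daleckii--Krein formula for the matrix function $f'(\cdot)$ gives $(Df'(X)[H])_{ij}=\psi(\lambda_i,\lambda_j)\,\wh H_{ij}$, where $\psi(\lambda,\mu):=\frac{f'(\lambda)-f'(\mu)}{\lambda-\mu}$ for $\lambda\ne\mu$ and $\psi(\lambda,\lambda):=f''(\lambda)$ is the first divided difference of $f'$. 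Since $\wh H$ is symmetric this yields
$$D^2F(X)[H,H]=\sum_{i,j=1}^n \psi(\lambda_i,\lambda_j)\,\wh H_{ij}^2 .$$
Because $f''(X)=Q f''(\Lambda)Q^\top$ is diagonal in the same basis, the two comparison quantities admit the parallel expansions
$$\tr[Hf''(X)H]=\sum_{i,j=1}^n \frac{f''(\lambda_i)+f''(\lambda_j)}{2}\,\wh H_{ij}^2, \qquad \tr[H^2]=\sum_{i,j=1}^n \wh H_{ij}^2 .$$

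Finally I would invoke the hypothesis termwise. The quantities $\psi(\lambda_i,\lambda_j)$ and $\tfrac{f''(\lambda_i)+f''(\lambda_j)}{2}$ are symmetric under $i\leftrightarrow j$, so for each pair with $\lambda_i\ne\lambda_j$ the two-sided scalar bound, applied with $\{a,b\}=\{\lambda_i,\lambda_j\}$, gives $\theta_-\tfrac{f''(\lambda_i)+f''(\lambda_j)}{2}+\mu_-\le \psi(\lambda_i,\lambda_j)\le \theta_+\tfrac{f''(\lambda_i)+f''(\lambda_j)}{2}+\mu_+$, while letting $b\to a$ in the hypothesis shows the same inequality when $\lambda_i=\lambda_j$. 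Multiplying through by $\wh H_{ij}^2\ge 0$, summing over $i,j$, and substituting the three displayed identities produces exactly $\theta_-\tr[Hf''(X)H]+\mu_-\tr[H^2]\le D^2F(X)[H,H]\le \theta_+\tr[Hf''(X)H]+\mu_+\tr[H^2]$. The main obstacle is the step establishing the divided-difference representation of $D^2F(X)[H,H]$: at matrices $X$ with repeated eigenvalues the formal quotients $\psi(\lambda_i,\lambda_j)$ are singular, and one must appeal to Daleckii--Krein perturbation theory (or an approximation argument through analytic $f$) to see that the trace of the second Fréchet derivative is nevertheless given by this well-defined symmetric form.
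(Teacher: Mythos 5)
The paper does not prove Lemma~\ref{lem:jn08}; it is quoted verbatim as Proposition~3.1 of \cite{jn08} and used as a black box, so there is no in-paper proof to compare against. Your blind proof is correct and is the standard argument one would expect behind such a result: reduce $D^2F(X)[H,H]$ to a divided-difference quadratic form $\sum_{i,j}\psi(\lambda_i,\lambda_j)\widehat{H}_{ij}^2$ via the Daleckii--Krein formula for $Df'(X)[H]$, write $\tr[Hf''(X)H]$ and $\tr[H^2]$ as the parallel sums $\sum_{i,j}\tfrac{f''(\lambda_i)+f''(\lambda_j)}{2}\widehat{H}_{ij}^2$ and $\sum_{i,j}\widehat{H}_{ij}^2$ in the same eigenbasis, and apply the scalar hypothesis pairwise. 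The one point worth making sharper is the coincident-eigenvalue case: you only need that $f'$ is $C^1$ (which $f\in C^2$ gives) for the Daleckii--Krein formula with $\psi(\lambda,\lambda)=f''(\lambda)$ to hold directly, so the appeal to an approximation through analytic $f$ is unnecessary, and the termwise bound at $\lambda_i=\lambda_j$ is just the $b\to a$ limit of the hypothesis using continuity of $f''$.
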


We will use below corollary to compute the trace with a map $f: \R\rightarrow \R $.
\begin{corollary}\label{cor:matrix_der_trace}
Let $f:\R\rightarrow \R$ be a $C^2$ function. Let $A$ and $B$ be two symmetric matrices. We have
\begin{align*}
    \tr[ f(A) ] 
   \leq & ~ \tr[ f(B) ] + \tr[ f'(B) ( A - B ) ] \\
    & ~ + O(1) \cdot \tr[ f''(B) ( A - B )^2 ]. 
\end{align*}
\end{corollary}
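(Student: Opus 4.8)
\textbf{Proof plan for Corollary~\ref{cor:matrix_der_trace}.}
The plan is to invoke Lemma~\ref{lem:jn08} with the particular choice of parameters that makes the second-order Taylor estimate work, and then integrate. First I would set $F(X) = \tr[f(X)]$ and consider the line segment $X(s) = B + s(A-B)$ for $s \in [0,1]$; since both $A$ and $B$ are symmetric, so is every $X(s)$, and (assuming their eigenvalues lie in the relevant interval $\Delta$ on which $f$ is $C^2$) the whole segment stays inside the convex set ${\cal X}_n(\Delta)$. By the exact second-order Taylor expansion with integral remainder,
\begin{align*}
    F(A) = F(B) + DF(B)[A-B] + \int_0^1 (1-s)\, D^2 F(X(s))[A-B, A-B]\, \d s.
\end{align*}
The first derivative term is standard: $DF(B)[A-B] = \tr[f'(B)(A-B)]$. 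So everything reduces to bounding the integral of the second directional derivative $D^2F(X(s))[H,H]$ with $H := A-B$.

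Next I would control $D^2 F(X(s))[H,H]$ using the upper inequality of Lemma~\ref{lem:jn08}. The key is to choose $\theta_+, \mu_+$ so that $\frac{f'(b)-f'(a)}{b-a} \le \theta_+ \cdot \frac{f''(a)+f''(b)}{2} + \mu_+$ holds for all $a < b$ in $\Delta$; for the functions of interest here ($f = \cosh$, appropriately scaled) one can take $\mu_+ = 0$ and $\theta_+$ an absolute constant (indeed for $\cosh$ one has $\frac{\sinh b - \sinh a}{b-a} \le \frac{\cosh a + \cosh b}{2}$ by convexity-type arguments, so $\theta_+ = 1$ suffices). This gives $D^2 F(X(s))[H,H] \le \theta_+ \cdot \tr[H f''(X(s)) H]$. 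Then I would use the fact that along the segment $f''(X(s))$ can be compared to $f''(B)$ up to a constant factor — this is where $f = \cosh$ being the relevant case matters, since $\cosh$ evaluated at nearby points differs by at most a constant multiplicative factor when $\|A - B\|$ is bounded (which holds in our application because the gradient step is small, $\epsilon\lambda \le c$). Plugging in and integrating $\int_0^1 (1-s)\,\d s = 1/2$ absorbs into the $O(1)$, yielding $F(A) \le F(B) + \tr[f'(B)(A-B)] + O(1)\cdot \tr[f''(B)(A-B)^2]$.

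The main obstacle I anticipate is the step comparing $f''(X(s))$ to $f''(B)$: Lemma~\ref{lem:jn08} as stated bounds $D^2F$ in terms of $\tr[H f''(X) H]$ at the \emph{same} point $X$, but the corollary wants the bound anchored at $B$. For general $f$ this comparison simply fails, so the corollary is really only valid when $f$ is well-behaved (e.g. $f'' \ge 0$ and slowly varying, as with $\cosh$) and when $A$ is close to $B$ — which is exactly the regime of the algorithm's update rule. I would therefore either (i) state and use the comparison $\tr[H f''(X(s)) H] \le C \cdot \tr[H f''(B) H]$ as a consequence of $\|X(s) - B\| \le \|A - B\|$ being small together with the specific form of $f$, or (ii) more cleanly, apply Lemma~\ref{lem:jn08} directly with $X = B$ and $H = A - B$ by recognizing that $F(A) - F(B) - DF(B)[A-B]$ is itself expressible via a single application of the second-order bound at $B$ when $f$ is convex — this is the interpretation I'd expect the authors intend, since in all their applications $f'' = \cosh > 0$ and the $O(1)$ constant hides the segment integration. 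I would present option (ii) as the primary argument and note the smallness-of-step hypothesis is implicitly in force.
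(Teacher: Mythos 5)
The paper offers no proof of Corollary~\ref{cor:matrix_der_trace}; it is stated bare, as an immediate consequence of Lemma~\ref{lem:jn08}. Your plan reconstructs the intended derivation (integral-form Taylor remainder along the segment $X(s)=B+s(A-B)$, with the second directional derivative controlled by Lemma~\ref{lem:jn08}'s upper bound $D^2F(X)[H,H]\le \theta_+\tr[Hf''(X)H]+\mu_+\tr[H^2]$), and you correctly put your finger on the real issue: the lemma bounds the Hessian of $F=\tr[f(\cdot)]$ \emph{at the evaluation point} $X(s)$, whereas the corollary anchors the second-order term at $B$, so some comparison $\tr[Hf''(X(s))H]\lesssim \tr[Hf''(B)H]$ is required. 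This is not a cosmetic quibble: the corollary as literally stated is false for general $C^2$ $f$ (e.g.\ $f(x)=x^3$, $B=0$, $A\succ 0$ gives $\tr[A^3]>0$ but the claimed RHS is $0$). It only goes through when $f''$ is positive and slowly varying along the segment and $\|A-B\|$ is bounded, which is exactly the regime enforced by the algorithm's step size $\epsilon\lambda\le c$ and the choice $f=\cosh$ (where $\cosh(x+h)\le e^{|h|}\cosh(x)$). So your option (i) is the honest repair.

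One correction to the proposal: your option (ii) — that $F(A)-F(B)-DF(B)[A-B]$ can be captured by ``a single application of the second-order bound at $B$'' when $f$ is convex — is not valid. Lemma~\ref{lem:jn08} gives a pointwise bound on $D^2F(X)$ at a given $X$; it says nothing about the Taylor remainder between two distinct points unless you integrate (or mean-value) over the segment, which reintroduces the dependence on intermediate $X(s)$ and hence the comparison issue. Convexity alone gives a lower bound $F(A)\ge F(B)+DF(B)[A-B]$, not an upper bound in terms of $D^2F(B)$. Stick with option (i), explicitly recording the needed hypothesis (e.g.\ $\sup_{s\in[0,1]} f''(\cdot)$ along the segment is $O(1)$ times $f''(B)$ in the Loewner sense on the eigenvalues of $X(s)$, which holds for $f=\cosh$ with $\|A-B\|=O(1)$). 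That gives a correct and complete proof where the paper gave none, and it surfaces an imprecision in the corollary's hypotheses that the authors glossed over.
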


\subsection{Kronecker product}
Suppose we have two matrice $A \in \R^{m \times n}$ and $B \in \R^{p \times q}$, we use $A \otimes B$ denote the Kronecker product:
\begin{align*}
    A \otimes B = \left[\begin{array}{ccc}
A_{1,1} {B} & \cdots & A_{1,n} {B} \\
\vdots & \ddots & \vdots \\
A_{m,1} {B} & \cdots & A_{m,n} {B}
\end{array}\right].
\end{align*}

We state a fact and delay the proof into Section~\ref{sec:preli_app}.
\begin{fact}\label{fac:kronecker_product}
Suppose we have two matrices $A \in \R^{m \times n}$ and $B \in \R^{n \times k}$, we have
\begin{align*}
    (A \otimes B) \cdot (C \otimes D) = (AC) \otimes (BD).
\end{align*}
\end{fact}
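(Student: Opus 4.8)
The plan is to verify the mixed-product identity block by block, directly from the definition of the Kronecker product. First I would fix compatible dimensions --- say $A \in \R^{m \times n}$, $C \in \R^{n \times r}$ so that $AC$ is defined, and $B \in \R^{p \times q}$, $D \in \R^{q \times s}$ so that $BD$ is defined --- and note that then $A \otimes B \in \R^{mp \times nq}$, $C \otimes D \in \R^{nq \times rs}$, so that both $(A \otimes B)(C \otimes D)$ and $(AC) \otimes (BD)$ are $mp \times rs$ matrices and it makes sense to compare them. (The statement as written uses slightly informal dimension labels; I would just state it for generic compatible shapes.)

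Next I would write out the block structure. By definition $A \otimes B$ is the $m \times n$ array of blocks whose $(i,j)$ block is the $p \times q$ matrix $A_{i,j} B$, and $C \otimes D$ is the $n \times r$ array of blocks whose $(j,k)$ block is the $q \times s$ matrix $C_{j,k} D$. The column block-partition of $A \otimes B$ into $n$ groups of $q$ columns matches the row block-partition of $C \otimes D$ into $n$ groups of $q$ rows, so ordinary block matrix multiplication applies, and the $(i,k)$ block of $(A \otimes B)(C \otimes D)$ equals $\sum_{j=1}^n (A_{i,j} B)(C_{j,k} D)$.

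The key step is then to pull the scalars out: since $A_{i,j}$ and $C_{j,k}$ are scalars, $(A_{i,j} B)(C_{j,k} D) = A_{i,j} C_{j,k} (BD)$, hence $\sum_{j=1}^n (A_{i,j} B)(C_{j,k} D) = \big(\sum_{j=1}^n A_{i,j} C_{j,k}\big) BD = (AC)_{i,k}\,(BD)$, which is precisely the $(i,k)$ block of $(AC) \otimes (BD)$. Since every block of the two matrices agrees, they are equal. I do not expect any genuine obstacle here; the only thing requiring care is the dimension bookkeeping that legitimizes the block multiplication, and the rest follows immediately from scalar-linearity of the Kronecker product in each block.
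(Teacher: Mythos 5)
Your proposal is correct and follows essentially the same route as the paper: both expand $A\otimes B$ and $C\otimes D$ into their block arrays, perform block multiplication so that the $(i,k)$ block is $\sum_j (A_{i,j}B)(C_{j,k}D)$, pull out the scalars to obtain $(AC)_{i,k}\,BD$, and identify the result with $(AC)\otimes(BD)$. You additionally clean up the dimension bookkeeping (the paper's stated shapes for $A$, $B$ and the unstated shapes for $C$, $D$ are a bit informal), but the argument is the same.
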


\begin{proof}
From the definition of Kronecker product we have:
\begin{align*}
    &~ (A \otimes B) \cdot (C \otimes D) \\
    = &~ {\left[\begin{array}{ccc}
A_{1,1} B & \ldots & A_{1,n} B \\
\vdots & \ddots & \vdots \\
A_{m,1} B & \ldots & A_{m,n} B
\end{array}\right]\left[\begin{array}{ccc}
C_{1,1} D & \ldots & C_{1,k} D \\
\vdots & \ddots & \vdots \\
C_{n,1} D & \ldots & C_{n,k} D
\end{array}\right] } \\
   = &~ \left[\begin{array}{ccc}
(\sum_{i=1}^{n}A_{1,i}C_{i,1}) {BD} & \cdots & (\sum_{i=1}^{n}A_{1,i}C_{i,k}) {BD} \\
\vdots & \ddots & \vdots \\
(\sum_{i=1}^{n}A_{m,i}C_{i,1}) {BD} & \cdots & (\sum_{i=1}^{n}A_{m,i}C_{i,k}){BD}
\end{array}\right] \\
= &~ \left[\begin{array}{ccc}
(AC)_{1,1} BD & \cdots & (AC)_{1,k} BD\\
\vdots & \ddots & \vdots \\
(AC)_{m,1} BD & \cdots & (AC)_{m,k}BD
\end{array}\right] \\
= &~ (AC) \otimes (BD)
\end{align*}
Thus we complete the proof.
\end{proof}

\subsection{Proof of \texorpdfstring{$\cosh(A)$}{} upper bound}\label{sec:cosh_bound_proof}

\begin{lemma}[Restatement of Lemma~\ref{lem:cosh_bound}]\label{lem:cosh_bound_app}
Let $A$ be a real symmetric matrix, then we have
\begin{align*}
    \|\cosh(A)\| = \cosh(\|A\|) \leq \tr[\cosh(A)].
\end{align*}
We also have $\|A\| \leq 1+\log(\tr[\cosh(A)])$.
\end{lemma}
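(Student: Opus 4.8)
The plan is to reduce everything to the eigenvalues of $A$ via the matrix-function definition. Write the eigendecomposition $A = Q\Lambda Q^{-1}$ with $Q$ orthogonal and $\Lambda = \mathrm{diag}(\lambda_1,\dots,\lambda_n)$, $\lambda_i \in \R$. By definition of $\cosh(A)$ as a matrix function, $\cosh(A) = Q\cosh(\Lambda)Q^{-1}$, so the eigenvalues of $\cosh(A)$ are exactly $\cosh(\lambda_1),\dots,\cosh(\lambda_n)$, each of which is $\geq 1$ since $\cosh(x)\geq 1$ for all real $x$. In particular $\cosh(A)$ is positive definite and its spectral norm equals its largest eigenvalue.

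Next I would establish the first chain of inequalities. Since $\|A\| = \max_{i\in[n]}|\lambda_i|$ and $\cosh$ is even and nondecreasing on $[0,\infty)$, we get
\begin{align*}
\|\cosh(A)\| = \max_{i\in[n]}\cosh(\lambda_i) = \cosh\Big(\max_{i\in[n]}|\lambda_i|\Big) = \cosh(\|A\|).
\end{align*}
For the remaining inequality $\cosh(\|A\|)\leq \tr[\cosh(A)]$, note that $\tr[\cosh(A)] = \sum_{i=1}^n \cosh(\lambda_i)$ is a sum of terms each $\geq 1 > 0$, hence it dominates any single term, in particular the largest one $\max_i\cosh(\lambda_i) = \cosh(\|A\|)$.

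Finally, for the bound $\|A\|\leq 1+\log(\tr[\cosh(A)])$, I would combine the two facts just proved with the elementary estimate $\cosh(x) = \tfrac12(e^x+e^{-x})\geq \tfrac12 e^{x}$ for $x\geq 0$ applied at $x=\|A\|$:
\begin{align*}
\tr[\cosh(A)] \geq \cosh(\|A\|) \geq \tfrac12 e^{\|A\|}.
\end{align*}
Taking logarithms gives $\log(\tr[\cosh(A)]) \geq \|A\| - \log 2 \geq \|A\| - 1$, which rearranges to the claim. There is essentially no hard step here; the only thing to be careful about is invoking the matrix-function definition correctly so that spectral norm and trace of $\cosh(A)$ genuinely decouple into scalar statements about the eigenvalues, and noting positivity of all the $\cosh(\lambda_i)$ so the ``max $\leq$ sum'' step is valid.
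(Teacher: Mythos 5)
Your proof is correct and follows essentially the same route as the paper: reduce spectral-norm and trace statements to the eigenvalues of $A$ via the matrix-function definition, use positivity of $\cosh(\lambda_i)$ for the max-versus-sum step, and finish with $\cosh(x)\geq \tfrac12 e^x$ followed by taking logarithms. No gaps.
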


\begin{proof}
Note that for each eigenvalue $\lambda$ of $A$, we know that it corresponds to $\cosh(\lambda)$ for $\cosh(A)$. The second inequality follows from the fact that $\cosh(A)$ is psd.

For the second part, we know that $\exp(x)/2\leq \cosh(x)$, hence, $\exp(\|A\|)/2\leq \cosh(\|A\|)$, and
\begin{align*}
    \|A\| = & ~ \log(\exp(\|A\|)) \\
    \leq & ~ \log(2\cosh(\|A\|)) \\
    \leq & ~ 1+\log(\tr[\cosh(A)]),
\end{align*}
where the second step is by the monotonicity of $\log(\cdot)$ and $\exp(\|A\|)\leq 2\cosh(\|A\|)$, the last step is by $\cosh(\|A\|)\leq \tr[\cosh(A)]$.
\end{proof}
 
We state a fact as follows:  
\begin{fact}\label{fact:cosh_sinh_1}
For any real number $x$, $ \cosh^2(x) - \sinh^2(x) =1$
\end{fact}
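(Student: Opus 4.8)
The plan is to prove the identity $\cosh^2(x)-\sinh^2(x)=1$ directly from the exponential definitions given in the preliminaries, namely $\cosh(x)=\frac12(e^x+e^{-x})$ and $\sinh(x)=\frac12(e^x-e^{-x})$. First I would observe the algebraic factorization $\cosh^2(x)-\sinh^2(x)=(\cosh(x)+\sinh(x))(\cosh(x)-\sinh(x))$, which is just the difference-of-squares identity and requires no analysis. Then I would compute each factor from the definitions: $\cosh(x)+\sinh(x)=\frac12(e^x+e^{-x})+\frac12(e^x-e^{-x})=e^x$, and similarly $\cosh(x)-\sinh(x)=e^{-x}$. Multiplying these gives $e^x\cdot e^{-x}=e^0=1$, which finishes the argument.

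An equivalent route, if one prefers to avoid the factorization, is to expand the squares directly: $\cosh^2(x)=\frac14(e^{2x}+2+e^{-2x})$ and $\sinh^2(x)=\frac14(e^{2x}-2+e^{-2x})$, and subtract to obtain $\frac14\cdot 4=1$. Either way the computation is a one-line expansion, so there is no genuine obstacle here; the only thing to be careful about is using exactly the normalization $\frac12(e^x\pm e^{-x})$ fixed in Section~\ref{sec:preli}, so that the cross terms cancel with the right sign. I would present the factorization version since it is the cleanest and makes the cancellation transparent.
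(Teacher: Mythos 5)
Your proposal is correct; the paper itself uses the second route you mention (direct expansion of the squares: $\cosh^2(x)=\frac14(e^{2x}+2+e^{-2x})$, $\sinh^2(x)=\frac14(e^{2x}-2+e^{-2x})$, subtract to get $1$). Your preferred factorization $\cosh^2(x)-\sinh^2(x)=(\cosh(x)+\sinh(x))(\cosh(x)-\sinh(x))=e^x\cdot e^{-x}=1$ is an equally valid one-liner and, as you note, makes the cancellation a bit more transparent; the two are essentially the same computation and the difference is purely stylistic.
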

From the definition of $\cosh(x)$ and $\sinh(x)$ we have:
\begin{align*}
      &~ \cosh^2(x) - \sinh^2(x) \\
    = &~ \frac{1}{4}(e^{2x} + 2 + e^{-2x}) - \frac{1}{4}(e^{2x} - 2 + e^{-2x}) \\
    = &~ 1
\end{align*}

We also have the following lemma for matrix. 
\begin{lemma}\label{lem:cosh_sinh}
Let $A$ be a real symmetric matrix, then we have
\begin{align*}
    \cosh^2(A) - \sinh^2(A) = & ~ I.
\end{align*}
\end{lemma}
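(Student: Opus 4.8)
The plan is to prove the matrix identity $\cosh^2(A) - \sinh^2(A) = I$ by reducing it to the corresponding scalar identity (Fact~\ref{fact:cosh_sinh_1}) via the spectral decomposition of $A$. Since $A$ is real symmetric, write $A = Q\Lambda Q^{-1}$ with $\Lambda$ diagonal and (without loss of generality) $Q$ orthogonal. The key observation is that for any real function $f$, the matrix function satisfies $f(A) = Q f(\Lambda) Q^{-1}$, and moreover $f(A)g(A) = Q f(\Lambda) g(\Lambda) Q^{-1}$ because the $Q^{-1}Q$ in the middle cancels; in particular $\cosh(A)^2 = Q\cosh(\Lambda)^2 Q^{-1}$ and $\sinh(A)^2 = Q\sinh(\Lambda)^2 Q^{-1}$, where $\cosh(\Lambda)^2$ and $\sinh(\Lambda)^2$ are diagonal with entries $\cosh^2(\lambda_i)$ and $\sinh^2(\lambda_i)$.

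First I would invoke the definition of a matrix function to express $\cosh(A)$ and $\sinh(A)$ in terms of the eigendecomposition. Second, I would use the fact that products of functions of the same matrix act entrywise on the eigenvalues, so that $\cosh^2(A) - \sinh^2(A) = Q\,\big(\cosh^2(\Lambda) - \sinh^2(\Lambda)\big)\,Q^{-1}$. Third, I would apply Fact~\ref{fact:cosh_sinh_1} to each diagonal entry, concluding that $\cosh^2(\Lambda) - \sinh^2(\Lambda) = I_n$ since every diagonal entry equals $\cosh^2(\lambda_i) - \sinh^2(\lambda_i) = 1$. Finally, substituting back gives $Q\,I\,Q^{-1} = I$, which completes the proof.

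There is no real obstacle here: the only point requiring a little care is the step that products of matrix functions commute with the similarity transform, i.e. that $f(A)g(A) = Q f(\Lambda)g(\Lambda) Q^{-1}$; this follows immediately from $Q^{-1}Q = I$ once both factors are written in the $Q(\cdot)Q^{-1}$ form. Alternatively, one could avoid even that by defining $\cosh^2$ and $\sinh^2$ directly as scalar functions applied to $A$ and invoking the composition rule for matrix functions, but the factored form is the cleanest. The whole argument is two or three lines once the spectral setup is in place.
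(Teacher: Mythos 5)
Your proposal is correct and is essentially the same argument the paper gives: both write $A=Q\Lambda Q^{\top}$, pull $\cosh^2$ and $\sinh^2$ through the similarity transform, and apply the scalar identity $\cosh^2(x)-\sinh^2(x)=1$ entrywise on the diagonal of $\Lambda$. The only difference is that you spell out the intermediate cancellation $Q^{-1}Q=I$, which the paper leaves implicit.
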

\begin{proof}
Since $A$ is real symmetric, we write it in the eigendecomposition form: $A=U\Lambda U^\top$, then
\begin{align*}
    & ~ \cosh^2(A)-\sinh^2(A) \\
    = & ~ U\cosh^2(\Lambda)U^\top-U\sinh^2(\Lambda)U^\top \\
    = & ~ U(\cosh^2(\Lambda)-\sinh^2(\Lambda))U^\top \\
    = & ~ UU^\top \\
    = & ~ I,
\end{align*}
where the first step follows from $\cosh$ and $\sinh$ can be expressed as $\exp$, the third step is by applying entrywise the identity $\cosh^2(x)-\sinh^2(x)=1$.
\end{proof}

\section{Proofs of GD and SGD convergence}\label{sec:gd_sgd_missing_proofs}
In this section, we provide proofs of convergence analysis the gradient descent and stochastic gradient descent matrix sensing algorithms.
\subsection{Proof of GD Progress on Potential Function}\label{sec:proof_gradient_descent}
We start with the progress of the gradient on the potential function in below lemma.
\begin{lemma}[Restatement of Lemma~\ref{lem:gradient_descent}]
Assume that $u_i \perp u_j = 0 $ for any $i,j \in [m]$ and $\| u_i\|^2 = 1$. Let $c \in (0,1)$ denote a sufficiently small positive constant. Then, for any $\epsilon,\lambda>0$ such that $\epsilon\lambda \leq c$,

we have for any $t>0$,
\begin{align*}
    \Phi_{\lambda} ( A_{t+1} ) \leq (1-0.9 \frac{ \lambda \epsilon }{\sqrt{m} }) \cdot \Phi_{\lambda} (A_t) +  \lambda \epsilon \sqrt{m}
\end{align*}

\end{lemma}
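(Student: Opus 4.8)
The plan is to apply the second-order Taylor-type estimate from Corollary~\ref{cor:matrix_der_trace} to the trace function $F(A) = \tr[\cosh(\lambda(\cdot))]$ so that, writing $\Delta := A_{t+1}-A_t = -\epsilon\,\nabla\Phi_\lambda(A_t)/\|\nabla\Phi_\lambda(A_t)\|_F$, we get
\begin{align*}
\Phi_\lambda(A_{t+1}) \leq \Phi_\lambda(A_t) + \langle \nabla\Phi_\lambda(A_t),\Delta\rangle + O(1)\cdot \langle \nabla^2\Phi_\lambda(A_t), \Delta\otimes\Delta\rangle.
\end{align*}
The first-order term is exactly $-\epsilon\|\nabla\Phi_\lambda(A_t)\|_F$, since $\langle \nabla\Phi_\lambda(A_t),\Delta\rangle = -\epsilon\|\nabla\Phi_\lambda(A_t)\|_F^2/\|\nabla\Phi_\lambda(A_t)\|_F$. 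For the second-order term, I would substitute the explicit Hessian $\nabla^2\Phi_\lambda(A_t) = \lambda^2\sum_i (u_iu_i^\top)\otimes(u_iu_i^\top)\cosh(z_{t,i})$ and the explicit gradient $\nabla\Phi_\lambda(A_t) = \lambda\sum_i u_iu_i^\top\sinh(z_{t,i})$ (using $z_{t,i} := \lambda(u_i^\top A_t u_i - b_i)$), then expand $\Delta\otimes\Delta$. This expansion naturally splits into a diagonal part ($i=j$) and an off-diagonal part ($i\neq j$), which are precisely the quantities $Q_1$ (Eq.~\eqref{eq:def_Q1}) and $Q_2$ (Eq.~\eqref{eq:def_Q2}) referenced in Claims~\ref{cla:gd_Q1} and~\ref{cla:gd_Q2}. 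By Claim~\ref{cla:gd_Q2}, $Q_2 = 0$ because orthogonality of the $u_i$ forces $u_\ell^\top u_i = 0$ for one of $i,j$ in every cross term. By Claim~\ref{cla:gd_Q1}, $Q_1 \leq (\sqrt{m} + \lambda^{-1}\|\nabla\Phi_\lambda(A_t)\|_F)\cdot\|\nabla\Phi_\lambda(A_t)\|_F^2$, so the second-order term contributes at most $O(1)\cdot (\lambda\epsilon)^2\,(\sqrt{m}+\lambda^{-1}\|\nabla\Phi_\lambda(A_t)\|_F)$ after dividing by $\|\nabla\Phi_\lambda(A_t)\|_F^2$.

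Combining, $\Phi_\lambda(A_{t+1}) - \Phi_\lambda(A_t) \leq -\epsilon\|\nabla\Phi_\lambda(A_t)\|_F\big(1 - O(\epsilon\lambda)\big) + O((\epsilon\lambda)^2)\sqrt{m}$. Choosing $c$ small enough that $\epsilon\lambda\le c$ makes the factor $1-O(\epsilon\lambda) \geq 0.9$, giving $\Phi_\lambda(A_{t+1}) - \Phi_\lambda(A_t) \leq -0.9\,\epsilon\,\|\nabla\Phi_\lambda(A_t)\|_F + \epsilon\lambda\sqrt{m}$ (absorbing the $O((\epsilon\lambda)^2)$ into $\epsilon\lambda$ since $\epsilon\lambda<1$). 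The final step is to lower bound the gradient norm in terms of the potential itself: since $\|\nabla\Phi_\lambda(A_t)\|_F^2 = \lambda^2\sum_i\sinh^2(z_{t,i})$ (the $u_i$ being orthonormal eigenvectors), Part 2 of Lemma~\ref{lem:property_sinh_cosh_scalar} gives $\|\nabla\Phi_\lambda(A_t)\|_F \geq \lambda\cdot\frac{1}{\sqrt{m}}\big(\sum_i\cosh(z_{t,i}) - m\big) = \frac{\lambda}{\sqrt{m}}(\Phi_\lambda(A_t) - m)$. Substituting yields $\Phi_\lambda(A_{t+1}) \leq \Phi_\lambda(A_t) - 0.9\frac{\epsilon\lambda}{\sqrt{m}}(\Phi_\lambda(A_t)-m) + \epsilon\lambda\sqrt{m} = (1 - 0.9\frac{\lambda\epsilon}{\sqrt m})\Phi_\lambda(A_t) + 0.9\epsilon\lambda\sqrt{m} + \epsilon\lambda\sqrt{m}$, and bounding $0.9\epsilon\lambda\sqrt m + \epsilon\lambda\sqrt m$ crudely (or re-tracking constants) by $\epsilon\lambda\sqrt m$ — more carefully, one routes the constant so the additive term is exactly $\lambda\epsilon\sqrt m$ — gives the claimed inequality.

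The main obstacle I anticipate is the bookkeeping around the Kronecker-product Hessian expansion — correctly identifying that $\langle \nabla^2\Phi_\lambda(A_t),\Delta\otimes\Delta\rangle$ reduces to $\tr[\nabla^2\Phi_\lambda(A_t)\cdot(\sum_i u_iu_i^\top\sinh(z_{t,i}))\otimes(\sum_j u_ju_j^\top\sinh(z_{t,j}))]$ up to the scalar $(\lambda\epsilon)^2/\|\nabla\Phi_\lambda(A_t)\|_F^2$, and then cleanly invoking Fact~\ref{fac:kronecker_product} ($(A\otimes B)(C\otimes D) = (AC)\otimes(BD)$) together with orthonormality to collapse the double sum. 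Everything else (the first-order term, the scalar $\sinh$/$\cosh$ inequalities, the choice of $c$) is routine once that expansion is set up correctly; the only subtlety is tracking the constant in front of the $\sqrt m$ additive term so it lands at exactly $\lambda\epsilon\sqrt m$ rather than a larger multiple.
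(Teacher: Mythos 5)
Your proposal follows the paper's proof essentially line for line: the same Taylor-type bound from Corollary~\ref{cor:matrix_der_trace}, the same identification of the first-order term as $-\epsilon\|\nabla\Phi_\lambda(A_t)\|_F$, the same splitting of the second-order term into diagonal/off-diagonal pieces $Q_1,Q_2$ handled by Claims~\ref{cla:gd_Q1} and~\ref{cla:gd_Q2}, and the same gradient lower bound via Part~2 of Lemma~\ref{lem:property_sinh_cosh_scalar}. The one place you hand-wave (``route the constant so the additive term is exactly $\lambda\epsilon\sqrt m$'') is resolved in the paper by bounding $O((\epsilon\lambda)^2)\sqrt m \le 0.1\,\epsilon\lambda\sqrt m$ using $\epsilon\lambda\le c$, and by noting that $\|\nabla\Phi_\lambda(A_t)\|_F \ge \tfrac{\lambda}{\sqrt m}|\Phi_\lambda(A_t)-m|$ requires an explicit case split on $\Phi_\lambda(A_t)\gtrless m$ before the stated inequality follows; you should spell that out rather than leave it implicit.
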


\begin{proof}
We first Taylor expand $\Phi_\lambda(A_{t+1})$ as follows:
\begin{align}\label{eq:gd_define_Delta_1_and_Delta_2}
    & ~ \Phi_{\lambda}(A_{t+1})- \Phi_{\lambda} (A_t) \notag \\
    \leq & ~ \langle \nabla \Phi_{\lambda} (A_t) , (A_{t+1} - A_t) \rangle  + O(1) \langle \nabla^2 \Phi_{\lambda}(A_t), (A_{t+1} - A_t) \otimes (A_{t+1} - A_t) \rangle \notag \\
    := & ~ \Delta_1 + O(1) \cdot \Delta_2,
\end{align}
which follows from Lemma~\ref{lem:jn08}.

We choose
\begin{align*}
    A_{t+1} = A_t - \epsilon \cdot \nabla \Phi_{\lambda}(A_t) / \| \nabla \Phi_{\lambda}(A_t) \|_F.
\end{align*}

We can bound
\begin{align}\label{eq:tr_phi_At_first_moment} 
 \Delta_1 = & ~ \tr[\nabla \Phi_{\lambda}(A_t) (A_{t+1} - A_t)] \notag \\
=  & ~ -\epsilon \cdot \|\nabla \Phi_{\lambda} (A_t) \|_F.
\end{align}

Next, we upper-bound $\Delta_2$. Define
\begin{align*}
    z_{t,i} := \lambda (u_i^\top A_t u_i - b_i).
\end{align*}
and consider $\Delta_2 \cdot (\lambda\epsilon )^{-2} \cdot \| \nabla \Phi_{\lambda} (A_t) \|_F^2$, which can be expressed as:
\begin{align}\label{eq:tr_phi_At_second_moment}
&\Delta_2 \cdot (\lambda\epsilon )^{-2} \cdot \| \nabla \Phi_{\lambda} (A_t) \|_F^2\notag\\
     = & ~ (\lambda\epsilon )^{-2} \tr[ \nabla^2 \Phi_{\lambda}(A_t) \cdot (A_{t+1} - A_t) \otimes (A_{t+1} - A_t) ] \cdot 
      \| \nabla \Phi_{\lambda}(A_t) \|_F^2 \notag\\
    = & ~  \tr\Big[ \nabla^2 \Phi_{\lambda}(A_t) \cdot 
    ( \sum_{i=1}^m u_i u_i^\top  \sinh( z_{t,i} ) ) \otimes ( \sum_{i=1}^m u_i u_i^\top \sinh( z_{t,i} ) )\Big] \notag\\
    = & ~ \tr\Big[ \nabla^2 \Phi_{\lambda}(A_t) \cdot ( \sum_{i,j}  \sinh( z_{t,i} )\sinh( z_{t,i} ) (u_i u_i^\top \otimes u_ju_j^\top) ) \Big] \notag\\
    = & ~ \tr\Big[ \nabla^2 \Phi_{\lambda}(A_t) \cdot ( \sum_{i=1}^m \sinh^2( z_{t,i} )) (u_i u_i^\top \otimes u_iu_i^\top) ) \Big] \notag\\
    + & ~  \tr\Big[ \nabla^2 \Phi_{\lambda}(A_t) \cdot ( \sum_{i\neq j} \sinh( z_{t,i} )\sinh( z_{t,j} ) (u_i u_i^\top \otimes u_j u_j^\top) ) \Big] \notag\\
    =: & ~  Q_1 + Q_2  ,
\end{align}
where 
\begin{align}\label{eq:def_Q1}
Q_1:= \tr\Big[ \nabla^2 \Phi_{\lambda}(A_t) \cdot ( \sum_{i=1}^m \sinh^2( z_{t,i} )) (u_i u_i^\top \otimes u_iu_i^\top) ) \Big]
\end{align}
denotes the diagonal term, and
\begin{align}\label{eq:def_Q2}
    Q_2:=&~\tr\Big[ \nabla^2 \Phi_{\lambda}(A_t) \cdot
    ( \sum_{i\neq j} \sinh( z_{t,i} )\sinh( z_{t,j} ) (u_i u_i^\top \otimes u_j u_j^\top) ) \Big]
\end{align}
denotes the off-diagonal term. 
The first step comes from the definition of $\Delta_2$,
the second step follows fromr eplacing $A_{t+1} - A_t$ using Eq~\eqref{eq:A_t1_update}, the third step follows that we extract the scalar values from Kronecker product, the fourth step comes from splitting into two partitions based on whether $i = j$, the fifth step comes from the definition of $Q_1$ and $Q_2$.

Thus,
\begin{align}\label{eq:bound_gd_Delta_2}
    \Delta_2 = & ~ (\epsilon \lambda)^2 (Q_1 + Q_2) / \| \nabla \Phi_{\lambda}(A_t) \|_F^2 \notag \\
    = & ~ (\epsilon \lambda)^2 (Q_1 + 0) / \| \nabla \Phi_{\lambda}(A_t) \|_F^2 \notag \\
    = & ~ (\epsilon \lambda )^2 \cdot ( \sqrt{m} + \frac{1}{\lambda} \| \nabla \Phi_{\lambda}(A_t) \|_F ).
\end{align}
where the second step follows from Claim~\ref{cla:gd_Q2}, and the third step follows from Claim~\ref{cla:gd_Q1}.

Hence, we have
\begin{align*}
    & ~ \Phi_{\lambda} (A_{t+1}) - \Phi_{\lambda} (A_t) \\
    \leq & ~  \Delta_1    + O(1) \cdot \Delta_2 \\
    \leq & ~ - \epsilon \| \nabla \Phi_{\lambda}(A_t) \|_F   + O(1) (\epsilon \lambda)^2 (\sqrt{m} + \frac{1}{\lambda} \| \nabla \Phi_{\lambda}(A_t) \|_F ) \\
    \leq & ~ - 0.9 \epsilon \| \Phi_{\lambda}(A_t) \|_F+ O(\epsilon \lambda)^2 \sqrt{m} 
\end{align*}
where the first step follows from Eq.~\eqref{eq:gd_define_Delta_1_and_Delta_2}, the second step follows from Eq.~\eqref{eq:bound_gd_Delta_1} and Eq.~\eqref{eq:bound_gd_Delta_2},  the third step follows from $\epsilon \lambda \in (0, 0.01)$.

For $\| \Phi_{\lambda} (A_t) \|_F$, we have
\begin{align}\label{eq:phi_At_F_norm}
& ~ \frac{1}{\lambda^2} \| \nabla \Phi_{\lambda} (A_t) \|_F^2 \notag \\
= & ~   \tr[ (\sum_{i=1}^m u_i u_i^\top \sinh( \lambda( u_i^\top A_t u_i - b_i ) )  )^2 ] \notag\\
= & ~   \tr[ \sum_{i=1}^m (u_i u_i^\top)^{2} \sinh^2 ( \lambda( u_i^\top A_t u_i - b_i ) )  ] \notag\\ 
= & ~ \sum_{i=1}^m \sinh^2 ( \lambda ( u_i^\top A_t u_i - b_i ) ) \notag \\ 
\geq & ~ \frac{1}{m} ( \sum_{i=1}^m \cosh ( \lambda ( u_i^\top A_t u_i - b_i ) ) - m )^2 \notag \\
= & ~ \frac{1}{m} ( \Phi_{\lambda} (A_t) - m )^2,
\end{align}
where the first step comes from Eq.~\eqref{eq:gradient_phi_A},
the second  steps follow from $u_i^\top u_j = 0$, the third step follows from $\| u_i \|_2 = 1$, the forth step follows from Part 2 in Lemma~\ref{lem:property_sinh_cosh_scalar}, {  the fifth step follows from the definition of $\Phi_{\lambda}(A)$.}

Thus, we get that
\begin{align}\label{eq:bound_gd_Delta_1}
\| \Phi_{\lambda} (A_t) \|_F^2 \geq ~
\lambda \epsilon \cdot \frac{1}{ \sqrt{m} } | \Phi_{\lambda}(A_t) - m |,
\end{align}

It implies that
\begin{align*}
&\Phi_{\lambda} (A_{t+1}) - \Phi_{\lambda} (A_t)\\
\leq & ~ -0.9 \epsilon \lambda \frac{1}{\sqrt{m}} | \Phi_{\lambda}(A_t) - m| +O(\epsilon \lambda)^2 \sqrt{m}\\
\leq &~ -0.9 \epsilon \lambda \frac{1}{\sqrt{m}} | \Phi_{\lambda}(A_t) - m| + 0.1\epsilon \lambda\sqrt{m},
\end{align*}
where the second step follows from extracting the constant term from the summation.

Then, when $\Phi(A_t)> m$, we have
\begin{align*}
    \Phi_{\lambda} (A_{t+1}) \leq (1-0.9 \frac{ \lambda \epsilon }{\sqrt{m} }) \cdot \Phi_{\lambda} (A_t) +  \lambda \epsilon \sqrt{m}.
\end{align*}
When $\Phi(A_t)\leq m$, we have
\begin{align*}
    \Phi_{\lambda} (A_{t+1}) \leq (1+0.9 \frac{ \lambda \epsilon }{\sqrt{m} }) \cdot \Phi_{\lambda} (A_t) -0.8  \lambda \epsilon \sqrt{m}.
\end{align*}

The lemma is then proved.
\end{proof}

\subsection{Proof of GD Convergence}\label{sec:gd_convergence_proof}
In this section, we provide proofs of convergence analysis of gradient descent matrix sensing algorithm.

\begin{lemma}[Restatement of Lemma~\ref{lem:gd_convergence}]
Suppose the measurement vectors $\{u_i\}_{i\in [m]}$ are orthogonal unit vectors, and suppose $|b_i|$ is bounded by $R$ for $i\in [m]$.  Then, for any $\delta \in (0,1)$, if we take $\lambda = \Omega(\delta^{-1}\log m)$ and $\epsilon=O(\lambda^{-1})$ in Algorithm~\ref{alg:GD}, then for $T=\widetilde{\Omega}(\sqrt{m}R\delta^{-1})$ iterations, the solution matrix $A_T$ satisfies:
\begin{align*}
    | u_i^\top A_{T} u_i - b_i| \leq \delta~~~\forall i\in [m].
\end{align*}
\end{lemma}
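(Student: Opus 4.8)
The plan is to combine the per-iteration progress bound (Lemma~\ref{lem:gradient_descent}) with a careful choice of parameters and a potential-to-error conversion. First I would fix $\lambda = \Theta(\delta^{-1}\log m)$ and $\epsilon = \Theta(\lambda^{-1})$ (small enough that $\epsilon\lambda \leq c$ for the constant $c$ in Lemma~\ref{lem:gradient_descent}), so that the recursion
\begin{align*}
\Phi_\lambda(A_{t+1}) \leq \Big(1 - 0.9\frac{\lambda\epsilon}{\sqrt m}\Big)\Phi_\lambda(A_t) + \lambda\epsilon\sqrt m
\end{align*}
is valid whenever $\Phi_\lambda(A_t) > m$ (and when $\Phi_\lambda(A_t)\le m$ the other branch of Lemma~\ref{lem:gradient_descent} keeps it at $O(m)$). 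Bounding the initial potential: $A_1 = \tau I$ with $\tau = \max_i b_i$, so $|u_i^\top A_1 u_i - b_i| = |\tau - b_i| \leq 2R$, hence $\Phi_\lambda(A_1) \leq m\cosh(2\lambda R) \leq 2^{O(\lambda R)}$.

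Next I would unroll the recursion for $T$ steps. With $\eta := 0.9\lambda\epsilon/\sqrt m = \Theta(1/\sqrt m)$, we get $\Phi_\lambda(A_{T+1}) \leq (1-\eta)^T \Phi_\lambda(A_1) + \eta^{-1}\lambda\epsilon\sqrt m \cdot \eta \leq e^{-\eta T}2^{O(\lambda R)} + O(m)$ — more carefully, the fixed point of the affine map is $\lambda\epsilon\sqrt m / \eta = \sqrt m\cdot\sqrt m/0.9 = O(m)$, so the stationary term is $O(m)$ and the transient term decays like $e^{-\eta T}2^{O(\lambda R)}$. Choosing $T = \Theta(\eta^{-1}\lambda R) = \Theta(\sqrt m \cdot \lambda R) = \widetilde\Theta(\sqrt m R \delta^{-1})$ makes the transient term $O(m)$ as well, giving $\Phi_\lambda(A_{T+1}) = O(m)$. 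One subtlety to handle is the case split in Lemma~\ref{lem:gradient_descent}: I would argue that once $\Phi_\lambda(A_t) \leq Cm$ for a suitable constant $C$ it stays $O(m)$ thereafter (the two branches together give $\Phi_\lambda(A_{t+1}) \leq \max\{(1-\eta)Cm + \lambda\epsilon\sqrt m, (1+\eta)Cm - 0.8\lambda\epsilon\sqrt m\} = O(m)$), so the final bound survives.

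Finally, convert the potential bound to the measurement guarantee. Since $\Phi_\lambda(A_{T+1}) = \sum_{i=1}^m \cosh(\lambda(u_i^\top A_{T+1}u_i - b_i)) = O(m)$ and every term is at least $1$, each individual term satisfies $\cosh(\lambda(u_i^\top A_{T+1}u_i - b_i)) = O(m)$, hence $\lambda|u_i^\top A_{T+1}u_i - b_i| \leq \mathrm{arccosh}(O(m)) = O(\log m)$. With $\lambda = \Omega(\delta^{-1}\log m)$ chosen with a large enough constant, this gives $|u_i^\top A_{T+1}u_i - b_i| \leq \delta$ for all $i\in[m]$, as desired.

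\textbf{Main obstacle.} I expect the main obstacle to be the bookkeeping around the two-branch recursion in Lemma~\ref{lem:gradient_descent}: one must verify that the potential, once driven down to $O(m)$, does not oscillate back up above the threshold, and that the number of iterations needed to first reach $O(m)$ is exactly $\widetilde\Theta(\sqrt m R\delta^{-1})$ rather than something larger once the constants in $\lambda,\epsilon$ are pinned down. The rest — the geometric-series unrolling and the $\cosh$-to-absolute-value inversion — is routine.
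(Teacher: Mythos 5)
Your proposal follows essentially the same route as the paper: bound $\Phi_\lambda(A_1)\le 2^{O(\lambda R)}$, unroll the recursion of Lemma~\ref{lem:gradient_descent} for $T=\widetilde\Theta(\sqrt m R\delta^{-1})$ steps to get $\Phi_\lambda(A_{T+1})=O(m)$, and invert $\cosh$ with $\lambda=\Omega(\delta^{-1}\log m)$ to conclude. Your extra attention to the two-branch case split in Lemma~\ref{lem:gradient_descent} (showing the potential cannot rebound above $O(m)$ once it gets there) is a point of rigor the paper's own proof glosses over, and is correct.
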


\begin{proof}
Let $\tau = \max_{i\in [m]} b_i$. At the beginning, we choose the initial solution $A_1 :=\tau I_n$ where $I_n \in \R^{n \times n}$ is the identity matrix, and we have
\begin{align*}
    \Phi(A_1) =&~ \sum_{i=1}^m \cosh(\lambda\cdot (\tau - b_i))\\
    \leq &~ e^{\lambda \tau} \sum_{i=1}^m e^{-\lambda b_i}\leq 2^{O(\lambda R)},
\end{align*}
where the last step follows from $|b_i|\leq R$ for all $i\in [m]$.

After $T$ iterations, we have
\begin{align*}
    \Phi(A_{T+1}) \leq & ~ (1-\frac{\epsilon \lambda}{\sqrt{m}})^T \Phi(A_1) + 2  m \\
    \leq & ~ (1-\frac{\epsilon \lambda}{\sqrt{m}})^T \cdot 2^{O(\lambda R) }+ 2  m \\
    \leq & ~ 2^{-\Omega( T  \epsilon \lambda / \sqrt{m} ) + O(\lambda R)} + 2  m
\end{align*}
where the first step follows from applying Lemma~\ref{lem:gradient_descent} for $T$ times, and $\sum_{i=1}^T (1-\epsilon\lambda/\sqrt{m})^{i-1}\epsilon \lambda \sqrt{m}\leq 2m$.

As long as $T= \Omega(R \sqrt{m} / \epsilon )=\Omega(R\sqrt{m}\lambda)$, then we have
\begin{align*}
    \Phi(A_{T+1}) \leq O(m).
\end{align*}

This implies that for any $i\in [m]$,
\begin{align*}
    | u_i^\top A_{T+1} u_i - b_i| \leq &~ \lambda^{-1}\cdot \cosh^{-1}(O(m))\\
    = &~ \lambda^{-1}\cdot O(\log m)\\
    = &~ \delta,
\end{align*}
where we take $R = \Omega(\delta^{-1}\log m)$.  

Therefore, with $T=\widetilde{\Omega}(\sqrt{m}R\delta^{-1})$ iterations, Algorithm~\ref{alg:GD} can achieve that
\begin{align}\label{eq:gd_approximation_guarantee}
    | u_i^\top A_{T+1} u_i - b_i| \leq \delta~~~\forall i\in [m].
\end{align}
The theorem is then proved.
\end{proof}

\section{Spectral Potential function with ground-truth oracle}\label{sec:potential_ground_truth_app}
In this section, we consider the matrix sensing with spectral approximation; that is, we want to obtain a matrix $A$ that is a $\delta$-spectral approximation of the ground-truth matrix $A_\star$, i.e.,
\begin{align*}
    (1-\delta)A_\star\preceq A\preceq (1+\delta)A_\star.
\end{align*}
To do this, instead of performing a series of quadratic measurements, we assume that we have access to an oracle ${\cal O}_{A_\star}$ such that for any matrix $A\in \R^{n\times n}$, the oracle will output a matrix $A_\star^{-1/2}AA_\star^{-1/2}$. Algorithm~\ref{alg:GD_spectral} implements a matrix sensing algorithm with spectral approximation guarantee with the assumption of oracle ${\cal O}_{A_\star}$.

We define the spectral loss function as follows:
\begin{align*}
    \Psi_{\lambda}(A) := \tr[ \cosh ( \lambda ( I - (A_{\star})^{-1/2} A (A_{\star})^{-1/2} )  ) ].
\end{align*}
We will show that $\Psi_\lambda(A)$ can characterize the spectral approximation of $A$ with respect to $A_\star$.

It is easy to see that if we can query an arbitrary $A$ to the ground-truth oracle ${\cal O}_{A_\star}$, then we can definitely recover $A_\star$ exactly by querying ${\cal O}_{A_\star}(I)$. Instead, in Algorithm~\ref{alg:GD_spectral}, we focus on the following process: the initial matrix $A_1$ is given, and in the $t$-th iteration, we first compute
\begin{align*}
    X_t =  \lambda (I - A_{\star}^{-1/2} A_t A_{\star}^{-1/2} )
\end{align*}
and do eigendecompsotion of $X_t$ to obtain $\Lambda_t$ such that $X_t = Q_t \Lambda_t Q_t^{\top}$. Then we update the matrix $A_{t+1}$ by:
\begin{align*}
        A_{t+1} = A_t +  \epsilon \cdot  A_{\star}^{1/2} \sinh(X_t) A_{\star}^{1/2} / \| \sinh(X_t) \|_F.
\end{align*}
We are interested in the number of iterations needed to make $A_t$ be a $\delta$-spectral approximation. We believe this example will provide some insight into this problem, and we leave the question of spectral-approximated matrix sensing without the ground-truth oracle to future work.

\begin{algorithm*}\caption{Matrix Sensing with Spectral Approximation.}\label{alg:GD_spectral}
\begin{algorithmic}[1]
\Procedure{GradientDescent}{${\cal O}_{A_\star}$, $A_1$} 
    \For{$t = 1 \to T$}
        \State $X_t\gets \lambda\cdot (I_n - {\cal O}_{A_\star}(A_t))$
        \State $Q_t \Lambda_t Q_t^\top\gets$ Eigendecomposition of $X_t$\Comment{It takes $O(n^\omega)$-time}
        \State $Y_t \gets Q_t \cdot \sinh(\Lambda_t)\cdot Q_t^\top$\Comment{$Y_t=\sinh(X_t)$. It takes $O(n^2)$-time}
        \State $A_{t+1} \gets A_t + \epsilon  \cdot {\cal O}_{A_\star}(Y_t) / \| Y_t \|_F$\Comment{It takes $O(n^2)$-time}
    \EndFor
    \State \Return $A_{T+1}$
\EndProcedure
\end{algorithmic}
\end{algorithm*}

\begin{lemma}[Progress on the spectral potential function]\label{lem:potential_func_loss}
Let $c \in (0,1)$ denote a sufficiently small positive constant. 
We define $X_t$ as follows:
\begin{align*}
    X_t := \lambda (I - (A_{\star})^{-1/2} A_t (A_{\star})^{-1/2} )
\end{align*}

Let 
\begin{align*}
    A_{t+1} = A_t
    +  \epsilon \cdot \lambda (A_{\star})^{1/2} \sinh(X_t) (A_{\star})^{1/2} / \| \lambda \cdot \sinh(X_t) \|_F.
\end{align*}

For any $\epsilon\in (0, 1)$ and $\lambda \geq 1$ such $\lambda \epsilon \leq c$, 
we have for any $t>0$,

\begin{align*}
\Psi_{\lambda}(A_{t+1})   \leq (1 - 0.9\epsilon \lambda /\sqrt{n} ) \Psi_{\lambda}(A_t) + \epsilon \lambda \sqrt{n}.
\end{align*}
\end{lemma}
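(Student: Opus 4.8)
The plan is to mirror the analysis already carried out for the entry-wise potential $\Phi_\lambda$ in Lemma~\ref{lem:gradient_descent}, but now working entirely in the ``whitened'' coordinates $X_t = \lambda(I - A_\star^{-1/2}A_tA_\star^{-1/2})$, where the spectral potential becomes simply $\Psi_\lambda(A_t) = \tr[\cosh(X_t)]$. First I would Taylor-expand $\Psi_\lambda(A_{t+1}) - \Psi_\lambda(A_t)$ using Corollary~\ref{cor:matrix_der_trace} (equivalently Lemma~\ref{lem:jn08} with $f=\cosh$), obtaining a first-order term $\Delta_1 = \tr[\sinh(X_t)\cdot(X_{t+1}-X_t)]$ and a second-order term $O(1)\cdot\Delta_2$ involving $\cosh(X_t)$ and $(X_{t+1}-X_t)^{\otimes 2}$. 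The key observation is that the update $A_{t+1} = A_t + \epsilon\lambda A_\star^{1/2}\sinh(X_t)A_\star^{1/2}/\|\lambda\sinh(X_t)\|_F$ translates exactly to $X_{t+1} - X_t = -\epsilon\lambda\sinh(X_t)/\|\lambda\sinh(X_t)\|_F$ (using that conjugation by $A_\star^{-1/2}$ cancels the $A_\star^{1/2}$ factors), so the whitened iteration is a clean normalized gradient step on $\tr[\cosh(X)]$.

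With that substitution, $\Delta_1 = -\epsilon\lambda\,\tr[\sinh^2(X_t)]/\|\lambda\sinh(X_t)\|_F = -\epsilon\|\lambda\sinh(X_t)\|_F$, exactly as in Eq.~\eqref{eq:tr_phi_At_first_moment}. For $\Delta_2$, I would bound $D^2F(X_t)[H,H] \le \theta_+\tr[H\cosh(X_t)H] + \mu_+\tr[H^2]$ with $H = X_{t+1}-X_t$; since $\|H\|_F = \epsilon\lambda$ is a scalar and $H$ is a scalar multiple of $\sinh(X_t)$, this reduces to $(\epsilon\lambda)^2$ times a ratio $\tr[\sinh^2(X_t)\cosh(X_t)]/\tr[\sinh^2(X_t)]$, which is controlled by $\|\cosh(X_t)\| \le \tr[\cosh(X_t)] = \Psi_\lambda(A_t)$ via Lemma~\ref{lem:cosh_bound}; more carefully I would follow Claim~\ref{cla:gd_Q1}'s route through Cauchy–Schwarz and Part 1 of Lemma~\ref{lem:property_sinh_cosh_matrix} to get $\Delta_2 \le (\epsilon\lambda)^2(\sqrt{n} + \|\lambda\sinh(X_t)\|_F)$ after dividing by $\|\lambda\sinh(X_t)\|_F^2$. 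Combining and using $\epsilon\lambda \le c$ small gives $\Psi_\lambda(A_{t+1}) - \Psi_\lambda(A_t) \le -0.9\epsilon\|\lambda\sinh(X_t)\|_F + O(\epsilon\lambda)^2\sqrt{n}$.

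Finally I would lower-bound $\|\lambda\sinh(X_t)\|_F = (\tr[\sinh^2(X_t)])^{1/2} \ge \frac{1}{\sqrt{n}}(\tr[\cosh(X_t)] - n) = \frac{1}{\sqrt n}(\Psi_\lambda(A_t) - n)$ by Part 2 of Lemma~\ref{lem:property_sinh_cosh_matrix}, which converts the additive decrease into the claimed multiplicative contraction $\Psi_\lambda(A_{t+1}) \le (1 - 0.9\epsilon\lambda/\sqrt n)\Psi_\lambda(A_t) + \epsilon\lambda\sqrt n$ (splitting into the cases $\Psi_\lambda(A_t) > n$ and $\Psi_\lambda(A_t) \le n$ as in the proof of Lemma~\ref{lem:gradient_descent}). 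The main obstacle I anticipate is the second-order term: unlike the orthogonal entry-wise case, here $\sinh(X_t)$ and $\cosh(X_t)$ are genuine matrix functions that do not commute with arbitrary directions, so I must either stay within the functional-calculus framework of Lemma~\ref{lem:jn08} (where $H f''(X) H$ is the right object and everything is simultaneously diagonalizable since $H \propto \sinh(X_t)$ commutes with $X_t$) or carefully verify that the Kronecker-product manipulations from Claim~\ref{cla:gd_Q1} still go through; fortunately $H = X_{t+1}-X_t$ being a function of $X_t$ means $H$, $\sinh(X_t)$, $\cosh(X_t)$ all share an eigenbasis, so the trace computations collapse to the scalar inequalities of Lemma~\ref{lem:property_sinh_cosh_scalar} applied to the eigenvalues of $X_t$, exactly as in the orthogonal measurement case.
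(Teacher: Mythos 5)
Your proposal follows essentially the same route as the paper's proof: whiten the update into the $X_t$ coordinates, Taylor-expand $\tr[\cosh(\cdot)]$ via Lemma~\ref{lem:jn08}, compute the first-order term exactly as $-\epsilon\|\lambda\sinh(X_t)\|_F$, bound the second-order term via Cauchy--Schwarz plus Part~1 of Lemma~\ref{lem:property_sinh_cosh_matrix}, and lower-bound $\|\lambda\sinh(X_t)\|_F$ via Part~2. The paper does exactly this, just keeping the $A_\star^{\pm 1/2}$ factors visible in the gradient notation $\nabla\Psi_\lambda$ and $\tilde\nabla\Psi_\lambda$ rather than switching coordinates explicitly; and because $\cosh\ge 1$ forces $\Psi_\lambda(A_t)\ge n$ always, the case split you mention at the end is actually unnecessary here.

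One small algebra slip to fix: from the given update,
\begin{align*}
X_{t+1}-X_t
= -\lambda\,A_\star^{-1/2}(A_{t+1}-A_t)A_\star^{-1/2}
= -\,\frac{\epsilon\lambda^2\sinh(X_t)}{\|\lambda\sinh(X_t)\|_F}
= -\,\frac{\epsilon\lambda\sinh(X_t)}{\|\sinh(X_t)\|_F},
\end{align*}
not $-\epsilon\lambda\sinh(X_t)/\|\lambda\sinh(X_t)\|_F$ as you wrote (your formula would give $\|H\|_F=\epsilon$, contradicting your own later claim $\|H\|_F=\epsilon\lambda$). Your subsequent use of $\|H\|_F=\epsilon\lambda$ is correct, so this is just a transcription typo and does not affect the argument.
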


\begin{proof}

We can compute
\begin{align}\label{eq:derivative}
    & ~ \Psi_{\lambda}(A_{t+1}) - \Psi_{\lambda}(A_t) \notag \\
    = & ~ \tr[ \cosh ( X_{t+1} )] - \tr[ \cosh ( X_t  ) ] \notag\\
    \leq & ~ - \lambda \cdot \tr[ \sinh( X_t ) \cdot ( (A_{\star})^{-1/2} (A_{t+1} - A_t )  (A_{\star})^{-1/2} ) ] \notag\\
    + & ~ O(1) \cdot \lambda^2 \cdot \tr[\cosh( X_t ) \cdot ( (A_{\star})^{-1/2} (A_t - A_{t+1} )  (A_{\star})^{-1/2} )^2 ] \notag\\
    = & ~ - \Delta_1 + O(1) \cdot \Delta_2, 
\end{align}
the first step is by expanding by definition, the second step is by Taylor expanding the first term at the point $I-(A_{\star})^{-1/2}A_t(A_{\star})^{-1/2}$ (via Lemma~\ref{lem:jn08}), and the last step is by definition of $\Delta_1$ and $\Delta_2$.

To further simplify proofs, we define
\begin{align*}
    \nabla \Psi_{\lambda}(A_t) := & ~ \lambda \cdot (A_{\star})^{1/2} \sinh( X_t ) (A_{\star})^{1/2} \\
    \wt{\nabla} \Psi_{\lambda}(A_t) := & ~ \lambda \cdot \sinh( X_t )  \\
    \wt{\Delta} \Psi_{\lambda}(A_t) := & ~ \lambda \cdot \cosh( X_t )  
\end{align*}

To maximize the gradient progress, we should choose
\begin{align*}
    A_{t+1} = A_t + \epsilon \cdot \nabla \Psi_{\lambda}(A_t) / \| \wt{\nabla} \Psi_{\lambda}(A_t) \|_F
\end{align*}

Then 
\begin{align}\label{eq:derivative_first_moment_A_t_F}
    \Delta_1 = & ~ (\epsilon \lambda^2) \cdot \tr[  \sinh^2( X_t ) ] / \| \wt{\nabla} \Psi_{\lambda}(A_t) \|_F \notag \\
    = & ~ \epsilon \cdot \| \wt{\nabla} \Psi_{\lambda}(A_t) \|_F^2 /  \| \wt{\nabla} \Psi_{\lambda}(A_t) \|_F \notag \\
    = & ~ \epsilon \cdot \| \wt{\nabla} \Psi_{\lambda}(A_t) \|_F 
\end{align}
and
\begin{align}\label{eq:derivative_second_moment_A_t_F}
    \Delta_2 = & ~ \epsilon ^2  \lambda^4 \cdot \tr[ \cosh( X_t ) \cdot \sinh^2( \lambda ( X_t )  ] / \| \wt{\nabla} \Psi_{\lambda}(A_t) \|_F^2 \notag \\
    = & ~ \epsilon^2 \lambda \cdot \tr[ \wt{\Delta} \Psi_{\lambda}(A_t) \cdot \wt{\nabla} \Psi_{\lambda}(A_t)^2 ] / \| \wt{\nabla} \Psi_{\lambda}(A_t) \|_F^2 \notag\\
    \leq & ~ \epsilon^2 \lambda \cdot \| \wt{\Delta} \Psi_{\lambda}(A_t) \|_F \cdot \| \wt{\nabla} \Psi_{\lambda}(A_t)^2 \|_F / \| \wt{\nabla} \Psi_{\lambda}(A_t) \|_F^2 \notag\\
    \leq & ~ \epsilon^2 \lambda \cdot \| \wt{\Delta} \Psi_{\lambda}(A_t) \|_F \notag\\
    \leq & ~ \epsilon^2 \lambda \cdot ( \lambda \sqrt{n} + \| \wt{\nabla} \Psi_{\lambda}(A_t) \|_F  )
\end{align}
where { the first step follows from the definition of $\Delta_2 $, the second step comes from the definition of $\wt{\Delta} \Psi_{\lambda}(A_t) $ and $\wt{\nabla} \Psi_{\lambda}(A_t)$,} { the third step follows that $\|A B\|_F \leq \|A\|_F \|B\|_F$,} the forth step follows from $\| x \|_4^2 \leq \| x \|_2^2$, and the fifth step follows from Part 1 of Lemma~\ref{lem:property_sinh_cosh_matrix}.

Now, we need to lower bound $\| \wt{\nabla} \Psi_{\lambda}(A_t) \|_F $, we have
\begin{align}\label{eq:derivative_phi_A_t_F}
    \| \wt{\nabla} \Psi_{\lambda}(A_t) \|_F = & ~ ( \tr[ \lambda^2 \sinh^2(X_t) ] )^{1/2} \notag \\
    \geq & ~ \frac{\lambda}{\sqrt{n}} ( \tr[  \cosh(X_t) ] -  n ) \notag \\
    = & ~ \frac{\lambda}{\sqrt{n}} ( \Psi_{\lambda}(A_t) -  n ) 
\end{align}
where the second step follows from Part 2 in Lemma~\ref{lem:property_sinh_cosh_matrix}.

We know that

Then, we have
\begin{align*}
    & ~ \Psi_{\lambda}(A_{t+1}) - \Psi_{\lambda}(A_t) \\
    \leq & ~ - \epsilon \| \wt{\nabla} \Psi_{\lambda}(A_t) \|_F + \epsilon^2 \lambda ( \sqrt{n} + \| \wt{\nabla} \Psi_{\lambda}(A_t) \|_F  ) \\
    \leq & ~ - 0.9 \epsilon  \| \wt{\nabla} \Psi_{\lambda}(A_t) \|_F + \epsilon^2 \lambda^2 \sqrt{n} \\
    \leq & ~ - 0.9 \epsilon \lambda \frac{1}{\sqrt{n}}  \Psi_{\lambda}(A_t)  + \epsilon \lambda \sqrt{n}
\end{align*}
where the first step {  follows from  Eq.~\eqref{eq:derivative_first_moment_A_t_F} and Eq.~\eqref{eq:derivative_second_moment_A_t_F} , the second steps comes from $\epsilon \in (0, 0.01)$, the third step comes from Eq.~\eqref{eq:derivative_phi_A_t_F} and $\epsilon \lambda \leq 1$. }

Finally, we complete the proof. 

\end{proof}

\begin{lemma}[Small spectral potential implies good spectral approximation]
Let $A\in \R^{n\times n}$ be symmetric, and $\lambda >0$. Suppose $\Psi_\lambda(A)\leq p$ for some $p > 1$. Then, we have
\begin{align*}
    (1-\delta) A_\star \preceq A \preceq (1+\delta)A_\star
\end{align*}
for $\delta = O(\lambda^{-1}\log p)$.
\end{lemma}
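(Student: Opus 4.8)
The plan is to reduce the spectral approximation guarantee to a bound on the eigenvalues of the matrix $X := \lambda(I - A_\star^{-1/2} A A_\star^{-1/2})$. First I would observe that the hypothesis $\Psi_\lambda(A) = \tr[\cosh(X)] \leq p$ together with Lemma~\ref{lem:cosh_bound} gives $\|X\| \leq 1 + \log(\tr[\cosh(X)]) \leq 1 + \log p = O(\log p)$, since $p > 1$. Dividing by $\lambda$, this means $\|I - A_\star^{-1/2} A A_\star^{-1/2}\| \leq \lambda^{-1}(1 + \log p) =: \delta$, i.e. every eigenvalue $\mu$ of the symmetric matrix $M := A_\star^{-1/2} A A_\star^{-1/2}$ satisfies $|1 - \mu| \leq \delta$, equivalently $(1-\delta) I \preceq M \preceq (1+\delta) I$.

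The second step is to conjugate back. Since $A_\star$ is positive definite, $A_\star^{1/2}$ exists and is invertible, and conjugation by $A_\star^{1/2}$ preserves the Loewner order: from $(1-\delta) I \preceq A_\star^{-1/2} A A_\star^{-1/2} \preceq (1+\delta) I$ we multiply on both sides by $A_\star^{1/2}$ to obtain $(1-\delta) A_\star \preceq A \preceq (1+\delta) A_\star$. Concretely, for any $x \in \R^n$, writing $y = A_\star^{-1/2} x$ (note $A_\star^{1/2}$ is symmetric so $x^\top A_\star^{1/2} = (A_\star^{1/2} x)^\top$), the inequality $x^\top A x = y^\top (A_\star^{1/2} A A_\star^{1/2}) y$... more carefully: $x^\top A x$ with $x = A_\star^{1/2} y$ equals $y^\top A_\star^{1/2} A A_\star^{1/2} y$, while $x^\top A_\star x = y^\top A_\star^{1/2} A_\star A_\star^{1/2} y = y^\top A_\star^2 y$; this does not immediately match. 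The cleaner route is to note that $B \preceq C$ iff $S^\top B S \preceq S^\top C S$ for invertible $S$, applied with $S = A_\star^{1/2}$ and $B, C \in \{(1\mp\delta) A_\star^{-1}\cdot\text{(nothing)}\}$ — so I would instead phrase it as: $(1-\delta) I \preceq M \preceq (1+\delta) I$ is equivalent, by congruence with $A_\star^{1/2}$, to $(1-\delta) A_\star^{1/2} I A_\star^{1/2} \preceq A_\star^{1/2} M A_\star^{1/2} \preceq (1+\delta) A_\star^{1/2} I A_\star^{1/2}$, and $A_\star^{1/2} M A_\star^{1/2} = A_\star^{1/2} A_\star^{-1/2} A A_\star^{-1/2} A_\star^{1/2} = A$, while $A_\star^{1/2} I A_\star^{1/2} = A_\star$. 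This yields exactly $(1-\delta) A_\star \preceq A \preceq (1+\delta) A_\star$ with $\delta = O(\lambda^{-1}\log p)$.

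There is essentially no obstacle here; the only point requiring a little care is keeping the congruence transformation straight (congruence by $A_\star^{1/2}$, not by $A_\star^{-1/2}$, and using that $M = A_\star^{-1/2} A A_\star^{-1/2}$ so that $A_\star^{1/2} M A_\star^{1/2} = A$ exactly). The logarithmic loss in $\delta$ comes entirely from Lemma~\ref{lem:cosh_bound}'s bound $\|X\| \leq 1 + \log(\tr[\cosh(X)])$, which absorbs the additive $1$ into the $O(\cdot)$ since $\log p \geq $ a constant is not guaranteed — but as $p>1$ the bound $1 + \log p = O(\log p)$ holds once $p$ is bounded away from $1$; for $p$ arbitrarily close to $1$ one simply uses $\delta = O(\lambda^{-1}(1+\log p))$, which is still $O(\lambda^{-1}\log p)$ up to the harmless additive constant absorbed in the problem's intended reading.
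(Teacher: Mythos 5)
Your proof is correct and takes essentially the same approach as the paper: bound the spectrum of $X = \lambda(I - A_\star^{-1/2}AA_\star^{-1/2})$ using the hypothesis $\tr[\cosh(X)]\leq p$, then conjugate by $A_\star^{1/2}$ to transfer the Loewner bound back to $A$ and $A_\star$. The only cosmetic difference is that you invoke Lemma~\ref{lem:cosh_bound} to get $\|X\|\leq 1+\log p$ in one step, whereas the paper bounds each eigenvalue individually via $\cosh(\lambda_i(X))\leq p$; these are interchangeable.
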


\begin{proof}
By the definition of $\Psi_\lambda(A)$,  $\Psi_\lambda(A)\leq p$ implies that for any $i\in [n]$,
\begin{align*}
    \cosh(\lambda(1-\lambda_i(A_\star^{-1/2}AA_\star^{-1/2})))\leq p,
\end{align*}
or equivalently,
\begin{align*}
    \left|(1-\lambda_i(A_\star^{-1/2}AA_\star^{-1/2}))\right|\leq O(\lambda^{-1}\log p).
\end{align*}
Hence, we have
\begin{align*}
    (1-\delta)I_n \preceq A_\star^{-1/2}AA_\star^{-1/2} \preceq (1+\delta)I_n,
\end{align*}
where $\delta := O(\lambda^{-1}\log p)$. Therefore, by multiplying $A_\star^{-1/2}$ on both sides, we get that
\begin{align*}
    (1-\delta) A_\star \preceq A \preceq (1+\delta)A_\star,
\end{align*}
which completes the proof of the lemma.
\end{proof}
\section{Gradient descent with General Measurements}\label{sec:gd_non_orthogonal}
  
In this section, we analyze the potential decay by gradient descent with non-orthogonal measurements.  The main result of this section is Lemma~\ref{lem:gradient_descent_rho} in below.

We first recall the definition of the potential function $\Phi_{\lambda}(A)$:
\begin{align*}
    \Phi_{\lambda}(A) := \sum_{i=1}^m \cosh ( \lambda ( u_i^\top A u_i - b_i ) ),
\end{align*}
its gradient $\nabla \Phi_{\lambda}(A)\in \R^{n\times n}$:
\begin{align}\label{eq:gradient_phi_A_rho}
    \nabla \Phi_{\lambda}(A) = \sum_{i=1}^m u_i u_i^\top \lambda \sinh\left( \lambda (u_i^\top A u_i - b_i)\right), 
\end{align}
and its Hessian $\nabla^2 \Phi_{\lambda}(A)\in \R^{n^2\times n^2}$:
\begin{align*}
    \nabla^2 \Phi_{\lambda}(A) = \sum_{i=1}^m ( u_i u_i^\top ) \otimes ( u_i u_i^\top ) \lambda^2 \cosh( \lambda (u_i^\top A u_i - b_i) ). 
\end{align*}

\begin{lemma}[Progress on entry-wise potential with general measurements]\label{lem:gradient_descent_rho}
Assume that $|u_i^{\top}  u_j| \leq \rho$ and $\rho \leq \frac{1}{10m}$, for any $i,j \in [m]$ and $\| u_i\|^2 = 1$. Let $c \in (0,1)$ denote a sufficiently small positive constant. Then, for any $\epsilon,\lambda>0$ such that $\epsilon\lambda \leq c$, 
we have for any $t>0$,
\begin{align*}
    \Phi_{\lambda} ( A_{t+1} ) \leq (1-0.9 \frac{ \lambda \epsilon }{\sqrt{m} }) \cdot \Phi_{\lambda} (A_t) +  \lambda \epsilon \sqrt{m}
\end{align*}

\end{lemma}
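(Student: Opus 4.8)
The plan is to mimic the orthogonal-case proof of Lemma~\ref{lem:gradient_descent} (given in Appendix~\ref{sec:proof_gradient_descent}), treating the cross terms $|u_i^\top u_j|\le \rho$ as small perturbations rather than exact zeros. First I would Taylor-expand via Lemma~\ref{lem:jn08}:
\begin{align*}
    \Phi_{\lambda}(A_{t+1}) - \Phi_{\lambda}(A_t) \le \Delta_1 + O(1)\cdot \Delta_2,
\end{align*}
where $\Delta_1 = \langle \nabla \Phi_{\lambda}(A_t), A_{t+1}-A_t\rangle$ and $\Delta_2 = \langle \nabla^2 \Phi_{\lambda}(A_t), (A_{t+1}-A_t)\otimes (A_{t+1}-A_t)\rangle$, using the update $A_{t+1} = A_t - \epsilon\,\nabla\Phi_{\lambda}(A_t)/\|\nabla\Phi_{\lambda}(A_t)\|_F$. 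As before $\Delta_1 = -\epsilon\|\nabla\Phi_{\lambda}(A_t)\|_F$ exactly; this needs no orthogonality. The work is all in $\Delta_2$, which splits as $(\epsilon\lambda)^2(Q_1+Q_2)/\|\nabla\Phi_{\lambda}(A_t)\|_F^2$ with $Q_1$ the diagonal ($i=j$) term and $Q_2$ the off-diagonal ($i\ne j$) term, exactly as in Eq.~\eqref{eq:def_Q1}--\eqref{eq:def_Q2}.

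Next I would re-derive the analogues of Claim~\ref{cla:gd_Q1} and Claim~\ref{cla:gd_Q2} under the $\rho$-assumption. For $Q_1$: expanding $\tr[(u_\ell u_\ell^\top \otimes u_\ell u_\ell^\top)(u_i u_i^\top \otimes u_i u_i^\top)] = (u_\ell^\top u_i)^4$, the $\ell=i$ contributions give the same bound as Claim~\ref{cla:gd_Q1}, and the $\ell\ne i$ contributions are each $\le \rho^4 \le \rho$, so summing over at most $m^2$ index pairs and using $\cosh,\sinh^2 \le \cosh^2$ one gets an extra additive error of order $m^2\rho^4\cdot(\text{small})$, absorbed by $\rho\le 1/(10m)$ into the leading bound $Q_1 \le (1+o(1))(\sqrt{m}+\tfrac1\lambda\|\nabla\Phi_{\lambda}(A_t)\|_F)\|\nabla\Phi_{\lambda}(A_t)\|_F^2$. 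For $Q_2$: instead of being identically zero, $\tr[(u_\ell u_\ell^\top\otimes u_\ell u_\ell^\top)(u_i u_i^\top\otimes u_j u_j^\top)] = (u_\ell^\top u_i)^2(u_\ell^\top u_j)^2$, which for $i\ne j$ is at most $\rho^2$ whenever $\ell\notin\{i,j\}$, equals $\le \rho^2$ when $\ell\in\{i,j\}$ (since the other factor is $(u_i^\top u_j)^2\le \rho^2$), so $|Q_2|\le \lambda^2 m^3\rho^2 \cdot \max_i\cosh(z_{t,i})\sinh^2$-type factor, again controlled by $\rho\le 1/(10m)$ so that $m^3\rho^2\le m/100$. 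The key point is that every off-diagonal contribution carries at least a factor $\rho^2$, and $m^3\rho^2 = O(m)$, so $|Q_2|$ is dominated by the same scale as the $\sqrt m$ term in $Q_1$.

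Finally I would also need the lower bound on $\|\nabla\Phi_{\lambda}(A_t)\|_F$, the analogue of Eq.~\eqref{eq:phi_At_F_norm}. Here $\|\nabla\Phi_{\lambda}(A_t)\|_F^2 = \lambda^2\sum_{i,j}(u_i^\top u_j)^2\sinh(z_{t,i})\sinh(z_{t,j})$, and the diagonal part $\lambda^2\sum_i\sinh^2(z_{t,i})\ge \frac{\lambda^2}{m}(\Phi_{\lambda}(A_t)-m)^2$ as before, while the off-diagonal part is bounded in absolute value by $\lambda^2 m^2\rho^2\max_i\sinh^2(z_{t,i})$, which since $m^2\rho^2\le 1/100$ can be absorbed, yielding $\|\nabla\Phi_{\lambda}(A_t)\|_F \ge \Omega(\tfrac{\lambda}{\sqrt m}(\Phi_{\lambda}(A_t)-m))$ with a slightly worse constant. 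Plugging the bounds on $\Delta_1,\Delta_2$ together with $\epsilon\lambda\le c$ for $c$ small enough gives $\Phi_{\lambda}(A_{t+1}) \le (1-0.9\tfrac{\lambda\epsilon}{\sqrt m})\Phi_{\lambda}(A_t) + \lambda\epsilon\sqrt m$, the same recursion as the orthogonal case. The main obstacle is bookkeeping the several families of cross terms and verifying that the crude counting bounds (at most $m^2$, resp.\ $m^3$, index tuples, each weighted by $\rho^2$ or $\rho^4$) are in fact all swallowed by the hypothesis $\rho\le 1/(10m)$ with constants loose enough to preserve the $0.9$ and the $\lambda\epsilon\sqrt m$; one must be careful that the $\sinh,\cosh$ factors appearing in $Q_1,Q_2$ can be uniformly traded against $\cosh^2(z_{t,i})$ and then against $\|\nabla\Phi_{\lambda}(A_t)\|_F^2$ and $m$ via Lemma~\ref{lem:property_sinh_cosh_scalar}, exactly as in the orthogonal proof.
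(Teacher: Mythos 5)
Your overall plan matches the paper's proof exactly: Taylor expand via Lemma~\ref{lem:jn08}/Corollary~\ref{cor:matrix_der_trace}, keep $\Delta_1=-\epsilon\|\nabla\Phi_\lambda(A_t)\|_F$ (which indeed needs no orthogonality), split $\Delta_2$ into the same diagonal/off-diagonal pieces $Q_1,Q_2$, rework the lower bound on $\|\nabla\Phi_\lambda(A_t)\|_F$ by absorbing the cross terms, and observe that every off-diagonal contribution carries at least a $\rho^2$ weight. These are exactly the ingredients of Claim~\ref{claim:sum_ij_off_diagonal}, Claim~\ref{cla:gd_Q1_rho} (with the helper Claim~\ref{claim:q1_helper_off_diagonal}), and Claim~\ref{cla:gd_Q2_rho} in the paper.

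There is, however, an arithmetic gap in your sketched bound on $Q_2$. You write ``$|Q_2|\le \lambda^2\, m^3\rho^2 \cdot \max_\ell\cosh(z_{t,\ell})\,\max_i\sinh^2(z_{t,i})$ and $m^3\rho^2 = O(m)$, so $|Q_2|$ is dominated by the same scale as the $\sqrt m$ term in $Q_1$,'' but $m^3\rho^2 \le m/100$ is \emph{not} $O(\sqrt m)$, and converting the $\max$'s back to $B_1 = (\sum_i\cosh^2)^{1/2}$ and $B_2=(\sum_i\sinh^4)^{1/2}$ via $\max\le \|\cdot\|_2$ leaves you with $|Q_2|\lesssim (m/100)\,\lambda^2 B_1 B_2$, which is a factor $m$ too large to match the $Q_1$ bound $\lesssim \lambda^2 B_1 B_2$. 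The loss is in replacing sums by $m\cdot\max$. To close this you must instead trade $\ell^1$-sums for $\sqrt m\cdot\ell^2$-sums: bounding each trace term by $\rho^2$, $|Q_2|\le \lambda^2\rho^2\bigl(\sum_\ell\cosh(z_{t,\ell})\bigr)\bigl(\sum_i|\sinh(z_{t,i})|\bigr)^2 \le \lambda^2\rho^2\cdot\sqrt m B_1\cdot m\sqrt m B_2 = m^2\rho^2\lambda^2 B_1 B_2 \le 0.01\lambda^2 B_1 B_2$, which is what Claim~\ref{cla:gd_Q2_rho} does (with a Chebyshev-style reordering). Alternatively, keep the $\max$ but split the index tuples: only $O(m^2)$ of the $(\ell,i,j)$ with $i\ne j$ have $\ell\in\{i,j\}$, and those contribute $\rho^2$; the remaining $O(m^3)$ have $\ell\notin\{i,j\}$ and contribute $\rho^4$, so $m^2\rho^2+m^3\rho^4 = O(1)$ under $\rho\le 1/(10m)$. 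You gesture at the needed trade via Lemma~\ref{lem:property_sinh_cosh_scalar} in your closing sentence, so the plan is sound, but as stated the $m^3\rho^2$ step would not let you preserve the $0.9$ and $\lambda\epsilon\sqrt m$ constants.
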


\begin{proof}
We first have
\begin{align}\label{eq:gd_define_Delta_1_and_Delta_2_rho}
    & ~ \Phi_{\lambda}(A_{t+1})- \Phi_{\lambda} (A_t) \notag \\
    \leq & ~ \langle \nabla \Phi_{\lambda} (A_t) , (A_{t+1} - A_t) \rangle 
    + O(1) \langle \nabla^2 \Phi_{\lambda}(A_t), (A_{t+1} - A_t) \otimes (A_{t+1} - A_t) \rangle \notag \\
    := & ~ - \Delta_1 + O(1) \cdot \Delta_2,
\end{align}
which follows from Corollary~\ref{cor:matrix_der_trace}.

We choose
\begin{align}\label{eq:A_t1_update_rho}
    A_{t+1} = A_t - \epsilon \cdot \nabla \Phi_{\lambda}(A_t) / \| \nabla \Phi_{\lambda}(A_t) \|_F.
\end{align}

We can bound
\begin{align}\label{eq:tr_phi_At_first_moment_rho} 
 \Delta_1 = & ~ -\tr[\nabla \Phi_{\lambda}(A_t) (A_{t+1} - A_t)] \notag \\
=  & ~ \epsilon \cdot \|\nabla \Phi_{\lambda} (A_t) \|_F.
\end{align}

For $\| \Phi_{\lambda} (A_t) \|_F^2$,
\begin{align}\label{eq:phi_At_F_norm_rho}
& ~ \frac{1}{\lambda^2} \| \nabla \Phi_{\lambda} (A_t) \|_F^2 \notag \\
= & ~   \tr[ (\sum_{i=1}^m u_i u_i^\top \sinh( \lambda( u_i^\top A_t u_i - b_i ) )  )^2 ] \notag\\
= & ~   \tr[ \sum_{i=1}^m  \sinh^2 ( \lambda( u_i^\top A_t u_i - b_i ) )  ] \notag \\
 + &~ \tr[ \sum_{i =1}^{m}\sum_{j \neq i}^{m} (u_i u_i^\top)(u_j u_j^{\top}) \sinh ( \lambda( u_i^\top A_t u_i - b_i ) ) \cdot \sinh ( \lambda( u_j^\top A_t u_j - b_j ) )]   \notag \\
 \geq &~0.9 \tr[ \sum_{i=1}^m  \sinh^2 ( \lambda( u_i^\top A_t u_i - b_i ) )] \notag \\
\geq & ~ 0.9\frac{1}{m} ( \sum_{i=1}^m \cosh ( \lambda ( u_i^\top A_t u_i - b_i ) ) - m )^2 \notag \\
= & ~ 0.9\frac{1}{m} ( \Phi_{\lambda} (A_t) - m )^2,
\end{align}
where the first step follows from Eq.~\eqref{eq:gradient_phi_A_rho},
the second steps follow from partitioning based on whether $i = j$ and $\| u_i \|_2 = 1$, the third step comes from Claim~\ref{claim:sum_ij_off_diagonal}, the fourth step in Eq.~\eqref{eq:phi_At_F_norm_rho} follows from Part 2 in Lemma~\ref{lem:property_sinh_cosh_scalar},  the fifth step follows from the definition of $\Phi_{\lambda}(A)$.

Thus,
\begin{align}\label{eq:bound_gd_Delta_1_rho}
   \Delta_1 = & ~ -\tr[\nabla \Phi_{\lambda}(A_t) (A_{t+1} - A_t)] \notag \\
   \geq & ~ \lambda \epsilon \cdot \frac{1}{ \sqrt{m} } ( \Phi_{\lambda}(A_t) - m )  .
\end{align}

For simplicity, we define
\begin{align*}
    z_{t,i} := \lambda (u_i^\top A_t u_i - b_i).
\end{align*}

We need to compute this $\Delta_2$. For simplificity, we consider $\Delta_2 \cdot (\frac{1}{\epsilon \lambda})^2 \cdot \| \nabla \Phi_{\lambda} (A_t) \|_F^2$, which can be expressed as:
\begin{align}\label{eq:tr_phi_At_second_moment_rho}
&\Delta_2 \cdot (\frac{1}{\epsilon \lambda})^2 \cdot \| \nabla \Phi_{\lambda} (A_t) \|_F^2\notag\\
     = & ~ \frac{1}{(\lambda \epsilon)^2} \tr[ \nabla^2 \Phi_{\lambda}(A_t) \cdot (A_{t+1} - A_t) \otimes (A_{t+1} - A_t) ] \cdot 
     \| \nabla \Phi_{\lambda}(A_t) \|_F^2 \notag\\
    = & ~  \tr\Big[ \nabla^2 \Phi_{\lambda}(A_t) \cdot 
    ( \sum_{i=1}^m u_i u_i^\top  \sinh( z_{t,i} ) ) \otimes ( \sum_{i=1}^m u_i u_i^\top \sinh( z_{t,i} ) )\Big] \notag\\
    = & ~ \tr[ \nabla^2 \Phi_{\lambda}(A_t)  ( \sum_{i,j}  \sinh( z_{t,i} )\sinh( z_{t,i} ) (u_i u_i^\top \otimes u_ju_j^\top) ) ] \notag\\
    = & ~ \tr[ \nabla^2 \Phi_{\lambda}(A_t)  ( \sum_{i=1}^m \sinh^2( z_{t,i} )) (u_i u_i^\top \otimes u_iu_i^\top) ) ] \notag\\
    + & ~  \tr[ \nabla^2 \Phi_{\lambda}(A_t)  ( \sum_{i\neq j} \sinh( z_{t,i} )\sinh( z_{t,j} ) (u_i u_i^\top \otimes u_j u_j^\top) ) ] \notag\\
    = & ~  Q_1 + Q_2  ,
\end{align}
where 
\begin{align}\label{eq:def_Q1_rho}
Q_1:= \tr\Big[ \nabla^2 \Phi_{\lambda}(A_t) \cdot ( \sum_{i=1}^m \sinh^2( z_{t,i} )) (u_i u_i^\top \otimes u_iu_i^\top) ) \Big]
\end{align}
denotes the diagonal term, and
\begin{align}\label{eq:def_Q2_rho}
    Q_2:=\tr\Big[ \nabla^2 \Phi_{\lambda}(A_t) \cdot ( \sum_{i\neq j} \sinh( z_{t,i} )\sinh( z_{t,j} ) (u_i u_i^\top \otimes u_j u_j^\top) ) \Big]
\end{align}
denotes the off-diagonal term. The first step comes from the definition of $\Delta_2$,
the second step follows from replacing $A_{t+1} - A_t$ using Eq.~\eqref{eq:A_t1_update_rho}, the third step follows that we extract the scalar values from Kronecker product, the fourth step comes from splitting into two partitions based on whether $i = j$, the fifth step comes from the definition of $Q_1$ and $Q_2$.

Thus,
\begin{align}\label{eq:bound_gd_Delta_2_rho}
    \Delta_2 \leq & ~ (\epsilon \lambda)^2 (Q_1 + Q_2) / \| \nabla \Phi_{\lambda}(A_t) \|_F^2 \notag \\
    = & ~ 1.3(\epsilon \lambda )^2 \cdot ( \sqrt{m} + \frac{1}{\lambda} \| \nabla \Phi_{\lambda}(A_t) \|_F ).
\end{align}
where the second step follows from  Claim~\ref{cla:gd_Q1_rho} and Claim~\ref{cla:gd_Q2_rho}.

Hence, we have
\begin{align*}
    & ~ \Phi_{\lambda} (A_{t+1}) - \Phi_{\lambda} (A_t) \\
    \leq & ~ - \Delta_1    + O(1) \cdot \Delta_2 \\
    \leq & ~ - \epsilon \| \nabla \Phi_{\lambda}(A_t) \|_F   + O(1) (\epsilon \lambda)^2 (\sqrt{m} + \frac{1}{\lambda} \| \nabla \Phi_{\lambda}(A_t) \|_F ) \\
    \leq & ~ - 0.9 \epsilon \| \Phi_{\lambda}(A_t) \|_F+ O(\epsilon \lambda)^2 \sqrt{m} \\
    \leq & ~ -0.9 \epsilon \lambda \frac{1}{\sqrt{m}} ( \Phi_{\lambda}(A_t) - m ) +O(\epsilon \lambda)^2 \sqrt{m} \\
    \leq & ~  -0.9 \epsilon \lambda \frac{1}{\sqrt{m}} \Phi_{\lambda}(A_t) + \epsilon \lambda \sqrt{m},
\end{align*}
where the first step follows from Eq.~\eqref{eq:gd_define_Delta_1_and_Delta_2_rho}, the second step follows from Eq.~\eqref{eq:bound_gd_Delta_1_rho} and Eq.~\eqref{eq:bound_gd_Delta_2_rho},  the third step follows from $\epsilon \lambda \in (0, 0.01)$, the fourth step follows from Lemma~\ref{lem:cosh_sinh}, and the final step follows that extracting the constant term from the summation.

The lemma is then proved.
\end{proof}

We prove some technical claims in below.

\begin{claim}\label{claim:sum_ij_off_diagonal}
It holds that:
\begin{align*}
    & \sum_{i\ne j\in [m]} \langle u_i,u_j\rangle^2 \sinh ( \lambda( u_i^\top A_t u_i - b_i ) ) \sinh ( \lambda( u_j^\top A_t u_j - b_j ) ) \\
    \leq &~ 0.1 \sum_{i=1}^m  \sinh^2 ( \lambda( u_i^\top A_t u_i - b_i ) ) 
\end{align*}
\end{claim}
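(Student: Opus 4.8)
The plan is to bound the left-hand side by absolute values and reduce everything to the diagonal sum $\sum_i \sinh^2(\lambda(u_i^\top A_t u_i - b_i))$ using the near-orthogonality hypothesis $|u_i^\top u_j|\le \rho$ together with the AM--GM inequality. First I would abbreviate $s_i := \sinh(\lambda(u_i^\top A_t u_i - b_i))$ for $i\in[m]$, so that the quantity to bound is $\sum_{i\ne j}\langle u_i,u_j\rangle^2 s_i s_j$. Applying the triangle inequality together with $\langle u_i,u_j\rangle^2\le \rho^2$ gives $\sum_{i\ne j}\langle u_i,u_j\rangle^2 s_i s_j \le \rho^2\sum_{i\ne j}|s_i s_j|$.

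Next I would apply $|s_i s_j|\le \tfrac12(s_i^2+s_j^2)$ to each summand and sum over all ordered pairs $(i,j)$ with $i\ne j$. Since each $s_i^2$ is counted exactly $m-1$ times (once for each choice of the partner index), this yields $\sum_{i\ne j}|s_i s_j|\le (m-1)\sum_{i=1}^m s_i^2$. Combining with the previous step, $\sum_{i\ne j}\langle u_i,u_j\rangle^2 s_i s_j \le (m-1)\rho^2\sum_{i=1}^m s_i^2$.

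Finally I would invoke the hypothesis $\rho\le \tfrac{1}{10m}$, which gives $(m-1)\rho^2\le m\rho^2\le \tfrac{1}{100m}\le 0.1$, and conclude $\sum_{i\ne j}\langle u_i,u_j\rangle^2 s_i s_j\le 0.1\sum_{i=1}^m s_i^2$, which is exactly the claim after unfolding the definition of $s_i$. There is no genuine obstacle here; the only points requiring a little care are the bookkeeping in the double sum (getting the multiplicity $m-1$ correct) and remembering to pass to absolute values before applying AM--GM, since the $s_i$ can have mixed signs so the cross terms are not automatically nonnegative.
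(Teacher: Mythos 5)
Your proposal is correct and follows essentially the same route as the paper: pass to absolute values, bound each $\langle u_i,u_j\rangle^2$ by $\rho^2$, apply $|s_is_j|\le\tfrac12(s_i^2+s_j^2)$, collect the multiplicity $m-1\le m$, and finish with $m\rho^2\le 0.1$ from $\rho\le\tfrac{1}{10m}$. The only cosmetic difference is that the paper phrases the sum via traces of $\tr[(u_iu_i^\top)(u_ju_j^\top)]=\langle u_i,u_j\rangle^2$, but the argument is identical.
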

\begin{proof}
We define $R_{i,j}$ and $R$ as follows:
\begin{align*}
    R_{i,j} = &~\sinh ( \lambda( u_i^\top A_t u_i - b_i ) )  \sinh ( \lambda( u_j^\top A_t u_j - b_j ) ) \\
     R = &~\tr[ \sum_{i =1}^{m}\sum_{j \neq i}^{m} (u_i u_i^\top)(u_j u_j^{\top}) \sinh ( \lambda( u_i^\top A_t u_i - b_i ) ) \cdot 
     \sinh ( \lambda( u_j^\top A_t u_j - b_j ) )]
\end{align*}

Then we can upper bound $|R|$ by:
\begin{align*}
    |R| =  &~ \tr[\sum_{i =1}^{m}\sum_{j \neq i}^{m} |(u_i u_i^\top)(u_j u_j^{\top})| |R_{i,j}|] \\
    \leq &~ \rho^2 \tr[\sum_{i =1}^{m}\sum_{j \neq i}^{m}  |R_{i,j}|] \\
    \leq &~  \frac{\rho^2 }{2}\tr[\sum_{i =1}^{m}\sum_{j \neq i}^{m} (R_{i,i} + R_{j,j})] \\
    \leq &~ m \rho^2 \tr[\sum_{i=1}^{m} R_{i,i} ] \\
    \leq &~ 0.1\tr[\sum_{i=1}^m R_{i,i}]
\end{align*}
where the first step follows $|ab| = |a||b|$, the second step follows $|u_i^{\top}  u_j| \leq \rho$, the third step follows that $|ab| \leq \frac{a^2 + b^2}{2}$, the fourth step follows from the summation over $j$, and the fifth step comes from $m\rho^2 \leq 0.1$.
\end{proof}

\begin{claim}\label{cla:gd_Q1_rho}
For $Q_1$ defined in Eq.~\eqref{eq:def_Q1_rho}, we have
\begin{align*}
    Q_1 \leq 1.1\Big( \sqrt{m} + \frac{1}{\lambda} \| \nabla \Phi_{\lambda}(A_t) \|_F \Big) \cdot  \| \nabla \Phi_{\lambda} (A_t) \|_F^2.
\end{align*}
\end{claim}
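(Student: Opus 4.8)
The plan is to follow the proof of Claim~\ref{cla:gd_Q1} but retain the cross terms that were annihilated there by exact orthogonality. First I would substitute $\nabla^2 \Phi_\lambda(A_t) = \lambda^2 \sum_{\ell=1}^m \cosh(z_{t,\ell})\, (u_\ell u_\ell^\top)\otimes(u_\ell u_\ell^\top)$ into the definition of $Q_1$ in Eq.~\eqref{eq:def_Q1_rho} and expand the product of the two Kronecker sums using Fact~\ref{fac:kronecker_product}, together with the identity $\tr[(u_\ell u_\ell^\top u_j u_j^\top)\otimes(u_\ell u_\ell^\top u_j u_j^\top)] = (\tr[u_\ell u_\ell^\top u_j u_j^\top])^2 = \langle u_\ell, u_j\rangle^4$. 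This yields
\begin{align*}
Q_1 = \lambda^2 \sum_{\ell=1}^m \sum_{j=1}^m \cosh(z_{t,\ell})\,\sinh^2(z_{t,j})\, \langle u_\ell, u_j\rangle^4,
\end{align*}
which I then split into the diagonal contribution ($\ell = j$, where $\langle u_\ell, u_\ell\rangle^4 = \|u_\ell\|^8 = 1$) and the off-diagonal contribution ($\ell \neq j$, where $\langle u_\ell, u_j\rangle^4 \leq \rho^4$).

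For the diagonal contribution $\lambda^2 \sum_{j} \cosh(z_{t,j})\sinh^2(z_{t,j})$, I would repeat the computation of Claim~\ref{cla:gd_Q1} essentially verbatim: Cauchy--Schwarz bounds it by $\lambda^2 (\sum_j \cosh^2(z_{t,j}))^{1/2} (\sum_j \sinh^4(z_{t,j}))^{1/2}$; Part~1 of Lemma~\ref{lem:property_sinh_cosh_scalar} combined with the lower bound Eq.~\eqref{eq:phi_At_F_norm_rho} (which gives $\sum_j \sinh^2(z_{t,j}) \le (0.9\,\lambda^2)^{-1}\|\nabla \Phi_\lambda(A_t)\|_F^2$) controls the first factor by $\sqrt m + O(\lambda^{-1})\|\nabla\Phi_\lambda(A_t)\|_F$; and $\|\cdot\|_4 \le \|\cdot\|_2$ together with the same lower bound controls the second factor by $O(\lambda^{-2})\|\nabla\Phi_\lambda(A_t)\|_F^2$. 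This reproduces the orthogonal bound of Claim~\ref{cla:gd_Q1} up to constants only slightly larger than $1$, all traceable to the $0.9$ factor in Eq.~\eqref{eq:phi_At_F_norm_rho}, and these can be pushed arbitrarily close to $1$ since $m\rho^2$ is in fact at most $1/100$ under the hypothesis $\rho \le 1/(10m)$.

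For the off-diagonal contribution I would bound $\lambda^2 \rho^4 \sum_{\ell\ne j}\cosh(z_{t,\ell})\sinh^2(z_{t,j}) \le \lambda^2\rho^4 (\sum_\ell \cosh(z_{t,\ell}))(\sum_j \sinh^2(z_{t,j})) = \lambda^2\rho^4\, \Phi_\lambda(A_t) \cdot \sum_j \sinh^2(z_{t,j})$, then estimate $\Phi_\lambda(A_t) \le \sqrt m (\sum_j \cosh^2(z_{t,j}))^{1/2} \le m + O(\sqrt m\,\lambda^{-1})\|\nabla\Phi_\lambda(A_t)\|_F$ via Cauchy--Schwarz and Part~1 of Lemma~\ref{lem:property_sinh_cosh_scalar}, and bound $\sum_j \sinh^2(z_{t,j})$ again by Eq.~\eqref{eq:phi_At_F_norm_rho}. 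Since $\rho \le \tfrac{1}{10m}$ forces $\rho^4 \le (10m)^{-4}$, the entire off-diagonal contribution is of lower order than the diagonal one and is swallowed into the constant, turning the leading $1$ of Claim~\ref{cla:gd_Q1} into $1.1$. The only nuisance in this argument is re-expressing the raw $\sinh/\cosh$ sums in terms of $\|\nabla\Phi_\lambda(A_t)\|_F$ and $\Phi_\lambda(A_t)$, which forces us to invoke Eq.~\eqref{eq:phi_At_F_norm_rho} (whose own proof relies on Claim~\ref{claim:sum_ij_off_diagonal}); once that substitution is set up, the smallness of $\rho$ makes the remaining constant bookkeeping routine.
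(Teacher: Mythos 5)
Your proposal is correct and follows essentially the same structure as the paper's proof: expand $Q_1$ into $\lambda^2\sum_{\ell,j}\cosh(z_{t,\ell})\sinh^2(z_{t,j})\langle u_\ell,u_j\rangle^4$, split diagonal ($\ell=j$) from off-diagonal ($\ell\neq j$), bound the diagonal by Cauchy--Schwarz exactly as in the orthogonal Claim~\ref{cla:gd_Q1} to get $\lambda^2\bigl(\sum\cosh^2\bigr)^{1/2}\bigl(\sum\sinh^4\bigr)^{1/2}$, and argue the off-diagonal is swallowed by the $0.1$ slack because $\rho$ is tiny. The only presentational difference is in how the off-diagonal piece is bounded: you factor the double sum as $(\sum_\ell\cosh)(\sum_j\sinh^2)$ and use the crude bound $\langle u_\ell,u_j\rangle^4\le\rho^4$, whereas the paper (in Claim~\ref{claim:q1_helper_off_diagonal}) uses the looser $\langle u_\ell,u_j\rangle^4\le\rho^2$ and then applies Cauchy--Schwarz to the double sum so that the off-diagonal comes out directly in the same $(\sum\cosh^2)^{1/2}(\sum\sinh^4)^{1/2}$ form as the diagonal, with the factor $m\rho^2\le 0.1$. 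Both routes are routine and give the claim. One small point in your favor: you explicitly track the factor of $0.9$ coming from Eq.~\eqref{eq:phi_At_F_norm_rho} when replacing $\sum\sinh^2$ by $\lambda^{-2}\|\nabla\Phi_\lambda(A_t)\|_F^2$, which the paper silently assumes is an equality (as in the orthogonal case); you correctly note that the stronger hypothesis $m\rho^2\le 1/100$ absorbs this.
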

\begin{proof}

For simplicity, we define $z_{t,i}$ to be
\begin{align*}
    z_{t,i} := \lambda ( u_i^\top A_t u_i - b_i ) .
\end{align*}
Recall that
\begin{align*}
    \nabla^2 \Phi_{\lambda}(A_t) = \lambda^2 \cdot \sum_{i=1}^m ( u_i u_i^\top ) \otimes ( u_i u_i^\top ) \cosh( z_{t,i} ) .
\end{align*}

For $Q_1$, we have
\begin{align}\label{eq:Q1_rho}
   Q_1 = & ~  \tr[ \nabla^2 \Phi_{\lambda}(A_t) \sum_{i=1}^m  \sinh^2( z_{t,i} ) (u_i u_i^\top \otimes u_i u_i^\top) ) ] \notag\\
   = & ~ \lambda^2 \cdot \tr[ \sum_{i=1}^m  \cosh( z_{t,i} ) ( u_i u_i^\top ) \otimes ( u_i u_i^\top )  
   \cdot   \sum_{i=1}^m  \sinh^2( z_{t,i} ) (u_i u_i^\top ) \otimes ( u_i u_i^\top)   ] \notag\\
   = & ~ \lambda^2 \cdot \sum_{i=1}^m \tr[ \cosh( z_{t,i} ) \sinh^2( z_{t,i} )\cdot 
    ( u_i u_i^\top  u_i u_i^\top )  \otimes ( u_i u_i^\top u_i u_i^\top ) ] \notag\\
+ &~  \lambda^2 \cdot \sum_{i=1}^m \sum_{j \neq i}^{m} \tr[ \cosh( z_{t,i} )\sinh^2( z_{t,j} ) \cdot 
 (u_i u_i^\top u_j u_j^{\top} ) \otimes (u_j u_j^{\top} u_i u_i^\top)] \notag \\
   = & ~ \lambda^2 \cdot \sum_{i=1}^m  \cosh( z_{t,i} ) \sinh^2( z_{t,i} ) 
    \notag \\ 
   + & ~  \lambda^2 \cdot \sum_{i=1}^m \sum_{j \neq i}^{m} \tr[ \cosh( z_{t,i} )\sinh^2( z_{t,j} ) \cdot 
 (u_i u_i^\top u_j u_j^{\top} ) \otimes (u_j u_j^{\top} u_i u_i^\top)] \notag \\
   \leq & ~  1.1 \lambda ^2 \cdot (  \sum_{i=1}^m  \cosh^2( z_{t,i} ) )^{1/2}
   \cdot  ( \sum_{i=1}^m \sinh^4( z_{t,i} ) )^{1/2} \notag \\
   \leq & ~ 1.1 \lambda^2 \cdot B_1 \cdot B_2,
\end{align}
where {  the first step comes from the definition of $Q_1$, the second step comes from the definition of $\nabla^2 \Phi_{\lambda}(A_t)$,}
the third step follows from $(A \otimes B) \cdot (C \otimes D) = (AC) \otimes (BD)$ and partition the terms based on whether $i = j$, the fourth step comes from $\|u_i\| = 1$ and $\tr[ (u_i  u_i^\top) \otimes (u_i  u_i^\top) ] = 1$, and the fifth step comes from Cauchy–Schwarz inequality and Claim~\ref{claim:q1_helper_off_diagonal}.

\begin{claim}\label{claim:q1_helper_off_diagonal}
We can bound the off-diagonal entries by: 
\begin{align*}
  &~   |\lambda^2 \cdot \sum_{i=1}^m \sum_{j \neq i}^{m} \tr[ \cosh( z_{t,i} )\sinh^2( z_{t,j} ) \cdot 
  (u_i u_i^\top u_j u_j^{\top} ) \otimes (u_j u_j^{\top} u_i u_i^\top)]| \notag \\
 \leq &~ 0.1 \lambda^2(\sum_{i=1}^m(\cosh(z_{t,1}))^{1/2} \cdot (\sum_{i=1}^m \sinh^4( z_{t,i}))^{1/2}   \\  
\end{align*}
\end{claim}
\begin{proof}

\begin{align*}
      &~   |\lambda^2 \cdot \sum_{i=1}^m \sum_{j \neq i}^{m} \tr[ \cosh( z_{t,i} )\sinh^2( z_{t,j} ) \cdot 
       (u_i u_i^\top u_j u_j^{\top} ) \otimes (u_j u_j^{\top} u_i u_i^\top)]| \notag \\
      \leq &~ \rho^2 \lambda^2 |\sum_{i=1}^m \sum_{j \neq i}^{m} \cosh( z_{t,i} )\sinh^2( z_{t,j} )| \\
      \leq &~  \rho^2 \lambda^2 (\sum_{i=1}^m \sum_{j \neq i}^{m} (\cosh^2( z_{t,i} ))^{1/2} \cdot (\sum_{i=1}^m \sum_{j \neq i}^{m}\sinh^4( z_{t,j}))^{1/2} \\
      \leq &~ m\rho^2 \lambda^2  (\sum_{i=1}^m(\cosh^2(z_{t,i}))^{1/2} \cdot (\sum_{i=1}^m \sinh^4( z_{t,i}))^{1/2} \\
      \leq &~ 0.1 \lambda^2(\sum_{i=1}^m(\cosh^2(z_{t,i}))^{1/2} \cdot (\sum_{i=1}^m \sinh^4( z_{t,i}))^{1/2} 
\end{align*}
where the first step comes from $|\langle u_i, u_j\rangle | \leq \rho$, the second step comes from Cauchy–Schwarz inequality, the third step follows from summation over $m$ terms, and the fourth step comes from $\rho^2 m \leq 0.1$.
\end{proof}

For the term $B_1$, we have
\begin{align}\label{eq:b1_rho}
    B_1 = & ~ (  \sum_{i=1}^m \cosh^2( \lambda (u_i^\top A_t u_i - b_i ) ) )^{1/2} \notag \\
    \leq & ~ \sqrt{m} + \frac{1}{\lambda} \| \nabla \Phi_{\lambda}(A_t) \|_F,
\end{align}
where the second step follows Part 1 of Lemma~\ref{lem:property_sinh_cosh_scalar}.

For the term $B_2$, we have
\begin{align}\label{eq:b2_rho}
    B_2 = & ~( \sum_{i=1}^m \sinh^4( \lambda( u_i^\top A_t u_i - b_i ) ) )^{1/2} \notag  \\
    \leq & ~ \frac{1}{\lambda^2} \| \nabla \Phi_{\lambda} (A_t) \|_F^2,
\end{align}
where the second step follows from $\| x \|_4^2 \leq \| x \|_2^2$. This implies that
\begin{align*}
    Q_1 \leq & ~ 1.1\lambda^2 \cdot B_1 \cdot B_2 \\
    \leq & ~ 1.1\lambda^2 \cdot ( \sqrt{m} + \frac{1}{\lambda} \| \nabla \Phi_{\lambda}(A_t) \|_F ) \cdot \frac{1}{\lambda^2} \| \nabla \Phi_{\lambda} (A_t) \|_F^2 \\
    = & ~ 1.1( \sqrt{m} + \frac{1}{\lambda} \| \nabla \Phi_{\lambda}(A_t) \|_F ) \cdot  \| \nabla \Phi_{\lambda} (A_t) \|_F^2 .
\end{align*}
This completes the proof.
\end{proof}

\begin{claim}\label{cla:gd_Q2_rho} 
For $Q_2$ defined in Eq.~\eqref{eq:def_Q2_rho}, we have:
\begin{align*}
    Q_2 \leq 0.2 \lambda^2 ( \sqrt{m} + \frac{1}{\lambda} \| \nabla \Phi_{\lambda}(A_t) \|_F ) \cdot  \| \nabla \Phi_{\lambda} (A_t) \|_F^2 
\end{align*}
\end{claim}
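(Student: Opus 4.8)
The plan is to expand $Q_2$ entrywise, partition the resulting triple sum according to whether the Hessian summation index meets one of the two off-diagonal indices, and then collapse the pieces by Cauchy--Schwarz, in exact parallel with Claim~\ref{cla:gd_Q1_rho} and Claim~\ref{claim:q1_helper_off_diagonal} but with one extra index to track. First I would substitute $\nabla^2 \Phi_{\lambda}(A_t)=\lambda^2\sum_{\ell=1}^m\cosh(z_{t,\ell})\,(u_\ell u_\ell^\top)\otimes(u_\ell u_\ell^\top)$ into Eq.~\eqref{eq:def_Q2_rho}, apply $(A\otimes B)(C\otimes D)=(AC)\otimes(BD)$ and $\tr[X\otimes Y]=\tr[X]\tr[Y]$ (Fact~\ref{fac:kronecker_product}) together with $\tr[u_\ell u_\ell^\top u_a u_a^\top]=\langle u_\ell,u_a\rangle^2$, to obtain
\[
Q_2=\lambda^2\sum_{\ell=1}^m\sum_{i\neq j}\cosh(z_{t,\ell})\sinh(z_{t,i})\sinh(z_{t,j})\,\langle u_\ell,u_i\rangle^2\langle u_\ell,u_j\rangle^2 .
\]

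Next I would split the $\ell$-sum into the three disjoint cases $\ell=i$, $\ell=j$, and $\ell\notin\{i,j\}$; these exhaust all terms precisely because $i\neq j$. In the first two cases one inner-product factor equals $1$ and the other is at most $\rho^2$ in absolute value, and in the third both are at most $\rho^2$, so their product is at most $\rho^4$. Using $\cosh\ge 0$ and enlarging the index ranges gives
\begin{align*}
|Q_2| \leq {} & 2\rho^2\lambda^2\Big(\sum_{i}\cosh(z_{t,i})|\sinh(z_{t,i})|\Big)\Big(\sum_{j}|\sinh(z_{t,j})|\Big) \\
& + \rho^4\lambda^2\Big(\sum_{\ell}\cosh(z_{t,\ell})\Big)\Big(\sum_{i}|\sinh(z_{t,i})|\Big)^2 .
\end{align*}

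Then I would decouple each sum by Cauchy--Schwarz. Writing $S:=\big(\sum_i\sinh^2(z_{t,i})\big)^{1/2}$ and $C:=\big(\sum_i\cosh^2(z_{t,i})\big)^{1/2}$, one has $\sum_i\cosh(z_{t,i})|\sinh(z_{t,i})|\le CS$, $\sum_j|\sinh(z_{t,j})|\le\sqrt m\,S$, and (since $\cosh>0$) $\sum_\ell\cosh(z_{t,\ell})\le\sqrt m\,C$, so $|Q_2|\le(2\rho^2\sqrt m+\rho^4 m^{3/2})\,\lambda^2 CS^2$. Part~1 of Lemma~\ref{lem:property_sinh_cosh_scalar} gives $C\le\sqrt m+S$, and the intermediate estimate inside Eq.~\eqref{eq:phi_At_F_norm_rho} (which uses Claim~\ref{claim:sum_ij_off_diagonal}) gives $S\le\tfrac{1}{\sqrt{0.9}\,\lambda}\|\nabla\Phi_\lambda(A_t)\|_F$; plugging these in rewrites $\lambda^2 CS^2$ as an absolute constant times $\big(\sqrt m+\tfrac1\lambda\|\nabla\Phi_\lambda(A_t)\|_F\big)\|\nabla\Phi_\lambda(A_t)\|_F^2$. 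Finally, $\rho\le\tfrac1{10m}$ forces $2\rho^2\sqrt m+\rho^4 m^{3/2}$ to be a tiny constant (well below $0.025$), so the overall prefactor is far below $0.2$, and since $\lambda\ge 1$ (guaranteed by $\lambda=\Omega(\delta^{-1}\log m)$) the claimed bound $Q_2\le0.2\,\lambda^2\big(\sqrt m+\tfrac1\lambda\|\nabla\Phi_\lambda(A_t)\|_F\big)\|\nabla\Phi_\lambda(A_t)\|_F^2$ follows.

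The step I expect to require the most care is the bookkeeping of the three-way case split: checking that it is genuinely a partition of $\{(\ell,i,j):i\neq j\}$, and that each Cauchy--Schwarz application yields exactly the combination $(\sum\cosh^2)^{1/2}$, $(\sum\sinh^2)^{1/2}$ that Lemma~\ref{lem:property_sinh_cosh_scalar} and Eq.~\eqref{eq:phi_At_F_norm_rho} are designed to convert into $\|\nabla\Phi_\lambda(A_t)\|_F$. Everything after the expansion is routine --- it is essentially the $Q_1$ argument of Claim~\ref{cla:gd_Q1_rho} carried through with one additional summation index.
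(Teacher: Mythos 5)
Your proof is correct, and it travels a genuinely different (and arguably cleaner) road than the paper. The paper proceeds by first bounding $\langle u_\ell,u_i\rangle^2\langle u_\ell,u_j\rangle^2\le\rho^2$ uniformly, then applying $|\sinh(z_{t,i})\sinh(z_{t,j})|\le\tfrac12(\sinh^2 z_{t,i}+\sinh^2 z_{t,j})$, and then -- in the step it glosses as ``summation over $m$ terms'' -- collapsing $\big(\sum_\ell\cosh z_{t,\ell}\big)\big(\sum_i\sinh^2 z_{t,i}\big)$ into $m\sum_i\cosh(z_{t,i})\sinh^2(z_{t,i})$, which is really a Chebyshev-sum-inequality rearrangement (valid because $\cosh$ and $\sinh^2$ are both increasing in $|z|$, though the paper never says so). You instead partition the triple index set on whether $\ell\in\{i,j\}$, winning a sharper $\rho^4$ prefactor on the off-$\ell$ terms and a $\rho^2$ prefactor (not $\rho^4$) on only the $O(m^2)$ coincident terms, and then decouple each resulting sum by a plain Cauchy--Schwarz into the familiar $C=(\sum\cosh^2)^{1/2}$ and $S=(\sum\sinh^2)^{1/2}$ quantities. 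This avoids the Chebyshev step entirely and in fact gives a numerically tighter prefactor $2\rho^2\sqrt m+\rho^4m^{3/2}=O(m^{-3/2})$ rather than the paper's $O(1)$ prefactor $2m^2\rho^2$. The only place where you invoke something the paper does not is the final ``$\lambda\ge1$'' step, but this is forced on you by a blemish in the statement itself: the paper's own proof of the claim ends with a bound of the form $0.2\big(\sqrt m+\tfrac1\lambda\|\nabla\Phi_\lambda(A_t)\|_F\big)\|\nabla\Phi_\lambda(A_t)\|_F^2$, with no leading $\lambda^2$, so the $\lambda^2$ in the stated inequality (absent in the parallel Claim~\ref{cla:gd_Q1_rho}) is almost certainly a typo, and your $\lambda\ge1$ observation is just a valid way to pad up to it.
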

\begin{proof}
Because in $Q_2$ we have :
\begin{align}\label{eq:u_ell_i_j_product_rho}
 Q_2 =  & ~ \lambda^2 \tr[\sum_{\ell=1}^{m}(\cosh(z_{t,\ell}) \cdot u_{\ell} u_{\ell}^{\top} \otimes u_{\ell} u_{\ell}^{\top}) \cdot 
 \sum_{i \neq j}^{m}(\sinh( z_{t,i} )\sinh( z_{t,j} ) \cdot u_i u_i^{\top} \otimes u_j u_j^{\top})] \notag \\
    = & ~ \lambda^2\tr[\sum_{\ell=1}^{m} \sum_{i \neq j}^{m} \cosh(z_{t,\ell} )\sinh( z_{t,i} )\sinh( z_{t,j} ) \cdot 
     (u_{\ell} u_{\ell}^{\top} u_i u_i^{\top}) \otimes (u_{\ell} u_{\ell}^{\top} u_j u_j^{\top})] \notag \\
    \leq & ~  \lambda^2\rho^2 \sum_{\ell=1}^{m} \sum_{i \neq j}^{m} \cosh(z_{t,\ell} )(\sinh^2( z_{t,i} ) + \sinh^2( z_{t,j} )) \notag \\
    \leq &~ 2m \lambda^2\rho^2 \sum_{\ell=1}^{m} \sum_{i = 1}^{m} \cosh(z_{t,\ell} )\sinh^2( z_{t,i} ) \notag \\
    \leq &~ 2m^2 \lambda^2\rho^2 \sum_{i = 1}^{m} \cosh(z_{t,i} )\sinh^2( z_{t,i} ) \notag \\
    \leq &~ 2m^2 \lambda^2\rho^2 (\sum_{i=1}^m(\cosh^2(z_{t,i}))^{1/2}  (\sum_{i=1}^m \sinh^4( z_{t,i}))^{1/2} \notag \\
    \leq &~ 0.2 \lambda^2 ( \sqrt{m} + \frac{1}{\lambda} \| \nabla \Phi_{\lambda}(A_t) \|_F ) \cdot  \| \nabla \Phi_{\lambda} (A_t) \|_F^2 
\end{align}
where the second step follows from $(A \otimes B) \cdot (C \otimes D) = (AC) \otimes (BD)$, the third step follows Cauchy–Schwarz inequality and $|\langle u_i, u_j\rangle | \leq \rho$, the fourth step follows from combining $\sinh^2( z_{t,i} )$ and $\sinh^2( z_{t,j} )$, the fifth step comes from summation over $m$ terms, and the sixth step comes from Cauchy–Schwarz inequality and the seventh step follows from Eq.~\eqref{eq:b1_rho} and Eq.~\eqref{eq:b2_rho} and $m^2 \rho^2 \leq 0.1$.

\end{proof}

\section{Stochastic Gradient Descent for General Measurements}\label{sec:sgd_general}

In this section, we further extend the general measurement  where $\{u_i\}_{i\in [m]}$ are non-orthogonal vectors and $|u_i^{\top}  u_j| \leq \rho$ to the convergence analysis of the stochastic gradient descent matrix sensing algorithm. Algorithm~\ref{alg:stochastic_gradient_descent_general} implements the stochastic gradient descent version of the matrix sensing algorithm.

In Algorithm~\ref{alg:stochastic_gradient_descent_general}, at each iteration $t$, we first compute the stochastic gradient descent by:
\begin{align*}
    \nabla \Phi_{\lambda} (A_t, {\cal B}_t) \gets \frac{m}{B} \sum_{i \in {\cal B}_t} u_i u_i^\top \lambda \sinh( \lambda z_{i} )
\end{align*}
then we update the matrix with the gradient:
\begin{align*}
    A_{t+1} \gets A_t - \epsilon \cdot \nabla \Phi_{\lambda}(A_t,{\cal B} _t) / \| \nabla \Phi_{\lambda}(A_t) \|_F
\end{align*}
At the end of each iteration, we update $z_i$ by:
\begin{align*}
    z_i\gets z_i - \epsilon  \lambda m w_{i,j}^2\sinh(\lambda z_{j}) / (\| \nabla \Phi_{\lambda}(A_t) \|_F B \;\; \forall i \in [m], j \in {\cal B}_t
\end{align*}
We are interested in studying the time complexity and convergence analysis under the general measurement assumption.

\begin{lemma}[Cost-per-iteration of stochastic gradient descent for general measurements]\label{lem:gd_cost_per_iter_general} Algorithm~\ref{alg:stochastic_gradient_descent_general} takes $O(mn^2)$-time for preprocessing and 
each iteration takes
$
    O(Bn^2+m^2)
$-time.
\end{lemma}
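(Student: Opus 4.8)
The plan is to follow the template of the proof of Lemma~\ref{lem:sgd_cost_per_iter} and to isolate the two places where losing orthogonality of the $u_i$'s raises the per-iteration cost. As in that proof, the algorithm maintains the residuals $z_i := u_i^\top A_t u_i - b_i$ for all $i\in[m]$; then at iteration $t$ the stochastic gradient $\nabla\Phi_\lambda(A_t,{\cal B}_t) = \frac{m\lambda}{B}\sum_{i\in{\cal B}_t}\sinh(\lambda z_i)\,u_iu_i^\top$ is a sum of $B$ rank-one matrices and can be formed in $O(Bn^2)$ time, while the step $A_{t+1}=A_t-\epsilon\nabla\Phi_\lambda(A_t,{\cal B}_t)/\|\nabla\Phi_\lambda(A_t)\|_F$ costs $O(n^2)$. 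For preprocessing I would initialize the residuals — with $A_1=\tau I$ and $\|u_i\|_2=1$ this is just $z_{1,i}=\tau-b_i$ — and compute the matrix $W^{(2)}\in\R^{m\times m}$ of squared pairwise inner products $W^{(2)}_{i,j}=\langle u_i,u_j\rangle^2$. The latter dominates at $O(m^2 n)$, which is $O(mn^2)$ because $|\langle u_i,u_j\rangle|\le\rho\le 1/(10m)$ makes the Gram matrix $(\langle u_i,u_j\rangle)_{i,j}$ strictly diagonally dominant, hence invertible, forcing the $u_i$'s to be linearly independent and $m\le n$.

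The first new cost is evaluating the full-gradient norm $\|\nabla\Phi_\lambda(A_t)\|_F$ needed for the normalization. Here the $u_i$'s are no longer eigenvectors of $\nabla\Phi_\lambda(A_t)$, so the clean orthogonal-case identity $\|\nabla\Phi_\lambda(A_t)\|_F^2=\sum_{i}\lambda^2\sinh^2(\lambda z_i)$ no longer holds. Instead I would expand the square and use $\tr[u_iu_i^\top u_ju_j^\top]=\langle u_i,u_j\rangle^2$ to get
\begin{align*}
\|\nabla\Phi_\lambda(A_t)\|_F^2 \;=\; \lambda^2\sum_{i,j\in[m]}\langle u_i,u_j\rangle^2\sinh(\lambda z_i)\sinh(\lambda z_j)\;=\;\lambda^2\, s^\top W^{(2)} s,
\end{align*}
where $s_i:=\sinh(\lambda z_i)$; given the precomputed $W^{(2)}$ and the current residuals, this quadratic form is evaluated in $O(m^2)$ time.

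The second new cost is refreshing the residuals. From $z_{t+1,i}-z_{t,i}=u_i^\top(A_{t+1}-A_t)u_i=-\tfrac{\epsilon}{\|\nabla\Phi_\lambda(A_t)\|_F}u_i^\top\nabla\Phi_\lambda(A_t,{\cal B}_t)u_i$ and $u_i^\top\nabla\Phi_\lambda(A_t,{\cal B}_t)u_i=\tfrac{m\lambda}{B}\sum_{j\in{\cal B}_t}\langle u_i,u_j\rangle^2\sinh(\lambda z_{t,j})$, each $z_i$ gets a closed-form update that is a sum of $B$ already-available terms. Unlike the orthogonal case, in which only the $B$ residuals with index in ${\cal B}_t$ changed, here every residual changes, so this step costs $O(mB)$. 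Adding up one iteration — sampling $O(B)$, gradient $O(Bn^2)$, norm $O(m^2)$, matrix update $O(n^2)$, residual update $O(mB)$ — and using $B\le m$ so that $mB\le m^2$ (and $n^2\le Bn^2$), the per-iteration cost is $O(Bn^2+m^2)$, as claimed.

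I do not expect a genuine obstacle; the step that needs care is the identity $u_i^\top\nabla\Phi_\lambda(A_t,{\cal B}_t)u_i=\tfrac{m\lambda}{B}\sum_{j\in{\cal B}_t}\langle u_i,u_j\rangle^2\sinh(\lambda z_{t,j})$ — the off-index terms collapse because $u_i^\top u_j\,u_j^\top u_i=\langle u_i,u_j\rangle^2$ — together with checking that the maintained residuals stay exact, so that the next iteration's $O(m^2)$ norm evaluation is correct. The one other thing worth confirming is that the $O(m^2 n)$ cost of precomputing $W^{(2)}$ really fits inside the stated $O(mn^2)$ preprocessing budget, which it does precisely because near-orthogonality forces $m\le n$.
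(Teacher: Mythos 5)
Your proposal is correct and follows essentially the same route as the paper: the $O(m^2)$ per-iteration cost comes from evaluating $\|\nabla\Phi_\lambda(A_t)\|_F^2=\lambda^2\sum_{i,j}\langle u_i,u_j\rangle^2\sinh(\lambda z_i)\sinh(\lambda z_j)$ via the precomputed inner-product table, and the $O(mB)$ cost comes from updating all $m$ residuals since each $z_i$ now changes under a non-orthogonal step. The one place you go beyond the paper's terse ``the other steps' time costs are quite clear'' is spelling out that precomputing $W^{(2)}$ costs $O(m^2 n)$, and that this fits the stated $O(mn^2)$ preprocessing budget because $\rho\le 1/(10m)$ forces the Gram matrix to be strictly diagonally dominant, hence the $u_i$'s to be linearly independent and $m\le n$ --- a small gap in the paper's presentation that your argument correctly closes.
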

\begin{proof}
Since $u_i$'s are no longer orthogonal, we need to compute $\|\nabla \Phi_\lambda (A_t)\|_F$ in the following way:
\begin{align*}
    &\|\nabla \Phi_\lambda (A_t)\|_F^2\\
    =&~ \tr\Big[\Big(\sum_{i=1}^m u_i u_i^\top \lambda \sinh( \lambda (\lambda z_{t,i}))\Big)^2\Big]\\
    = &~ \lambda^2\sum_{i,j=1}^m \langle u_i, u_j\rangle^2 \sinh( \lambda (\lambda z_{t,i}))\sinh( \lambda (\lambda z_{t,j}))\\
    = &~ \lambda^2\sum_{i,j=1}^m w_{i,j}^2 \sinh( \lambda (\lambda z_{t,i}))\sinh( \lambda (\lambda z_{t,j})).
\end{align*}
Hence, with $\{z_{t,i}\}_{i\in [m]}$, we can compute $\|\nabla \Phi_\lambda (A_t)\|_F$ in $O(m^2)$-time.

Another difference from the orthogonal measurement case is the update for $z_{t+1,i}$. Now, we have
\begin{align*}
        &z_{t+1,i}-z_{t,i}\\
    = &~ u_i^\top (A_{t+1}-A_t)u_i\\
    = &~  -\frac{\epsilon}{\| \nabla \Phi_{\lambda}(A_t) \|_F}\cdot u_i^\top \nabla \Phi_{\lambda}(A_t,{\cal B} _t)u_i\\
    = &~ -\frac{\epsilon \lambda m }{\| \nabla \Phi_{\lambda}(A_t) \|_F B} \sum_{j\in {\cal B}_t}u_i^\top u_j u_j^\top u_i \cdot  \sinh(\lambda z_{t,j})\\
    = &~ -\frac{\epsilon \lambda m }{\| \nabla \Phi_{\lambda}(A_t) \|_F B} \sum_{j\in {\cal B}_t}w_{i,j}^2 \cdot  \sinh(\lambda z_{t,j}).
\end{align*}
Hence, each $z_{t+1,i}$ can be computed in $O(B)$-time. And it takes $O(mB)$-time to update all $z_{t+1,i}$.

The other steps' time costs  are quite clear from Algorithm~\ref{alg:stochastic_gradient_descent_general}.
\end{proof}

\begin{lemma}[Progress on expected potential with general measurements]\label{lem:sgd_potential_general}
Assume that $|u_i^{\top}  u_j| \leq \rho$ and $\rho \leq \frac{1}{10m}$, for any $i,j \in [m]$ and $\| u_i\|^2 = 1$. Let $c \in (0,1)$ denote a sufficiently small positive constant. Then, for any $\epsilon,\lambda>0$ such that $\epsilon \lambda \leq c \frac{|{\cal B}_t|}{m}$, 
we have for any $t>0$,
\begin{align*}
    \E[\Phi_{\lambda} ( A_{t+1} )] \leq (1-0.9 \frac{ \lambda \epsilon }{\sqrt{m} }) \cdot \Phi_{\lambda} (A_t) +  \lambda \epsilon \sqrt{m}
\end{align*}

\end{lemma}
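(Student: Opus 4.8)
The strategy is to merge the two arguments already developed in the paper: the expectation bookkeeping from the proof of Lemma~\ref{lem:stochastic_gradient_descent} (orthogonal SGD) and the off-diagonal bounds from the proof of Lemma~\ref{lem:gradient_descent_rho} (general GD). I would start, exactly as in those proofs, by invoking Corollary~\ref{cor:matrix_der_trace} to Taylor-expand
\begin{align*}
\Phi_\lambda(A_{t+1}) - \Phi_\lambda(A_t) \leq \langle \nabla\Phi_\lambda(A_t), A_{t+1}-A_t\rangle + O(1)\cdot \langle \nabla^2\Phi_\lambda(A_t), (A_{t+1}-A_t)\otimes(A_{t+1}-A_t)\rangle =: -\Delta_1 + O(1)\cdot\Delta_2,
\end{align*}
with the stochastic update $A_{t+1} = A_t - \epsilon\cdot\nabla\Phi_\lambda(A_t,\mathcal{B}_t)/\|\nabla\Phi_\lambda(A_t)\|_F$. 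Taking $\E_{\mathcal{B}_t}$ and using the unbiasedness $\E_{\mathcal{B}_t}[\nabla\Phi_\lambda(A_t,\mathcal{B}_t)] = \nabla\Phi_\lambda(A_t)$, the first-order term reproduces $\E[\Delta_1] = \epsilon\cdot\|\nabla\Phi_\lambda(A_t)\|_F$ just as in Eq.~\eqref{eq:sgd_tr_phi_At_first_moment}.

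\textbf{The crux: the second-order term.} Writing $z_{t,i} := \lambda(u_i^\top A_t u_i - b_i)$ and substituting the stochastic gradient, $\Delta_2\cdot\|\nabla\Phi_\lambda(A_t)\|_F^2$ equals $(\epsilon\lambda)^2 (m/|\mathcal{B}_t|)^2$ times $\tr[\nabla^2\Phi_\lambda(A_t)\cdot(\sum_{i,j\in\mathcal{B}_t}\sinh(z_{t,i})\sinh(z_{t,j})(u_iu_i^\top\otimes u_ju_j^\top))]$. I would split this into the diagonal part $\wt{Q}_1$ ($i=j$) and the off-diagonal part $\wt{Q}_2$ ($i\neq j$). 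Since $\mathcal{B}_t$ is a uniform $B$-subset, a fixed index lies in $\mathcal{B}_t$ with probability $B/m$ and a fixed ordered pair with probability $B(B-1)/(m(m-1)) \leq B/m$; hence $\E[\wt{Q}_1] = \frac{B}{m}Q_1$ and $\E[\wt{Q}_2] \leq \frac{B}{m}Q_2$, where $Q_1,Q_2$ are exactly the quantities of Eq.~\eqref{eq:def_Q1_rho}–\eqref{eq:def_Q2_rho}. Multiplying back the $(m/B)^2$ prefactor leaves an overall factor of at most $m/B$ on each. Unlike the orthogonal case, $Q_2 \neq 0$ here, so I invoke Claim~\ref{cla:gd_Q1_rho} and Claim~\ref{cla:gd_Q2_rho} to get $Q_1 + Q_2 \leq O(1)\cdot(\sqrt{m} + \tfrac{1}{\lambda}\|\nabla\Phi_\lambda(A_t)\|_F)\cdot\|\nabla\Phi_\lambda(A_t)\|_F^2$, which yields
\begin{align*}
\E[\Delta_2] \leq O(1)\cdot(\epsilon\lambda)^2\cdot\frac{m}{|\mathcal{B}_t|}\cdot\Big(\sqrt{m} + \frac{1}{\lambda}\|\nabla\Phi_\lambda(A_t)\|_F\Big).
\end{align*}

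\textbf{Combining and closing.} Plugging into the Taylor bound,
\begin{align*}
\E[\Phi_\lambda(A_{t+1})] - \Phi_\lambda(A_t) \leq -\epsilon\Big(1 - O\big(\epsilon\lambda\tfrac{m}{|\mathcal{B}_t|}\big)\Big)\|\nabla\Phi_\lambda(A_t)\|_F + O\big((\epsilon\lambda)^2\tfrac{m}{|\mathcal{B}_t|}\big)\sqrt{m},
\end{align*}
and the hypothesis $\epsilon\lambda \leq c\,|\mathcal{B}_t|/m$ makes $\epsilon\lambda\,m/|\mathcal{B}_t| \leq c$ small, so the leading coefficient is at least $0.9$ and the additive term is at most a small constant times $\epsilon\lambda\sqrt{m}$. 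Finally I use the gradient lower bound from Eq.~\eqref{eq:phi_At_F_norm_rho}, namely $\|\nabla\Phi_\lambda(A_t)\|_F \geq \sqrt{0.9}\,\tfrac{\lambda}{\sqrt{m}}(\Phi_\lambda(A_t)-m)$, to convert $-0.9\epsilon\|\nabla\Phi_\lambda(A_t)\|_F$ into $-0.9\epsilon\lambda\tfrac{1}{\sqrt{m}}(\Phi_\lambda(A_t)-m)$ when $\Phi_\lambda(A_t) > m$; rearranging gives $\E[\Phi_\lambda(A_{t+1})] \leq (1-0.9\tfrac{\lambda\epsilon}{\sqrt m})\Phi_\lambda(A_t) + \lambda\epsilon\sqrt m$. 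The case $\Phi_\lambda(A_t)\leq m$ is handled exactly as in the proof of Lemma~\ref{lem:gradient_descent_rho}. I expect the main obstacle to be the careful bookkeeping of the random-batch expectation of the off-diagonal term $\wt{Q}_2$ — in particular verifying that the pair-inclusion probability keeps the blow-up to the single factor $m/|\mathcal{B}_t|$ that the step-size constraint is designed to cancel — together with keeping the $\lambda$ normalizations in Claims~\ref{cla:gd_Q1_rho} and \ref{cla:gd_Q2_rho} consistent.
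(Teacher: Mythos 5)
Your proposal is correct and matches exactly what the paper intends: the paper omits this proof, stating only that it is ``a direct generalization of Lemma~\ref{lem:stochastic_gradient_descent} and is very similar to Lemma~\ref{lem:gradient_descent_rho},'' and your plan carries out precisely that merge — expectation bookkeeping from the orthogonal SGD proof plus the $Q_1,Q_2$ bounds from Claims~\ref{cla:gd_Q1_rho} and~\ref{cla:gd_Q2_rho}, with the pair-inclusion probability $\tfrac{B(B-1)}{m(m-1)}\le \tfrac{B}{m}$ keeping the blow-up at $m/|\mathcal{B}_t|$. The only small point to tidy is that $\E[\wt Q_2]=\tfrac{B(B-1)}{m(m-1)}Q_2$ should be bounded via the magnitude estimate of Claim~\ref{cla:gd_Q2_rho} rather than by ``$\le \tfrac{B}{m}Q_2$'' directly (as $Q_2$ need not be nonnegative), but this does not affect the conclusion.
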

The proof is a direct generalization of Lemma~\ref{lem:stochastic_gradient_descent} and is very similar to Lemma~\ref{lem:gradient_descent_rho}. Thus, we omit it here.
\begin{algorithm*}[!ht]\caption{Matrix Sensing with Stochastic Gradient Descent (General Measurements).}\label{alg:stochastic_gradient_descent_general}
\begin{algorithmic}[1]
\Procedure{SGD\_General}{$\{u_i,b_i\}_{i\in [m]}$} \Comment{Lemma~\ref{lem:gd_cost_per_iter_general}}
    \State $\tau \gets \max_{i \in [m]} b_i $
    \State $A_1 \gets \tau \cdot I$
    \State $z_i\gets u_i^\top A_1 u_i - b_i$ for $i\in [m]$\Comment{$z\in \R^m$}
    \State $w_{i,j}\gets \langle u_i, u_j\rangle$ for $i,j\in [m]$ \Comment{$w\in \R^{m\times m}$}
    \For{$t = 1 \to T$}
        \State Sample ${\cal B}_t \subset [m]$ of size $B$ uniformly at random
        \State $\nabla \Phi_{\lambda} (A_t, {\cal B}_t) \gets \frac{m}{B} \sum_{i \in {\cal B}_t} u_i u_i^\top \lambda \sinh( \lambda z_{i} ) $\Comment{It takes $O(Bn^2)$-time}
        \State $\| \nabla \Phi_{\lambda} (A_t) \|_F\gets \lambda \left(\sum_{i,j=1}^m  w_{i,j}^2\sinh (\lambda z_{i})\sinh(\lambda z_j)\right)^{1/2}$\Comment{It takes $O(m^2)$-time}
        \State $A_{t+1} \gets A_t - \epsilon \cdot \nabla \Phi_{\lambda}(A_t,{\cal B} _t) / \| \nabla \Phi_{\lambda}(A_t) \|_F$\Comment{It takes $O(n^2)$-time}
        \For{$i\in [m]$}\Comment{Update $z$. It takes $O(mB)$-time}
            \For{$j\in {\cal B}_t$}
                \State $z_i\gets z_i - \epsilon  \lambda m w_{i,j}^2\sinh(\lambda z_{j}) / (\| \nabla \Phi_{\lambda}(A_t) \|_F B)$
            \EndFor
        \EndFor
    \EndFor
    \State \Return $A_{T+1}$
\EndProcedure
\end{algorithmic}
\end{algorithm*}





\end{document}